\renewcommand*\libertine@figurestyle{LF}
\renewcommand*\libertine@figurestyle{OsF}
\newcommand{\C}{\mathbb{C}}
\theoremstyle{plain}
\newtheorem{theorem}{Theorem}[section]
\newtheorem{proposition}[theorem]{Proposition}
\newtheorem{corollary}[theorem]{Corollary}
\newtheorem{lemma}[theorem]{Lemma}
\newtheorem{conjecture}[theorem]{Conjecture}
\theoremstyle{definition}
\newtheorem{definition}[theorem]{Definition}
\theoremstyle{remark}
\newtheorem{remark}[theorem]{Remark}
\newcommand{\cS}{\mathcal{S}}
\newcommand{\cP}{\mathcal{P}}
\DeclareMathOperator{\Aut}{Aut}
\newcommand{\res}{\mathop{\rm res}}
\newcommand{\coeff}[2]{\mathop{[{#1}^{#2}]}}
\newcommand{\restr}[2]{\mathop{\big\lfloor_{{#1}\to {#2}}}}
\newcommand{\set}[1]{\llbracket {#1} \rrbracket}
\title[Log topological recursion]{Log topological recursion through the prism of $x-y$ swap}
\author[A.~Alexandrov]{A.~Alexandrov}
\address{A.~A.: Center for Geometry and Physics, Institute for Basic Science (IBS), Pohang 37673, Korea
}
\email{alex@ibs.re.kr}
\author[B.~Bychkov]{B.~Bychkov}
\address{B.~B.: Department of Mathematics, University of Haifa, Mount Carmel, 3498838, Haifa, Israel}
\email{bbychkov@hse.ru}
\author[P.~Dunin-Barkowski]{P.~Dunin-Barkowski}
\address{P.~D.-B.: Faculty of Mathematics, HSE University, Usacheva 6, 119048 Moscow, Russia; HSE--Skoltech International Laboratory of Representation Theory and Mathematical Physics, Skoltech, Bolshoy Boulevard 30 bld. 1, 121205 Moscow, Russia; and NRC “Kurchatov Institute” -- ITEP, 117218 Moscow, Russia}
\email{ptdunin@hse.ru}
\author[M.~Kazarian]{M.~Kazarian}
\address{M.~K.: Faculty of Mathematics, HSE University, Usacheva 6, 119048 Moscow, Russia; and Igor Krichever Center for Advanced Studies, Skoltech, Bolshoy Boulevard 30 bld. 1, 121205 Moscow, Russia}
\email{kazarian@mccme.ru}
\author[S.~Shadrin]{S.~Shadrin}
\address{S.~S.: Korteweg-de Vries Institute for Mathematics, University of Amsterdam, Postbus 94248, 1090GE Amsterdam, The Netherlands}
\email{S.Shadrin@uva.nl}		
\begin{document}
	
\begin{abstract} We introduce a new concept of \emph{logarithmic topological recursion} that provides a patch to topological recursion in the presence of logarithmic singularities and prove that this new definition satisfies the universal $x-y$ swap relation. This result provides a vast generalization and a proof of a very recent conjecture of Hock. It also uniformly explains (and conceptually rectifies) an approach to the formulas for the $n$-point functions proposed by Hock.
\end{abstract}
	
\maketitle
	
\setcounter{tocdepth}{3}
\tableofcontents

\section{Introduction}

\subsection{Topological recursion} \label{sec:toprecintro}

Topological recursion~\cite{EO} associates to a small collection of input data that consists of two meromorphic functions $x$ and $y$ defined on a compact Riemann surface $\Sigma$ (the triple ($\Sigma,x,y$) is traditionally called the spectral curve) a system
\begin{align}
	\{\omega^{(g)}_n\}_{\substack{g\geq 0, n\geq 1\\ 2g-2+n>0}}
\end{align} of symmetric meromorphic $n$-differentials $\omega^{(g)}_n$ on $\Sigma^n$, $n\geq 1$, via an explicitly given recursive procedure. It is one of the principal tools on the edge between mathematical physics, algebraic geometry, enumerative combinatorics, and integrable systems, moreover, it simultaneously generalizes and to a large extent replaces a variety of techniques such as
\begin{itemize}
	\item matrix models techniques in mathematical physics, see e.g.~\cite{CEO,EO-1st,ZhouKW};
	\item localization formulas in algebraic geometry, see e.g.~\cite{EO-3folds,DKOSS-ELSV,DLPS,FLZ,DKPS-qr-ELSV};
	\item cut-and-join type operators in enumerative combinatorics, see e.g.~\cite{MulaseCaJ,DMSS,DoCaJ,DOPS,DKPS-cut-and-join};
	\item Dubrovin--Frobenius manifolds in integrable systems~\cite{DNOPS-2,DNOPS-1}.
\end{itemize}
See further applications in~\cite{Eynard-Lecture,DM-lectures,Borot-Lecture}.

There is a variety of generalizations of this setup, for instance, the spectral curve does not have to be compact. One can consider the so-called local topological recursion~\cite{DOSS} or blobbed topological recursion~\cite{BS-blobbed}, and all these possibilities extend the range of applications and interconnections of topological recursion.

A huge number of interesting applications demands that $x$ and $y$ while still defined on a compact spectral curves must be allowed to have logarithmic singularities. By this we mean that $dx$ and $dy$ are still globally defined meromorphic forms, but they might have non-vanishing residues at some points. Examples here include colored HOMFLY-PT polynomials of torus knots~\cite{BEM-HOMFLY,HOMFLY2,HOMFLY1}, various kinds of Hurwitz numbers~\cite{BM,Orbifold,SSZ,ACEH,bychkov2021topological}, topological string amplitudes based on topological vertex~\cite{BM,BKMP,ZhouTV,EO-3folds,Chen}, and the GW theory on ${\mathbb P}^1$~\cite{NSP1,DOSS}, just to name a few.

\subsection{\texorpdfstring{$x-y$}{x-y} swap relation}

There is a universal differential-algebraic relation that captures the exchange of $x$ and $y$ in the topological recursion procedure, and in fact it can itself replace the topological recursion procedure. It was conjectured in~\cite{borot2023functional}, proved in genus zero, under an extra assumption, in~\cite{hock2022xy}, reformulated in a much nicer way in~\cite{hock2022simple}, and finally proved in full generality in~\cite{alexandrov2022universal}.  From the point of view of $x-y$ relation, the natural meromorphic global objects are the differentials $dx$ and $dy$ rather than the functions $x$ and $y$. However, the statement that the $x-y$ relation holds~\cite[Theorem 1.14]{alexandrov2022universal} explicitly demands existence of meromorphic $x$ and $y$, thus all residues of the differentials $dx$ and $dy$ must be equal to zero.

In recent papers~\cite{hock2023laplace, hock2023xy}, Hock experimented with this formula in the presence of  logarithmic singularities of $x$ and $y$ (which then implies that $dx$ and $dy$ are allowed to have non-zero residues at the points of logarithmic singularities). In a number of special situations he conjectured necessary corrections that allow to still use the $x-y$ swap relation, and, in particular, allow to produce explicit closed formulas in the presence of logarithmic singularities.

The goal of this paper is to fully revisit the approach of Hock and put in a proper general framework his enlightening computations and conjectures. To this end we define the so-called \emph{logarithmic topological recursion} (LogTR) that associates to an input that consists of two functions $x$ and $y$ defined on a compact Riemann surface $\Sigma$, with at most logarithmic singularities (and otherwise meromorphic), a system
\begin{align}
\{{}^{\log{}}\omega^{(g)}_n\}_{\substack{g\geq 0, n\geq 1\\ 2g-2+n>0}}	
\end{align}
 of symmetric meromorphic $n$-differentials ${}^{\log{}}\omega^{(g)}_n$ on $\Sigma^n$, $n\geq 1$. Let
 \begin{align}
 	\{{}^{\log{}}\omega^{\vee,(g)}_n\}_{\substack{g\geq 0, n\geq 1\\ 2g-2+n>0}}	
 \end{align} be a system of differential associated to the same input with $x$ and $y$ interchanged.
We prove that systems of differentials $\{{}^{\log{}}\omega^{(g)}_n\}$ and $\{{}^{\log{}}\omega^{\vee,(g)}_n\}$ are still connected by the universal $x-y$ swap relation.

The present paper also prompts a comment on a recent paper by Bouchard--Kramer--Weller \cite{BKW23}, which is postponed to a subsequent publication~\cite{ABDKS5}.

Regarding the examples we have referred to at the end of Section~\ref{sec:toprecintro}, for the respective spectral curves the result of applying LogTR coincides with the result of the original TR, but the $x-y$ dual differentials are produced by the LogTR on the $x-y$ dual curve, not by the original TR (that is, on the $x-y$ dual spectral curve the result of applying LogTR differs from the original TR). In Section 3 we discuss in detail many cases where our new definition of LogTR is useful (via the $x-y$ relation).

\subsection{Blobbed topological recursion}

Blobbed topological recursion~\cite{BS-blobbed} extends the notion of usual topological recursion as a general solution of the so-called abstract loop equations~\cite{BEO-abstract-loop}. The logarithmic topological recursion is a very special instance of blobbed topological recursion with explicitly described extra singularities allowed at the points of logarithmic singularities of $y$.

There is a general explicit formula that connects blobbed topological recursion to the standard topological recursion~\cite[Equation~2.7]{BS-blobbed}. This general explicit formula can be specialized to our case and it gives an explicit expression $\{{}^{\log{}}\omega^{(g)}_n\} \longleftrightarrow \{\omega^{(g)}_n\}$ and the same on the dual side. Thus we can assemble the full system of universal relations  studied in this paper into the following diagram:
\begin{align}
	\{\omega^{(g)}_n\} \longleftrightarrow \{{}^{\log{}}\omega^{(g)}_n\}  \longleftrightarrow \{{}^{\log{}}\omega^{\vee,(g)}_n\} \longleftrightarrow \{\omega^{\vee,(g)}_n\},
\end{align}
where the first and the last arrows are re-computations with blobs adapted to our case and the middle arrow is the universal $x-y$ swap relation.

\subsection{Organization of the paper} In Section~\ref{sec:LogTR} we introduce the logarithmic topological recursion and state its compatibility with the $x-y$ swap relation. These are respectively the main definition and the main theorem of the paper.  As in the case of the original topological recursion, the $x-y$ swap relation is an especially efficient tool when on the dual side we have a trivial logarithmic topological recursion, and we survey possible applications of this technique in Section~\ref{sec:trivial-dual} (in particular, in this section we prove a refinement of the recent conjecture of Hock). We also mention independently arising cases when LogTR holds and both sides of the $x-y$ duality are nontrivial (in particular, the new Family~III of generalized double Hurwitz numbers). In Section~\ref{sec:ordinary-vs-log} we discuss the recursive formulas for the correlation differentials of logarithmic topological recursion and present a recomputation procedure that connects logarithmic topological recursion and original topological recursion (the latter one is a specialization of the formulas known for more general blobbed topological recursion). Finally, in Section~\ref{sec:proofs} we give a proof of the main theorem.

\subsection{Notation} Throughout the text we use the following notation:
\begin{itemize}
	\item $\set{n}$ denotes $\{1,\dots,n\}$.
	\item $z_I$ denotes $\{z_i\}_{i\in I}$ for $I\subseteq \set{n}$.
	\item $[u^d]$ denotes the operator that extracts the corresponding coefficient from the whole expression to the right of it, that is, $[u^d]\sum_{i=-\infty}^\infty a_iu^i \coloneqq a_d$.
	\item
	$\restr{u}{v}$ denotes the operator of substitution (or restriction), that is, $\restr{u}{v} f(u) \coloneqq f(v)$.
	\item $\cS(u)$ denotes $u^{-1}(e^{u/2} - e^{-u/2})$.
\end{itemize}

\subsection{Acknowledgments}
We are grateful to Alexander Hock and the anonymous referees for useful remarks.

 A.A. was supported by the Institute for Basic Science (IBS-R003-D1). B.B. was supported by the ISF Grant 876/20. P.D.-B. and M.K. were supported by the International Laboratory of Cluster Geometry NRU HSE, RF Government grant, ag. № 075-15-2021-608 dated 08.06.2021. S.S. was supported by the Netherlands Organization for Scientific Research. B.B., P.D.-B., M.K., and S.S. are grateful for the hospitality to IBS CGP, Pohang, where part of this research was carried out.

\section{Main definition and main theorem} \label{sec:LogTR}

\subsection{Original topological recursion}


We follow the exposition in~\cite[Section 5]{alexandrov2022universal}. Let $x$ and $y$ be meromorphic functions on $\Sigma$, and $\{\omega^{(g)}_{m}\}$, $g\geq 0$, $m\geq 1$, $2g-2+m>0$ be a given system of symmetric meromorphic differentials. We also define $\omega^{(0)}_1 \coloneqq -ydx$ and $\omega^{(0)}_2\coloneqq B$, where $B$ is the so-called Bergman kernel, that is, the unique meromorphic bi-differential with a double pole along the diagonal with bi-residue $1$ and no further singularities, whose $\mathfrak{A}$-periods are all equal to $0$.

Assume that all zeros of $dx$ are simple and $y$ is regular at the zero locus of $dx$ and the zero loci of $dx$ and $dy$ are disjoint. Denote the zeros of $dx$ by $p_1,\dots,p_N\in S$ and let $\sigma_i$ be the deck transformations of $x$ near $p_i$.

\begin{definition} We say that the system of symmetric meromorphic differentials $\{\omega^{g}_m\}$, $g\geq 0$, $m\geq 1$ satisfies
\begin{itemize}
		\item the \emph{linear loop equations}, if for any $g,m\geq 0$, $i=1,\dots,N$
		\begin{equation} \label{eq:LLE-original}
			\omega^{(g)}_{m+1}(z_{\llbracket m \rrbracket},z) + \omega^{(g)}_{m+1}(z_{\llbracket m \rrbracket},\sigma_i(z))
		\end{equation}
		is holomorphic at $z\to p_i$ and has at least a simple zero in $z$ at $z=p_i$;
		\item the \emph{quadratic loop equations}, if for any $g,m\geq 0$, $i=1,\dots,N$, the quadratic differential in $z$
		\begin{equation} \label{eq:QLE-original}
			\omega^{(g-1)}_{m+2}(z_{\llbracket m \rrbracket},z,\sigma_i(z)) + \sum_{\substack{g_1+g_2=g\\ I_1\sqcup I_2 = \llbracket m \rrbracket
			}} \omega^{(g_1)}_{|I_1|+1} (z_{I_1},z)\omega^{(g_2)}_{|I_2|+1} (z_{I_2},\sigma_i(z))
		\end{equation}
		is holomorphic at $z\to p_i$ and has at least a double zero at $z=p_i$.
\end{itemize}
\end{definition}

\begin{definition} We say that $\{\omega^{(g)}_{m}\}$ satisfies the \emph{blobbed topological recursion}, if it satisfies the linear and quadratic loop equations for each $g,m\geq 0$.
\end{definition}

Under the assumption that all zeros $\{p_1,\dots,p_N\}$ of $dx$ are simple, the statement that $\{\omega^{(g)}_{m}\}$ satisfies the blobbed topological recursion implies that the principal parts near the zero locus of $dx$ of each $\omega^{(g)}_{m}$, $2g-2+m>0$, in each variable are uniquely determined by $\omega^{(g')}_{m'}$ with $2g'-2+m' < 2g-2+m$.

\begin{definition} We say that $\{\omega^{(g)}_{m}\}$ satisfy the original \emph{topological recursion} (TR) on the spectral curve $(\Sigma,x,y)$, if in addition to the blobbed topological recursion they satisfy the so-called \emph{projection property}: for any $g\geq 0$, $m\geq 1$, $2g-2+m>0$
	\begin{equation}
		\omega^{(g)}_{m}(z_{\set{m}}) = \sum_{i_1,\dots,i_m=1}^N \Bigg(\prod_{j=1}^m \res_{z_j'\to p_{i_j}} \int^{z_j'}_{p_{i_j}} B(\cdot,z_j)\Bigg) \omega^{(g)}_{m}(z'_{\set{m}}).
	\end{equation}
\end{definition}

In other words, we say that in the case of original topological recursion the meromorphic differentials $\omega^{(g)}_{m}$, $2g-2+n>0$, have no other poles than at $\{p_1,\dots,p_N\}$, and all their $\mathfrak{A}$-periods are equal to zero.

\subsection{Logarithmic topological recursion} Topological recursion relations depend on the differentials $dx$, $dy$ only, but not the functions $x$ and $y$ themselves, and they are well defined if $dx$ and $dy$ are global meromorphic differentials with possibly nonzero residues. In the latter case the functions $x$ and $y$ are not univalued and posses logarithmic singularities. In the presence of this sort of singularities it is useful to consider, along with the original topological recursion, its variation that we call logarithmic topological recursion.

Assume $dx$ and $dy$ are globally defined meromorphic $1$-forms on a compact Riemann surface $\Sigma$ with possibly non-vanishing residues at some points. We still assume that all zeros $\{p_1,\dots,p_N\}$ of $dx$ are simple, $dy$ is regular at the zero locus of $dx$ and the zero loci of $dx$ and $dy$ are disjoint.

\begin{definition}
We say that the primitive $y$ of the differential $dy$ posses \emph{logarithmic singularity} at some point~$a$ on~$\Sigma$ if $dy$ has a pole at~$a$ with nonzero residue. A logarithmic singularity of~$y$ is called \emph{LogTR-vital} if this pole of $dy$ is simple and $dx$ has no pole at this point.
\end{definition}

Let $a_1,\dots,a_M$ be the LogTR-vital singular points of $y$. We denote the residues of~$dy$ at these points by $\alpha_1^{-1},\dots,\alpha_M^{-1}$, respectively. That is, the principal part of $dy$ near $a_i$ is given by $\alpha_i^{-1} dz / (z-a_i)$ in any local coordinate~$z$.

Consider the germs of the principal parts of meromorphic $1$-forms in the neighborhoods of $a_i$, $i=1,\dots,M$ defined as the principal parts of coefficients of positive powers of $\hbar$ in the following expressions:
\begin{align}\label{eq:PrincipalPartsLog}
\left( \frac{1}{\alpha_i\cS(\alpha_i\hbar \partial_{x})}\log(z-a_i)\right)dx.
\end{align}

\begin{definition}\label{def:log-proj} We say that $\{{}^{\log{}}\omega^{(g)}_{n}\}$ satisfies the \emph{logarithmic topological recursion} (LogTR) on the spectral curve $(\Sigma,x,y)$, if ${}^{\log{}}\omega^{(0)}_1 = -ydx$, ${}^{\log{}}\omega^{(0)}_2 = B$, $\{{}^{\log{}}\omega^{(g)}_{n}\}$ satisfy the blobbed topological recursion and additionally they satisfy the so-called \emph{logarithmic projection property}: for any $g\geq 0$, $m\geq 1$, $2g-2+m>0$
	\begin{align} \label{eq:LogProjection}
		{}^{\log{}}\omega^{(g)}_{m}(z_{\set{m}}) & = \sum_{i_1,\dots,i_m=1}^N \Bigg(\prod_{j=1}^m \res_{z_j'\to p_{i_j}} \int^{z_j'}_{p_{i_j}} B(\cdot,z_j)\Bigg) {}^{\log{}}\omega^{(g)}_{m}(z'_{\set{m}})
		\\ \notag & \quad
		- \delta_{m,1}[\hbar^{2g}] \sum_{i=1}^M  \res_{z'\to a_i} \Bigg(\int^{z'}_{a_i} B(\cdot,z_1)\Bigg)
\left(\frac{1}{\alpha_i\cS(\alpha_i\hbar \partial_{x'})}\log(z'-a_i)\right)dx'.
	\end{align}
\end{definition}

In other words, we say that in the case of logarithmic topological recursion the meromorphic differentials ${}^{\log{}}\omega^{(g)}_{m}$, $2g-2+m>0$, $m\geq 2$, have no other poles than at $\{p_1,\dots,p_N\}$  and all their $\mathfrak{A}$-periods are equal to zero.
For $g\geq 1$ and $m=1$ we have a bit different setup: they
have poles at $\{p_1,\dots,p_N\}$, and also at the points $\{a_1,\dots,a_M\}$. The principal parts at the latter points are equal to the principal parts of expressions given by~\eqref{eq:PrincipalPartsLog}, and all their $\mathfrak{A}$-periods are equal to zero.

\begin{remark} Note that if $a$ is a simple pole of~$dy$ and $dx$ has also a pole at this point, that is, singularity of $y$ is not LogTR-vital, then 
the principal part of~$[\hbar^{2g}]\bigl( \frac{1}{\alpha\cS(\alpha\hbar \partial_{x})}\log(z-a)\bigr)dx$ with $g>0$ is equal to zero,
so we could formally include these points to the list of LogTR-vital singularities of~$y$ without changing relation of recursion. But it is important to underline that such singular points do not contribute to the recursion and do not affect Equation~\eqref{eq:LogProjection}.
\end{remark}

\begin{remark}
The poles of order greater than $1$ are excluded from the logarithmic projection property by similar local considerations. Indeed, assume that the meromorphic form~$dy$ varies in a family such that two simple poles collapse together producing a pole of order~$2$. Then, it is easy to see by local computations that the residues at these two poles tend to infinity when the points approach one another. 

The model situation for the local computation is as follows. Note that 
\begin{align}
\frac{(\alpha^{-1}(\epsilon)+c)\,dz}{z-\epsilon}-\frac{\alpha^{-1}(\epsilon)\,dz}{z+\epsilon} \to \frac{dz}{z^2} +\frac{c\, dz}{z}\qquad \text{for} \qquad \epsilon\to 0
\end{align}
means that $2\alpha^{-1}(\epsilon)\epsilon\to 1$ for $\epsilon\to 0$. Hence $\alpha(\epsilon)\to 0$ for $\epsilon\to 0$, and, furthermore,
\begin{align}
	& d\,\lim_{\epsilon\to 0} \bigg(\frac{1}{\frac{\alpha(\epsilon)}{1+c\,\alpha(\epsilon)}\cS(\frac{\alpha(\epsilon)}{1+c\,\alpha(\epsilon)}\hbar \partial_{z})}\log(z-\epsilon) + \frac{1}{-\alpha(\epsilon)\cS(-\alpha(\epsilon)\hbar \partial_{z})}\log(z+\epsilon) \bigg) 
	\\ \notag & = \lim_{\epsilon\to 0} \bigg( \frac{(\alpha^{-1}(\epsilon)+c)\,dz}{z-\epsilon}-\frac{\alpha^{-1}(\epsilon)\,dz}{z+\epsilon} \bigg) = \frac{dz}{z^2} +\frac{c\, dz}{z},
\end{align}
so the pole of order $2$ does not contribute. 

It follows that the limiting contribution of these two points to~\eqref{eq:PrincipalPartsLog} is trivial, independently of whether the limiting pole of order two has a residue or not. The same argument also works for the higher order poles.
\end{remark}

\subsection{\texorpdfstring{$x-y$}{x-y} swap relation for LogTR} One of the ways to motivate the definition of logarithmic topological recursion is to see that this extension survives the general $x-y$ swap relation that was established in the absence of logarithmic singularities.

Assume we have completely symmetric assumptions for $x$ and $y$: both have meromorphic differentials, the zeros of $dx$ and $dy$ are simple and disjoint, $dy$ is regular at the zero locus of $dx$ and vice versa.

Then we have two instances of logarithmic topological recursion on $\Sigma$: for the input  given by the pair $(x,y)$ and for the input given by $(x^\vee, y^\vee)\coloneqq (y,x)$. Denote the meromorphic differentials produced by the latter input by $\{{}^{\log{}}\omega^{\vee,(g)}_n\}$.

In the following theorem we use graphs with multiedges (we follow~\cite{alexandrov2022universal}). A graph with multiedges is given by its set of vertices $V$, its set of multiedges $E$, the set of red flags $R$, and the set of blue flags $L$, equipped with maps $r\colon R\to V$, $l\colon L\to E$, and an isomorphism $\iota \colon L\to R$. 
The index of a vertex $v\in V$ is $|r^{-1}(v)|$, the index of a multiedge $e\in E$ is $|e|:=|l^{-1}(e)|$. We call elements of $l^{-1}(e)$ the \emph{legs} of a multiedge $e$, and we say that a particular leg $h$ of $e$ is attached to a vertex $v$ if $v=r(\iota(h))$. 

\begin{theorem} \label{thm:LogXYSwap} We have:
		\begin{align} \label{eq:MainFormulaSimple}
		& \frac{{}^{\log{}}\omega_{n}^{\vee,(g)} (z_{\llbracket n\rrbracket})}{\prod_{i=1}^n dx_i^\vee} 
		=
		(-1)^n
		\coeff \hbar {2g} \sum_{\Gamma} \frac{\hbar^{2g(\Gamma)}}{|\mathrm{Aut}(\Gamma)|} \prod_{i=1}^n
		\sum_{k_i=0}^\infty  \partial_{y_i}^{k_i} [u_i^{k_i}] \frac{dx_i}{dy_i}
		\\ \notag &
		\frac{1}{u_i} e^{ u_i \cS(\hbar u_i \partial_{x_i}) \sum_{\tilde g=0}^\infty \hbar^{2\tilde g} \frac{{}^{\log{}}\omega^{(\tilde g)}_{1} (z_i)}{dx_i}-u_i \frac{{}^{\log{}}\omega^{(0)}_{1} (z_i)}{dx_i}}
		\\ \notag &
		\prod_{e\in E(\Gamma)} \prod_{j=1}^{|e|\geq 2}\restr{(\tilde u_j, \tilde x_j)}{ (u_{e(j)},x_{e(j)})} \tilde u_j \cS(\hbar \tilde u_j  \partial_{\tilde x_j}) \sum_{\tilde g=0}^\infty \hbar^{2\tilde g}\, \frac{{}^{\log{}}\tilde\omega^{(\tilde g)}_{|e|}(\tilde z_{\llbracket |e|\rrbracket})}{\prod_{j=1}^{|e|} d\tilde x_j}
		\\ \notag &
		+\delta_{(g,n),(0,1)} (-x_1).
	\end{align}
Here
\begin{itemize}
	
	\item The sum is taken over all connected graphs $\Gamma$ with $n$ labeled vertices (with labels from $1$ to $n$) and multiedges of index $\geq 2$. 
	\item For convenience, for a given such graph, we also label all legs of every given multiedge $e$ from $1$ to $|e|$ in an arbitrary way.
	\item For a multiedge $e$ with index $|e|$  we control its attachment to the vertices by the associated map $e\colon \llbracket |e| \rrbracket \to 
	\llbracket n \rrbracket
	$ that we denote also by $e$, abusing notation (so $e(j)$ is the label of the vertex to which the $j$-th leg of the multiedge $e$ is attached). Do note that this map can be an arbitrary map from $\llbracket |e| \rrbracket$ to $\llbracket n \rrbracket$; in particular, it might not be injective, i.e. we allow a given multiedge to connect to a given vertex with several of its legs. 
	\item For a given multiedge $e$ with $|e|=2$ we define ${}^{\log{}}\tilde \omega^{(0)}_{2}(\tilde x_1,\tilde x_2) \coloneqq  {}^{\log{}}\omega^{(0)}_{2}(\tilde x_1,\tilde x_2) - \frac{d\tilde x_1d\tilde x_2}{(\tilde x_1-\tilde x_2)^2}$ if $e(1)=e(2)$, and ${}^{\log{}}\tilde \omega^{(0)}_{2}(\tilde x_1,\tilde x_2) =  {}^{\log{}}\omega^{(0)}_{2}(\tilde x_1,\tilde x_2)$ otherwise. For all $(g,n)\not=(0,2)$ we simply have $\tilde \omega^{(g)}_{n,0} :=  \omega^{(g)}_{n,0}$.
	\item By $g(\Gamma)$ we denote the first Betti number of $\Gamma$.
	\item $|\mathrm{Aut}(\Gamma)|$ stands for the number of automorphisms of $\Gamma$.
\end{itemize}
\end{theorem}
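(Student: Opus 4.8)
The plan is to reduce Theorem~\ref{thm:LogXYSwap} to the already-established $x-y$ swap relation for ordinary topological recursion, namely~\cite[Theorem~1.14]{alexandrov2022universal}, by carefully accounting for the logarithmic corrections. The universal swap formula~\eqref{eq:MainFormulaSimple} is purely differential-algebraic: it is an identity expressing the dual differentials in terms of the original ones through a closed combinatorial sum over graphs with multiedges, and it was proven in the meromorphic setting where $dx$, $dy$ have vanishing residues. The key observation is that the LogTR differentials $\{{}^{\log{}}\omega^{(g)}_n\}$ differ from the ordinary TR differentials $\{\omega^{(g)}_n\}$ only through the explicitly prescribed extra principal parts~\eqref{eq:PrincipalPartsLog} at the LogTR-vital points $a_1,\dots,a_M$, and this difference enters only in the one-point differentials ${}^{\log{}}\omega^{(g)}_1$ via the logarithmic projection property~\eqref{eq:LogProjection}. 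My strategy is therefore to track how these added principal parts propagate through the swap formula and to show that the result on the dual side is precisely the LogTR system $\{{}^{\log{}}\omega^{\vee,(g)}_n\}$ rather than the ordinary dual TR system.

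The first step I would take is to establish the statement at the level of formal germs near each relevant singular point, treating the two essentially distinct types of contributions separately: the zeros $p_1,\dots,p_N$ of $dx$, where the blobbed/TR structure lives and where the meromorphic swap argument of~\cite{alexandrov2022universal} applies verbatim, and the LogTR-vital points $a_1,\dots,a_M$, where the logarithmic singularities reside. Near the former, the right-hand side of~\eqref{eq:MainFormulaSimple} reproduces the principal parts of ${}^{\log{}}\omega^{\vee,(g)}_n$ dictated by the blobbed topological recursion exactly as in the non-logarithmic proof, since the loop equations are local conditions at the $p_i$ and are insensitive to the residues of $dy$ elsewhere. Near the latter, I would compute the contribution of the logarithmic germs directly. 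The crucial input is the model expression $\bigl(\tfrac{1}{\alpha_i\cS(\alpha_i\hbar\partial_x)}\log(z-a_i)\bigr)dx$: the operator $\cS(\alpha_i\hbar\partial_x)^{-1}$ is precisely the building block that appears, via $\cS(\hbar u_i\partial_{x_i})$ and the exponential of $\sum_{\tilde g}\hbar^{2\tilde g}\,{}^{\log{}}\omega^{(\tilde g)}_1/dx_i$, inside the per-vertex factor of~\eqref{eq:MainFormulaSimple}. I expect that substituting the logarithmic principal part of ${}^{\log{}}\omega^{(g)}_1$ into the vertex operator and extracting $[\hbar^{2g}]$ yields, after using the $\cS$-function identities, exactly the dual logarithmic principal part prescribed by~\eqref{eq:LogProjection} on the $(x^\vee,y^\vee)$ side.

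Concretely, I would organize the proof by first proving the base/low-complexity cases $(g,n)=(0,1)$ and $(g,n)=(0,2)$ by hand — these fix the normalization of $\omega^{(0)}_1=-ydx$ (whence the additive $-x_1$ term and the $\delta_{(g,n),(0,1)}$ correction) and of the regularized Bergman kernel $\tilde\omega^{(0)}_2$ — and then proceed by the same induction on $2g-2+n$ that drives the non-logarithmic argument. The inductive step splits into verifying two properties of the right-hand side of~\eqref{eq:MainFormulaSimple}: that it has the correct principal parts in each variable (at the $p_i$ governed by the loop equations, and, for $n=1$, $g\ge1$, at the dual vital points governed by the dual analogue of~\eqref{eq:PrincipalPartsLog}), and that it has vanishing $\mathfrak{A}$-periods. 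The period condition and the behavior at the $p_i$ are inherited from~\cite{alexandrov2022universal} because the graph sum commutes with the operations of taking principal parts and of integrating over $\mathfrak{A}$-cycles away from the logarithmic points; the genuinely new verification is the matching of logarithmic principal parts at the vital points, and this is where the explicit $\cS$-function computation of the preceding paragraph is decisive.

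The hard part, I anticipate, is precisely this matching of logarithmic singularities under the swap, for two reasons. First, the vital points of $y$ are generically \emph{not} vital points of $x^\vee=y$ in the same way — the roles of zeros and poles of $dx$ and $dy$ are exchanged — so one must check that the logarithmic principal part generated on the dual side by the graph sum agrees with the one \emph{postulated} in the definition of ${}^{\log{}}\omega^{\vee,(g)}_1$, and this is a nontrivial compatibility between the two independently-given projection properties rather than an automatic consequence of locality. Second, the operator $\tfrac{1}{\cS}$ and the exponential vertex factor do not obviously commute with the residue extraction at $a_i$, so one must argue carefully that the formal-$\hbar$ manipulations, the substitution of the $\log$ germ, and the $[\hbar^{2g}]$ coefficient extraction can be interchanged. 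I would handle this by working entirely with the formal power-series-in-$\hbar$ germs of all objects near each singular point, where all operations are well-defined termwise, and only at the end reassembling the global meromorphic differentials using that a meromorphic $1$-form is determined by its principal parts together with its $\mathfrak{A}$-periods — exactly the characterization already used to justify Definition~\ref{def:log-proj}.
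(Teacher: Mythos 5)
Your overall skeleton (define the dual differentials by \eqref{eq:MainFormulaSimple}, import the loop-equation analysis at the zeros of $dx$ and $dx^\vee$ from \cite{alexandrov2022universal}, and then do a local analysis at the logarithmic points) matches the paper's, but your central new computation is misassigned, and this is a genuine gap. At the LogTR-vital points $a_1,\dots,a_M$ of $y$ --- where you propose to ``substitute the logarithmic principal part of ${}^{\log{}}\omega^{(g)}_1$ into the vertex operator'' and extract the dual logarithmic principal part prescribed by \eqref{eq:LogProjection} --- nothing of the sort happens: these points are simple poles of $dx^\vee=dy$, not zeros of $dx^\vee$ and not vital points of $y^\vee=x$, so the dual LogTR demands that ${}^{\log{}}\omega^{\vee,(g)}_n$ be \emph{regular} there. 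What must be proved at the $a_i$ is a complete cancellation (Lemma~\ref{lem:nopoles} of the paper), and this cancellation is not a soft ``$\cS$-function identity'': it rests on the quantitative divisibility statement of \cite[Lemmas 4.1 and 4.2]{bychkov2021topological}, namely that in the expansion of $z\,e^{-u(\cS(u\hbar\partial_z)/\cS(\hbar\partial_z)-1)\log z}$ the coefficient of $z^{-k}$, $k\ge 0$, is divisible by $u(u+1)\cdots(u+k)$, which is exactly what makes the operator $\sum_{r\ge0}\partial_y^r[u^r]$ annihilate all non-positive powers of $z$. This divisibility is also the only place where the precise shape $\frac{1}{\alpha_i\cS(\alpha_i\hbar\partial_x)}\log(z-a_i)$ of the correction \eqref{eq:PrincipalPartsLog} is used; with a generic $\hbar$-deformation the poles would survive, so your plan cannot close without an argument at this level of precision.

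Conversely, the prescribed dual principal parts do not come from the inserted germs at all: they emerge at a \emph{different} set of points, the simple poles of $dx$ (the vital points of $y^\vee$), and by a different mechanism --- the simple pole of the factor $dx/dy$ surviving only against the exponentiated self-$(0,2)$-loop contributions, whose sum evaluates (Lemma~\ref{lem:logemerg}, Eq.~\eqref{eq:omega1dxsingcomp}) to $\hbar u\sqrt{dw\,d\bar w}\,\bigl((w-\bar w)\,dx\bigr)^{-1}$ with $w=e^{\frac{u\hbar}{2}\partial_x}z$, $\bar w=e^{-\frac{u\hbar}{2}\partial_x}z$, producing precisely the principal part $-\frac{1}{q\cS(q\hbar\partial_y)}\log z$ for $n=1$, $g\ge1$, as in \eqref{eq:omegavee1pole}. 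You half-notice this asymmetry in your ``hard part'' paragraph, but your computational plan never engages with the self-loop evaluation, nor with the remaining case analysis the paper needs in order to exclude unwanted poles: holomorphy at higher-order poles of $dy$ and of $dx$ (Lemmas~\ref{lem:hodypoles} and~\ref{lem:hodxpoles}, via the power counting \eqref{eq:ypoles}) and regularity at simple poles of $dx$ that are simultaneously poles of $dy$ (non-vital points, where the factor $\partial_y$, vanishing there, kills the logarithm). Without (i) the divisibility-based cancellation at the $a_i$, (ii) the self-loop computation at the simple poles of $dx$, and (iii) the higher-order-pole power counting, the induction-plus-principal-parts framework you outline has nothing concrete to verify the logarithmic projection property against.
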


\begin{remark} The statement of this theorem repeats literally the statement of~\cite[Theorem 1.14]{alexandrov2022universal}, which was done for the original topological recursion (that is, under an extra assumption that $x$ and $y$ are meromorphic rather than $dx$ and $dy$).
\end{remark}

\begin{remark} \label{rem:analyticbeh} The analytic behavior of the stable differentials participating in Theorem~\ref{thm:LogXYSwap} is as follows. We have an original system of differential $\{{}^{\log{}}\omega^{(g)}_n\}$ satisfying LogTR. It means that these differentials have poles at zeros of $dx$, and, if LogTR-vital singularities of~$y$ are present, at these points as well (for $n=1$ only). The statement of Theorem~\ref{thm:LogXYSwap} implies that all these poles cancel on the right hand side of~\eqref{eq:MainFormulaSimple}. Namely, the cancellation of these poles at zeros of~$dx$ is implied by the ordinary $x-y$ swap statement. Theorem claims that the poles at the poles of~$dy$ also cancel if the differentials satisfy the relation of topological recursion with additional account of contribution of logarithmic singularities as in~\eqref{eq:PrincipalPartsLog} and~\eqref{eq:LogProjection}. Instead, the differentials on the left hand side attain singularities at the zeros of~$dy$ due to presence of operators $\partial_{y_i}$, and also in certain cases at the poles of $dy^\vee=dx$, namely, only if~$n=1$ and if these poles of~$dy^\vee$ are LogTR-vital.
\end{remark}
	
\begin{remark}
This theorem by itself justifies the definition of logarithmic topological recursion. Note that in the ordinary case one re-defines~\cite{alexandrov2022universal} original topological recursion as a procedure that gives a system of symmetric meromorphic differentials with prescribed type of singularities depending on $dx$ that under the transformation~\eqref{eq:MainFormulaSimple} turns into the system of differentials with prescribed type of singularities depending on $dx^\vee\coloneqq dy$.

In the logarithmic case one can re-define logarithmic topological recursion  as a procedure that gives a system of symmetric meromorphic differentials with prescribed type of singularities depending on $dx$ and $dy$ that under the transformation~\eqref{eq:MainFormulaSimple} turns into the system of differentials with prescribed type of singularities depending on $dx^\vee\coloneqq dy$ and $dy^\vee\coloneqq dx$.
\end{remark}

\begin{remark}
If either $x$ or $y$ is meromorphic, then either $\{{}^{\log{}}\omega^{\vee,(g)}_n\}$ or, respectively, $\{{}^{\log{}}\omega^{(g)}_m\}$ satisfy the original TR that might be considered as a special case of LogTR.
\end{remark}

\begin{remark}\label{rmk213}
If the sets of logarithmic singularities of $x$ and $y$ are the subsets of singularities (not necessarily logarithmic) of $y$ and $x$, respectively, then all singularities are not LogTR-vital and we have on both sides of the $x-y$ swap relation~\eqref{eq:MainFormulaSimple} the differentials produced by the original topological recursion, that is $\{{}^{\log{}}\omega^{\vee,(g)}_n\}=\{\omega^{\vee,(g)}_n\}$ and $\{{}^{\log{}}\omega^{(g)}_m\}=\{\omega^{(g)}_m\}$,  so in this special case Theorem~\ref{thm:LogXYSwap} turns into~\cite[Theorem 1.14]{alexandrov2022universal} with relaxed assumptions on $x$ and $y$.
\end{remark}

\begin{remark} An alternative way to formulate Theorem~\ref{thm:LogXYSwap} is to give a relation expressing $\{{}^{\log{}}\omega_{n}^{(g)}\}$ in terms of  $\{{}^{\log{}}\omega_{n}^{\vee,(g)}\}$. The expression is then exactly the same with $x$ and $y$ replaced by $x^\vee$ and $y^\vee$, respectively:
\begin{align} \label{eq:MainFormulaInverse}
	& \frac{{}^{\log{}}\omega_{n}^{(g)} (z_{\llbracket n\rrbracket})}{\prod_{i=1}^n dx_i} 	=
	(-1)^n
	\coeff \hbar {2g} \sum_{\Gamma} \frac{\hbar^{2g(\Gamma)}}{|\mathrm{Aut}(\Gamma)|} \prod_{i=1}^n
	\sum_{k_i=0}^\infty  \partial_{y^\vee_i}^{k_i} [u_i^{k_i}] \frac{dx^\vee_i}{dy^\vee_i}
	\\ \notag &
	\frac{1}{u_i} e^{ u_i \cS(\hbar u_i \partial_{x^\vee_i}) \sum_{\tilde g=0}^\infty \hbar^{2\tilde g} \frac{{}^{\log{}}\omega^{\vee,(\tilde g)}_{1} (z_i)}{dx^\vee_i}-u_i \frac{{}^{\log{}}\omega^{\vee,(0)}_{1} (z_i)}{dx^\vee_i}}
	\\ \notag &
	\prod_{e\in E(\Gamma)} \prod_{j=1}^{|e|\geq 2}\restr{(\tilde u_j, \tilde x^\vee_j)}{ (u_{e(j)},x^\vee_{e(j)})} \tilde u_j \cS(\hbar \tilde u_j  \partial_{\tilde x^\vee_j}) \sum_{\tilde g=0}^\infty \hbar^{2\tilde g}\, \frac{{}^{\log{}}\tilde\omega^{\vee,(\tilde g)}_{|e|}(\tilde z_{\llbracket |e|\rrbracket})}{\prod_{j=1}^{|e|} d\tilde x^\vee_j}
	\\ \notag &
	+\delta_{(g,n),(0,1)} (-x^\vee_1).
\end{align}
\end{remark}

\section{The case of trivial dual LogTR} \label{sec:trivial-dual}

\subsection{Trivial dual logarithmic topological recursion}
In this section we provide a number of applications of Theorem~\ref{thm:LogXYSwap} (to be precise, we consider its dual formulation \eqref{eq:MainFormulaInverse}) to the cases when $y$ is unramified. A number of such applications in the case when $x$ is rational are listed in~\cite{alexandrov2022universal}. Here we concentrate on those cases when $x$ possess logarithmic singularities and the ordinary $x-y$ swap formula for the original topological recursion may not work. This was studied by Hock~\cite{hock2023laplace,hock2023xy}, and he conjectured in a number of cases the natural corrections (which reproduced some known cases studied earlier in~\cite{bychkov2021topological}), as well as a general principle of how it should work.

The duality formula~\eqref{eq:MainFormulaInverse} simplifies dramatically when the dual topological recursion is trivial, that is, when $dy$ has no zeros and $\omega^{\vee,(g)}_n=0$ for $2g-2+n>0$.
In principle, the formulas use not $y$, but $dy$, so we get extra possible cases of $dy$ being a holomorphic differential on an elliptic curve. These cases will be discussed elsewhere in subsequent papers. In the present paper we always assume that $\Sigma=\mathbb{C}P^1$. By Riemann--Hurwitz formula, $dy$ has two poles, and there are essentially two distinct cases:
\begin{itemize}
\item I: the two poles of $dy$ are coinciding. Then $y$ itself is rational and has degree one, that is, $y=z$ can be taken as an affine coordinate on~$\Sigma$
\item II: the two poles of $dy$ are distinct.  Then it is given by $dy=\frac{dz}{z}$ for a suitable affine coordinate~$z$, up to a constant factor. This constant factor can be absorbed by the rescaling of the correlation differentials or $x$, so we can put $y=\log(z)$ in this case.
\end{itemize}

In both cases, let us choose an affine coordinate~$z$ such that $z=\infty$ is a pole of $dy$ (a choice of $z$ such that $y=z$ or $y=\log z$ fits the requirement). Let us represent the function~$x$ in the form
\begin{equation}\label{eq:x_with_log}
x(z)=\sum_{i=1}^M\frac{\log(z-a_i)}{\alpha_i}+R(z),
\end{equation}
where $a_1,\dots,a_M$ are the LogTR-vital singularities of~$x$ and $1/\alpha_i$ are the corresponding residues of~$dx$. The term $R(z)$ may contain contributions of other non-LogTR-vital logarithmic singularities, so we do not assume that $R$ is rational. Set
\begin{align} \label{eq:hat-x}
\hat x(z,\hbar)&=\sum_{i=1}^M\frac1{\cS(\alpha_i\hbar\partial_{y_i})}\frac{\log(z-a_i)}{\alpha_i}+R(z),\\
\label{eq:wigeneral}
w_i&=e^{\frac{u_i\hbar}{2}\partial_{y_i}}z_i,\quad
\bar w_i=e^{-\frac{u_i\hbar}{2}\partial_{y_i}}z_i.
\end{align}

As a special case of Theorem~\ref{thm:LogXYSwap}, or, more precisely, Equation~\eqref{eq:MainFormulaInverse}, we obtain

\begin{proposition}\label{prop:DualTRTrivial}
If either $y$ or $e^y$ is an affine coordinate on $\Sigma=\C P^1$, then the differentials of LogTR with the initial differential ${}^{\log{}}\omega^{(0)}_1=-y\,dx$ are given by the following explicit formula
\begin{multline}\label{eq:main}
{}^{\log{}}\omega_n^{(g)}(z_{\set{n}})=
[\hbar^{2g-2+n}](-1)^n\prod_{i=1}^n\left(\sum_{r=0}^\infty \bigl(d_i\tfrac1{dx_i}\bigr)^r \;
[u_i^r]\,e^{-u_i(\cS(u_i\hbar\partial_{y_i})\hat x_i-x_i)}\right)
\\(-1)^{n-1}\sum_{\sigma\in\{n\text{-cycles}\}}\prod_{i=1}^n\frac{\sqrt{dw_id\bar w_{\sigma(i)}}}{w_i-\bar w_{\sigma(i)}}.
\end{multline}
Here summation runs over the set of $n$-cycles in the permutation group $S_n$, by $\sqrt{dw_id\bar w_i}$ we mean $\sqrt{\frac{dw_i}{dz_i}\frac{d\bar w_i}{dz_i}}dz_i$,  and $d\frac1{dx}$ is the operator acting in the space of meromorphic differentials and taking a form $\eta$ to $d\frac{\eta}{dx}$.
\end{proposition}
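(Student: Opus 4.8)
The plan is to deduce the statement by specializing the dual swap formula~\eqref{eq:MainFormulaInverse} to the case $(x^\vee,y^\vee)=(y,x)$ and then resumming the resulting graph expansion. The first observation is that, since $dx^\vee=dy$ has no zeros, the dual curve has no ramification points; the standard projection part of the logarithmic projection property~\eqref{eq:LogProjection} is an empty sum, so all dual differentials ${}^{\log{}}\omega^{\vee,(g)}_n$ with $n\ge 2$ and $2g-2+n>0$ vanish. The only dual data that survive in~\eqref{eq:MainFormulaInverse} are therefore the Bergman kernel ${}^{\log{}}\omega^{\vee,(0)}_2=B$ decorating the multiedges and the one-point series $\sum_{\tilde g\ge 0}\hbar^{2\tilde g}\,{}^{\log{}}\omega^{\vee,(\tilde g)}_1/dx^\vee$ feeding the vertex exponentials.

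I would next compute this one-point series explicitly. Its $\tilde g=0$ term equals $-y^\vee=-x$. For $\tilde g\ge 1$ the logarithmic projection property singles out ${}^{\log{}}\omega^{\vee,(\tilde g)}_1$ as the unique differential on $\C P^1$ whose principal part at each LogTR-vital point $a_i$ is the polar part of $-[\hbar^{2\tilde g}]\bigl(\frac{1}{\alpha_i\cS(\alpha_i\hbar\partial_{y})}\log(z-a_i)\bigr)dy$: the residue transport through $B$ reproduces a prescribed principal part verbatim, multiplication by $dy$ cancels the spurious poles away from the $a_i$ (uniformly in the two cases $y=z$ and $y=\log z$), and there are no holomorphic differentials on $\C P^1$ to spoil uniqueness. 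Summing over $\tilde g$ reassembles the full $\hbar$-expansion~\eqref{eq:hat-x} of $\hat x$ and gives $\sum_{\tilde g\ge 0}\hbar^{2\tilde g}\,{}^{\log{}}\omega^{\vee,(\tilde g)}_1/dx^\vee=-\hat x$. Substituting this and $-x$ into the vertex exponent turns it into $-u_i\bigl(\cS(u_i\hbar\partial_{y_i})\hat x_i-x_i\bigr)$, exactly the exponent appearing in~\eqref{eq:main}.

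Because only index-two genus-zero multiedges remain, the graph sum in~\eqref{eq:MainFormulaInverse} reduces to a sum over connected graphs on $n$ labeled vertices all of whose edges carry the (diagonally regularized) Bergman kernel. On $\C P^1$ we have $B=dz_1\,dz_2/(z_1-z_2)^2$, and the half-edge dressings $\tilde u\,\cS(\hbar\tilde u\partial_{y})$ implement the finite shifts recorded by $w_i=e^{u_i\hbar\partial_{y_i}/2}z_i$ and $\bar w_i=e^{-u_i\hbar\partial_{y_i}/2}z_i$, factorizing the double pole of each $B$-edge into shifted Cauchy kernels of the form $\sqrt{dw\,d\bar w}/(w-\bar w)$. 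The decisive step is then to resum this connected Bergman-graph expansion into the single-cycle form $\sum_{\sigma\in\{n\text{-cycles}\}}\prod_i\sqrt{dw_id\bar w_{\sigma(i)}}/(w_i-\bar w_{\sigma(i)})$, following the mechanism of the meromorphic case: the connected generating series organizes into cyclic chains through all $n$ vertices, with the automorphism weights $1/|\Aut(\Gamma)|$ and the diagonal regularization of self-loops conspiring to produce the stated coefficients and the $(-1)^{n-1}$ prefactor.

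I expect this resummation to be the main obstacle, since it requires the precise coordination of the $\cS$-operators, the self-loop regularization, and the graph automorphism factors against the cyclic structure; concretely, it is the logarithmic refinement of the corresponding computation in~\cite{alexandrov2022universal}, with the function $x$ upgraded to $\hat x$ in the vertex weights. The remaining manipulations are routine: converting the function-level operators $\partial_{x_i}^{k_i}[u_i^{k_i}]\,(dy_i/dx_i)\,u_i^{-1}$ of~\eqref{eq:MainFormulaInverse} into the differential-level operators $\bigl(d_i\tfrac1{dx_i}\bigr)^r[u_i^r]$ of~\eqref{eq:main}, reinstating the overall factor $\prod_i dx_i$, and checking that the two bookkeepings of the $\hbar$-grading agree.
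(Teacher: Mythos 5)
Your reduction follows the same route as the paper's proof: you specialize \eqref{eq:MainFormulaInverse}, observe that the absence of zeros of $dx^\vee=dy$ kills all stable dual differentials with $n\ge 2$, and identify the dual one-point series as $-\hat x$ from the logarithmic projection property on $\C P^1$, recovering exactly the vertex exponent $e^{-u_i(\cS(u_i\hbar\partial_{y_i})\hat x_i-x_i)}$ as in \eqref{eq:x-hat-omega-1}--\eqref{eq:VertexContributions}. Up to that point the argument is sound and matches the paper.

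The gap is precisely the step you yourself call ``the main obstacle'': you assert that the connected graph sum ``organizes into cyclic chains'' but supply no mechanism, and termwise it does not --- a connected graph on $n$ vertices with $(0,2)$-multiedges is generally not a cycle, and no cancellation among such graphs is exhibited. The paper closes this step with three concrete ingredients absent from your sketch: (i) it passes to the \emph{disconnected} correlators ${}^{\log{}}\omega^\bullet_n$, so that the sum over all (not necessarily connected) graphs factorizes over pairs of vertices and over vertices; (ii) it computes the exponentiated multi-edge contribution between distinct vertices in closed form as the cross-ratio \eqref{eq:EdgeContributions}, and, crucially, the self-loop contribution \eqref{eq:LoopContributions} via the explicit double-integral computation \eqref{eq:Somega02reg} (using $y(w_i)-y(\bar w_i)=\hbar u_i$), which produces the diagonal factors $\hbar u_i\sqrt{dw_i d\bar w_i}\,/(w_i-\bar w_i)$ and absorbs the prefactor $1/u_i$ together with the $\hbar$-bookkeeping you defer; (iii) it recognizes the resulting product as the right-hand side of the Cauchy determinant identity, so the disconnected version of \eqref{eq:main} holds with the full determinant $\det\left|\tfrac{1}{w_i-\bar w_j}\right|$, and only then does passing to the connected counterpart yield the sum over $n$-cycles with the sign $(-1)^{n-1}$. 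The cycle structure is thus an output of the Cauchy identity plus inclusion--exclusion, not of a direct reorganization of connected graphs. Your appeal to a ``logarithmic refinement'' of the computation in \cite{alexandrov2022universal} does not substitute for this: the only logarithmic modification is in the vertex weight ($x\mapsto\hat x$), while the edge/loop closed forms and the determinant resummation are exactly the content that must be executed, and in your write-up they are asserted rather than proved (your claim that each $B$-edge factorizes into kernels $\sqrt{dw\,d\bar w}/(w-\bar w)$ is also inaccurate for non-loop edges, which give the logarithm of a cross-ratio, not a single Cauchy kernel).
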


\begin{remark}
The functions $w_i$ and $\bar w_i$ entering the formula can be given more explicitly in the two cases of a choice~for $y$ as follows:
\begin{align}\label{eq:wiexplicit}
y&=z:&e^{\pm\frac{\hbar u}{2}\partial_y}z&=z\pm\frac{u\hbar}{2},
\\y&=\log z:&e^{\pm\frac{\hbar u}{2}\partial_y}z&=e^{\pm\frac{u\hbar}{2}}z.
\end{align}
In both cases we have $\sqrt{dw_id\bar w_i}=dz_i$ and
\begin{equation}\label{eq:cdetz}
(-1)^{n-1}\sum_{\sigma\in\{n\text{-cycles}\}}\prod_{i=1}^n\frac{\sqrt{dw_id\bar w_{\sigma(i)}}}{w_i-\bar w_{\sigma(i)}}=
(-1)^{n-1}\sum_{\sigma\in\{n\text{-cycles}\}}\prod_{i=1}^n\frac{1}{w_i-\bar w_{\sigma(i)}}\;
\prod_{i=1}^n dz_i.
\end{equation}
However, we keep an expression for the corresponding connected determinant in Eq.~\eqref{eq:main} in the form of the left hand side of~\eqref{eq:cdetz} since this form is invariant with respect to a choice of an affine coordinate on~$\Sigma=\C P^1$, see Remark~\ref{rem:choice_for_z} below.
\end{remark}

\begin{remark}\label{rem:choice_for_z}
It is also possible to use an arbitrary affine coordinate~$z$. In this case, one should replace $\log(z-a_i)$ by $\log\bigl(\tfrac{z-a_i}{z-p_0}\bigr)$ everywhere in~\eqref{eq:x_with_log} and~\eqref{eq:hat-x}, where $p_0$ is a chosen pole of $dy$, one and the same for all logarithmic summands.
\end{remark}

\begin{remark} \label{rem:Appl-dual-Log-Ord}
In many applications the LogTR for the differentials of Proposition coincides with the original one, and $\{{}^{\log{}}\omega^{(g)}_n\}=\{\omega^{(g)}_n\}$. In particular, the two recursions coincide in the case when $y=z$ is an affine coordinate, since $dy$ has no simple poles in this case. In the case $y=\log z$ the two recursions also coincide if $dx$ has poles both at $z=0$ and $z=\infty$, see Remark~\ref{rmk213}. The only case when we should make a distinction between logarithmic and original topological recursions in the setting of Proposition is when $y=\log z$ and $dx$ is regular either at $z=0$ or at $z=\infty$. 
\end{remark}

\begin{remark} 
	Formula~\eqref{eq:main} is equivalent to the determinantal formula discussed in~\cite{ABDKS3} for this case. To see it, one needs to take the determinantal formula for $\omega_n$ from~\cite[Section~3.2]{ABDKS3}, substitute there the formula from~\cite[Corollary~3.14]{ABDKS3}, and take into account~\cite[Lemma~6.3]{ABDKS3}. Due to the appearance of logarithmic singularities, one needs to replace $x$ with $\hat x$ in the integrals in the exponents in the second line of the formula of~\cite[Corollary~3.14]{ABDKS3}. 
\end{remark}

\begin{proof}[Proof of Proposition~\ref{prop:DualTRTrivial}] This proposition is a direct corollary of Equation~\eqref{eq:MainFormulaInverse}. It is more convenient to compare expressions for the corresponding \emph{disconnected} $n$-point differentials. The latter ones can be defined formally, we just assemble ${}^{\log{}}\omega_n^{(g)}(z_{\set{n}})$ into the generating series ${}^{\log{}}\omega_n(z_{\set{n}})\coloneqq \sum_{g=0}^\infty \hbar^{2g-2+n} {}^{\log{}}\omega_n^{(g)}(z_{\set{n}})$ and then consider the sums 
\begin{align}
	{}^{\log{}}\omega_n^\bullet (z_{\set{n}})\coloneqq \sum_{k=1}^n \sum_{I_1\sqcup \cdots \sqcup I_k = \set{n}} \prod_{j=1}^k {}^{\log{}}\omega_{|I_j|}(z_{I_j}). 	
\end{align}	
Applying this to~\eqref{eq:MainFormulaInverse} and~\eqref{eq:main}, we see that the sum on the right hand side of~\eqref{eq:MainFormulaInverse} is replaced by the corresponding sum over the set of all graphs on $n$ marked vertices, not necessarily connected, and the sum on the right hand side of~\eqref{eq:main} is replaced by the corresponding determinant.

Consider the dual LogTR for $x^\vee = y$ and $y^\vee = x$. Since $dx^\vee(z)$ does not have any zeros, the stable correlation differentials of the original (as opposed to logarithmic) TR on the curve $(\Sigma, x^{\vee},y^{\vee})$ are all equal to zero.

In the case of LogTR, the principal parts at points of the logarithmic singularities of $y^\vee$ determine all $\omega^{\vee,(g)}_1$, $g\geq 1$, and comparing Equations~\eqref{eq:LogProjection} and~\eqref{eq:hat-x} we see that
\begin{gather} \label{eq:x-hat-omega-1}
\frac{{}^{\log{}}\omega^{\vee, (g)}_1(z)}{dx^\vee(z)}
=[\hbar^{2g}](-\hat x),\\
\label{eq:VertexContributions}
e^{ u_i \cS(\hbar u_i \partial_{x^\vee_i}) \sum_{\tilde g=0}^\infty \hbar^{2\tilde g} \frac{{}^{\log{}}\omega^{\vee,(\tilde g)}_{1} (z_i)}{dx^\vee_i}-u_i \frac{{}^{\log{}}\omega^{\vee,(0)}_{1} (z_i)}{dx^\vee_i}}
=e^{-u_i(\cS(u_i\hbar\partial_{y_i})\hat x_i-x_i)}.
\end{gather}

For $n>1$, the stable LogTR differentials ${}^{\log{}}\omega^{\vee,(g)}_n$ are still trivial and the contributions of all multiedges to~\eqref{eq:MainFormulaInverse} are equal to zero except the contributions of the edges with two legs $e(1)=i$, $e(2)=j$ for various pairs $(i,j)$ corresponding to the differentials
\begin{equation}
{}^{\log{}}\tilde\omega^{\vee,(0)}_{2}(\tilde z_1,\tilde z_2)=
\begin{cases}
\frac{d\tilde z_1d\tilde z_2}{(\tilde z_1-\tilde z_2)^2},&i\ne j,\\
\frac{d\tilde z_1d\tilde z_2}{(\tilde z_1-\tilde z_2)^2}-
\frac{d\tilde y_1d\tilde y_2}{(\tilde y_1-\tilde y_2)^2},&i=j.
\end{cases}
\end{equation}
In the case $i\ne j$, the contribution of such an edge is equal to
\begin{align}
\mathfrak{w}_{i,j} & :=\hbar u_i\cS(u_i\hbar\partial_{x_i})\hbar u_j\cS(u_j\hbar\partial_{x_j})\frac{\omega^{\vee,(0)}_{2}(z_i,z_j)}{dx_idx_j}
\\ \notag & =(e^{\frac{u_i\hbar}{2}\partial_{x_i}}-e^{-\frac{u_i\hbar}{2}\partial_{x_i}})
(e^{\frac{u_j\hbar}{2}\partial_{x_j}}-e^{-\frac{u_j\hbar}{2}\partial_{x_j}})
\log(z_i-z_j)
=\log\frac{(w_i-w_j)(\bar w_i-\bar w_j)}
{(\bar w_i-w_j)(w_j-\bar w_i)},
\end{align}
where $w_i$, $\bar w_i$ are defined by~\eqref{eq:wigeneral}. Since multiple edges are allowed, we can compute the contribution of all multiple edges connecting the two given different vertices $i$ and $j$ as
\begin{equation}\label{eq:EdgeContributions}
e^{\mathfrak{w}_{i,j}}=\frac{(w_i-w_j)(\bar w_i-\bar w_j)}
{(\bar w_i-w_j)(w_j-\bar w_i)}.
\end{equation}

In the case $i=j$, one can show that the total contribution of multiple loops (edges with equal legs $e(1)=e(2)=i$) is equal to
\begin{equation}\label{eq:LoopContributions}
e^{\frac12
	\restr{\tilde z_1}{z_i} \restr{\tilde z_2}{z_i} \hbar^2 u_i^2\cS(u_i\hbar\partial_{\tilde y_1})\cS(u_i\hbar\partial_{\tilde y_2})
	\left(\frac{\frac{d\tilde z_1}{d\tilde y_1}\frac{d\tilde z_2}{d\tilde y_2}}{(\tilde z_1-\tilde z_2)^2}-\frac{1}{(\tilde y_1-\tilde y_2)^2}\right)} = \frac{\hbar u_i\sqrt{\frac{dw_i}{dy_i}\frac{d\bar w_i}{dy_i}}}{w_i-\bar w_i}.
\end{equation}
Indeed, 
\begin{align}\label{eq:Somega02reg}
	& \frac 12 \restr{z_1}{z} \restr{z_2}{z} \hbar^2 u^2\cS(u\hbar\partial_{y_1})\cS(u\hbar\partial_{y_2})
	\left(\frac{\frac{dz_1}{dy_1}\frac{dz_2}{dy_2}}{(z_1-z_2)^2}-\frac{1}{(y_1-y_2)^2}\right)
	 \\ \notag &
	\quad =\frac 12  \int\limits_{\bar w}^{w}\int\limits_{\bar w}^{w} d_1 d_2 \log\Big(\frac{z_1-z_2}{y_1-y_2}\Big)
		=\frac 12 \log\left(\left(\frac{y(w)-y(\bar w)}{w-\bar w}\right)^2\frac{dwd\bar w}{dy(w)dy(\bar w)}\right)
		\\ \notag & \quad
		=\frac 12 \log\left(\left(\frac{\hbar u }{w-\bar w}\right)^2\frac{dwd\bar w}{dydy}\right)
		=\log\left(\frac{\hbar u \sqrt{\frac{dwd\bar w}{dydy}}}{w-\bar w}\right)
		.
\end{align}

Substituting \eqref{eq:VertexContributions},~\eqref{eq:EdgeContributions},~\eqref{eq:LoopContributions} into \eqref{eq:MainFormulaInverse}, we get
\begin{align}
	&
	(-1)^n
	\coeff \hbar {2g} \sum_{\widetilde{\Gamma}} \frac{\hbar^{2g(\widetilde{\Gamma})}}{|\mathrm{Aut}(\widetilde{\Gamma})|} \prod_{i=1}^n
	\sum_{k_i=0}^\infty  \partial_{y^\vee_i}^{k_i} [u_i^{k_i}] \frac{dx^\vee_i}{dy^\vee_i}
	\\ \notag &
	\frac{1}{u_i} e^{-u_i(\cS(u_i\hbar\partial_{y_i})\hat x_i-x_i)}
	\\ \notag &
	\prod_{e\in E(\widetilde{\Gamma})} \prod_{j=1}^{|e|\geq 2}\restr{(\tilde u_j, \tilde x^\vee_j)}{ (u_{e(j)},x^\vee_{e(j)})} \tilde u_j \cS(\hbar \tilde u_j  \partial_{\tilde x^\vee_j}) \sum_{\tilde g=0}^\infty \hbar^{2\tilde g}\, \frac{{}^{\log{}}\tilde\omega^{\vee,(\tilde g)}_{|e|}(\tilde z_{\llbracket |e|\rrbracket})}{\prod_{j=1}^{|e|} d\tilde x^\vee_j}
	\\ \notag &
	+\delta_{(g,n),(0,1)} (-x^\vee_1).
\end{align}

Comparing~\eqref{eq:MainFormulaInverse} with~\eqref{eq:main} and using~\eqref{eq:VertexContributions},~\eqref{eq:EdgeContributions},~\eqref{eq:LoopContributions}, we see that (the disconnected version of) Eq.~\eqref{eq:main} follows from the well known Cauchy determinant formula
\begin{equation}
\det\left|\frac{1}{w_i-\bar w_j}\right|_{i,j=1,\dots,n}=
\prod_{i=1}^n\frac{1}{w_i-\bar w_i}\;\prod_{1\le i<j\le n}
\frac{(w_j-w_i)(\bar w_i-\bar w_j)}
{(w_i-\bar w_j)(w_j-\bar w_i)},
\end{equation}
and it remains only to take its connected counterpart.
\end{proof}

Let us consider several examples of logarithmic topological recursion when one can apply Proposition \ref{prop:DualTRTrivial}.

\subsection{Lambert spectral curve and linear Hodge integrals}\label{SHodge}

The Lambert spectral curve data are
\begin{equation}\label{eq:Lambert-z}
x=z-\log(z),\quad y=z,
\end{equation}
where $z$ is an affine coordinate on~$\Sigma=\C P^1$.

The differential $dx=\frac{z-1}{z}dz$ has a unique zero at $z=1$. The differentials of this topological recursion are related to enumeration of Hurwitz numbers and also to linear Hodge integrals over the moduli spaces. Namely, consider the sequence of rational functions defined by
\begin{equation}\label{eq:xi}
\xi_k(z)=(-\partial_x)^{k+1}y,\quad k=0,1,2,\dots.
\end{equation}
Then the differentials of TR with the spectral curve data~\eqref{eq:Lambert-z} are given by \cite{BM,MulaseCaJ}
\begin{equation}\label{eq:linearHodge}
\omega^{(g)}_n(z_{\set{n}})=\sum_{k_1+\dots+k_n+j=3g-3+n}(-1)^j
\int\limits_{\overline{\mathcal M}_{g,n}}\lambda_j\,\psi_1^{k_1}\dots\psi_n^{k_n}\;d\xi_{k_1}(z_1)\dots d\xi_{k_n}(z_n).
\end{equation}

The point $z=0$ is a LogTR-vital logarithmic singularity of~$x$ and Proposition~\ref{prop:DualTRTrivial} says that these differentials can also be computed explicitly by~\eqref{eq:main} with
\begin{equation}
\hat x=z-\frac{1}{\cS(\hbar\partial_z)}\log(z)=\frac{1}{\cS(\hbar\partial_z)}x.
\end{equation}
The last equality holds because $\partial_z^kz=0$ for $k\ge2$.

Adding to $y$ a summand proportional to $x$ does not change topological recursion. It follows that the spectral curve data for the same TR differentials~\eqref{eq:linearHodge} can be written in the form
\begin{equation}\label{eq:Lambert-logz}
x=z-\log(z),\quad y=\log z.
\end{equation}
This provides a different $x-y$ duality formula for the same differentials. Note that the logarithmic singularities of~$x$ and $y$ are not LogTR-vital in this case anymore, therefore $\{{}^{\log{}}\omega^{\vee,(g)}_n\}=\{\omega^{\vee,(g)}_n\}$ and $\{{}^{\log{}}\omega^{(g)}_m\}=\{\omega^{(g)}_m\}$, and Proposition~\ref{prop:DualTRTrivial} provides a correct answer with undeformed $\hat x=x$ in this case. The obtained expression for the differentials~\eqref{eq:linearHodge} have been known before. It is equivalent to a special case of a closed formula of~\cite{BDKS1} corresponding to the specific choice~$\psi(y)=y$, $y(z)=z$ in \emph{op. cit}. An observation that these expressions can be treated in the framework of extended $x-y$ duality is due to Hock~\cite{hock2023xy}.

\subsection{\texorpdfstring{$r$}{r}-spin vs \texorpdfstring{$r$}{r}-orbifold Hurwitz numbers}

Generalizing an example with the Lambert spectral curve, consider two families of the topological recursions with the following spectral curves:
\begin{align}
x&=P(z)-\log z,\qquad y=z,\label{eq:r-spin}\\
x&=P(z)-\log z,\qquad y=\log z,\label{eq:r-orbifold}
\end{align}
where $P$ is a rational function. In the case $P(z)=z^r$ these families describe enumeration of the so called $r$-spin and $r$-orbifold Hurwitz numbers, respectively \cite{Orbifold,SSZ}. In opposite to the case of the Lambert curve, where \eqref{eq:Lambert-z} and \eqref{eq:Lambert-logz} lead to the the same answers, topological recursion for these families, in general, result to different systems of differentials.

The function $x$ for these two families is the same and possess logarithmic singularities at $z=0$ and $z=\infty$. However, a LogTR-vital singularity appears only in one of two cases. Namely, if $y=z$, then a LogTR-vital singularity of~$x$ is $z=0$ only; if $y=\log z$, then none of the logarithmic singularities are LogTR-vital. According to this, we conclude that the TR differentials for the spectral curves~\eqref{eq:r-spin} and~\eqref{eq:r-orbifold}, respectively, are given by~\eqref{eq:main} with $\hat x$ defined by
\begin{align}
y&=z,&\hat x&=P(z)-\frac{1}{\cS(\hbar\partial_y)}\log z,
\\y&=\log z,&\hat x&=x,
\end{align}
and $w_i,\bar w_i$ defined by~\eqref{eq:wiexplicit}.

Moreover, $y$ never has a LogTR-vital singularity, therefore $\{{}^{\log{}}\omega^{(g)}_m\}=\{\omega^{(g)}_m\}$.

\subsection{Triple Hodge integrals}
Another natural generalization of the Lambert spectral curve data is the one defined by
\begin{equation}\label{eq:3HSpectralCurve}
dx=\frac{z\,dz}{(1+\alpha z)(1+\beta z)(1+\gamma z)},\qquad
dy=\frac{(1+b z)dz}{(1+\alpha z)(1+\beta z)(1+\gamma z)},
\end{equation}
where $z$ is an affine coordinate on~$\Sigma=\C P^1$, and $\alpha,\beta,\gamma,b$ are complex numbers. More explicitly, if $\alpha$, $\beta$, and $\gamma$ are different, we have
\begin{equation}
x=\frac{1}{p}\log(1+\alpha z)+\frac{1}{q}\log(1+\beta z)+\frac{1}{r}\log(1+\gamma z),
\end{equation}
where the parameters~$p,q,r$ are related to $\alpha,\beta,\gamma$ by
\begin{equation}\label{eq:pqr}
\begin{aligned}
p&=-(\alpha-\beta)(\alpha-\gamma),\\
q&=-(\beta-\gamma)(\beta-\alpha),\\
r&=-(\gamma-\alpha)(\gamma-\beta).
\end{aligned}
\end{equation}
These parameters automatically satisfy the Calabi--Yau condition $\frac1p+\frac1q+\frac1r=0$.

Both $dx$ and $dy$ are meromorphic and have three poles (the same for $dx$ and $dy$) and one zero (different for~$x$ and~$y$). Recall formula~\eqref{eq:xi} which defines the functions $\xi_k(z)$,~$k\ge0$. For the specific case which we currently consider we have
\begin{equation}\label{eq:xi3H}
\xi_k(z)=(-\partial_x)^{k}\left(-\frac1z-b\right).
\end{equation}
Then the differentials of TR with this spectral curve data are given by triple Hodge integrals~\cite{BM,ZhouTV,Chen}
\begin{gather}\label{eq:omega3H}
\omega^{(g)}_n(z_{\set{n}})=\sum_{k_1+\dots+k_n\le3g-3+n}
\int\limits_{\overline{\mathcal M}_{g,n}}\Lambda(p)\Lambda(q)\Lambda(r)\,\psi_1^{k_1}\dots\psi_n^{k_n}\;d\xi_{k_1}(z_1)\dots d\xi_{k_n}(z_n),\\
\Lambda(t)=1+t\lambda_1+t^2\lambda_2+\dots+t^g\lambda_g.
\end{gather}
Sometimes it is convenient to consider such affine coordinate that $dx$ has a pole at infinity; this type of parametrization is quite common in the literature.


We wish now to write a closed formula for these differentials using the $x-y$ duality relation. Observe that both sides of~\eqref{eq:omega3H} are independent of~$b$. Indeed, changing the value of~$b$ in the definition of~$y$ changes this function by a summand proportional to $x$, which preserves TR differentials on the left hand side of~\eqref{eq:omega3H}. The right hand side is also independent of~$b$ since a choice of~$b$ affects the constant term of $\xi_0$ only.

For a general value of~$b$ the differential $dy$ has a zero and Proposition~\ref{prop:DualTRTrivial} is not applicable. But there are three exceptional values of the parameter~$b$ for which the zero in the numerator of the expression for $dy$ (Eq.~\eqref{eq:3HSpectralCurve}) gets canceled by one of the three factors in the denominator. In these cases $dy$ has no zeros and (a version of)~\eqref{eq:main} provides a closed formula for the differentials of logarithmic topological recursion. In fact, we obtain \emph{three non-equivalent expressions} in this way. Consider one of them which corresponds to the choice $b=\gamma$. That is, we consider the spectral curve data
\begin{equation}
dx=\frac{z\,dz}{(1+\alpha z)(1+\beta z)(1+\gamma z)},\qquad
dy=\frac{dz}{(1+\alpha z)(1+\beta z)},
\end{equation}
which is equivalent to
\begin{equation}\label{eq:xy-3H}
x=\tfrac{1}{q}\log\bigl(\tfrac{1+\beta z}{1+\alpha z}\bigr)+\tfrac{1}{r}\log\bigl(\tfrac{1+\gamma z}{1+\alpha z}\bigr),\qquad
y=\tfrac{1}{\beta-\alpha}\log\bigl(\tfrac{1+\beta z}{1+\alpha z}\bigr).
\end{equation}
\begin{remark}
In this parametrization it is easy to see that if two of the three poles of~$dx$ are coinciding, then this situation is reduced to the case of linear Hodge integrals considered in Section \ref{SHodge}, by a suitable change of the coordinate~$z$. If all three poles are coinciding, then the spectral curve reduces to the Airy spectral curve $(z^2,z)$ and the differentials of TR describe the integrals of just psi classes over moduli spaces.
\end{remark}

We observe that among three logarithmic singularities of~$x$ only one is LogTR-vital, namely, that one at the point $z=-1/\gamma$. Taking $\frac{1+\beta z}{1+\alpha z}=e^{(\beta-\alpha)y}$ as an affine coordinate of Proposition~\ref{prop:DualTRTrivial} we see that an expression~\eqref{eq:main} for logarithmic TR differentials is valid in this case with
\begin{equation}\label{eq:w-3H}
\begin{aligned}
w_i&=e^{(\beta-\alpha)(y_i+u_i\hbar/2)}=e^{(\beta-\alpha)u_i\hbar/2}\tfrac{1+\beta z_i}{1+\alpha z_i},
\\\bar w_i&=e^{(\beta-\alpha)(y_i-u_i\hbar/2)}=e^{-(\beta-\alpha)u_i\hbar/2}\tfrac{1+\beta z_i}{1+\alpha z_i},
\\ \hat x&=\tfrac{1}{q}\log\bigl(\tfrac{1+\beta z}{1+\alpha z}\bigr)+
\frac{1}{r\cS(r\hbar\partial_y)}
\log\bigl(\tfrac{1+\gamma z}{1+\alpha z}\bigr)=
\frac{1}{\cS(r\hbar\partial_y)}x,
\end{aligned}
\end{equation}
where $r$ is the last entry of the triple $(p,q,r)$ defined by~\eqref{eq:pqr}. The last equality holds because $\partial_y^r\log\bigl(\tfrac{1+\beta z}{1+\alpha z}\bigr)=0$ for $r\ge2$.

Moreover, $y$ does not have LogTR-vital singularities, thus $\{{}^{\log{}}\omega^{(g)}_m\}=\{\omega^{(g)}_m\}$.

\subsection{Insertion of \texorpdfstring{$\kappa$}{k} classes}\label{sec:kappaInsertion}

The form~$dx$ of the spectral curve data~\eqref{eq:3HSpectralCurve} is just a general rational $1$-form with a unique zero and three poles, while the respective choice of~$dy$ in~~\eqref{eq:3HSpectralCurve} is rather special for a given $dx$. What can we say about a more general choice of $y$? In fact, in order to have a well defined topological recursion, it is sufficient to assume that $y(z)$ is just an arbitrary power series expanded at the origin such that $y'(0)\ne0$. For any such series define the constants $q_i,s_i$, $i=1,2,\dots$ by the relations
\begin{gather}\label{eq:parametersinkappa}
q_k=-\res_{z=0}\xi_k(z)dx \int^z y\,dx
  =\res_{z=0}d\xi_{k+1}(z)\int^z y\,dx
  =-\res_{z=0} \xi_{k+1}(z)\,y\,dx,
\\ \notag 1+\frac{q_1}{q_0}t+\frac{q_2 }{q_0}t^2+\dots =e^{-\sum _{i=1}^{\infty } s_i t^i},
\end{gather}
where $\xi_k$ are given by~\eqref{eq:xi}. Then the differentials of topological recursion for this choice of~$y$ and for $x$ defined by~\eqref{eq:3HSpectralCurve} are given by~\cite{Eynard-intersections}
\begin{equation}\label{eq:omega3Hkappa}
\omega^{(g)}_n=q_0^{2-2g-n}\!\!\!\sum_{k_1+\dots+k_n\le3g-3+n}
\int\limits_{\overline{\mathcal M}_{g,n}}e^{\sum_{j=1}^\infty s_j\kappa_j}
\Lambda(p)\Lambda(q)\Lambda(r)\,\psi_1^{k_1}\dots\psi_n^{k_n}\;d\xi_{k_1}(z_1)\dots d\xi_{k_n}(z_n).
\end{equation}

Of course, for a general choice of the coefficients $q_k$ the differential $dy$ is not meromorphic on the spectral curve, and Proposition~\ref{prop:DualTRTrivial} is not applicable. However, one can consider a special case when $dy$ is of the form
\begin{equation}
dy=\frac{dz}{(1+\tilde{\alpha} z)(1+\tilde{\beta} z)}
\end{equation}
for arbitrary parameters $\tilde{\alpha}$ and $\tilde{\beta}$. Then it has no zeros and Proposition~\ref{prop:DualTRTrivial} can be applied. But, if either $\tilde{\alpha}$ or  $\tilde{\beta}$ is not equal to one of the parameters $\{\alpha,\beta,\gamma\}$, then $y$ has LogTR-vital singularities and $\{{}^{\log{}}\omega^{(g)}_m\}\neq\{\omega^{(g)}_m\}$.

As an example, we could consider the following spectral curve data:
\begin{equation}\label{eq:R=z+1}
x=z-\log(z),\quad y=\log(z+1).
\end{equation}
The differentials ${}^{\log{}}\omega^{(g)}_n$ computed by~\eqref{eq:main} for this spectral curve data, along with the ``expected'' poles at the point $z=1$ (which is a unique zero of~$dx$), possess also (for~$n=1$) an additional pole at $z=-1$ which is a LogTR-vital singularity of $dy$ for this spectral curve data. In order to obtain differentials of the original topological recursion for this spectral curve one should apply further the recomputation procedure considered below in Section~\ref{sec:recomputation}. Enumerative meaning of the topological recursion differentials in this case is also clear --- they describe linear Hodge integrals with kappa classes inserted, as explained above.

For a slightly different choice of $y$,
\begin{equation}\label{eq:nontriv}
x=z-\log(z),\quad y=\log(z+1)+z^3,
\end{equation}
Proposition~\ref{prop:DualTRTrivial} is not applicable. This curve provides an example of the situation when both $\{{}^{\log{}}\omega^{\vee,(g)}_n\}\neq\{\omega^{\vee,(g)}_n\}$ and $\{{}^{\log{}}\omega^{(g)}_m\}\neq\{\omega^{(g)}_m\}$, moreover, all these four systems of differentials are non-trivial, and
$\{\omega^{(g)}_m\}$ describe linear Hodge integrals with certain insertion of kappa classes.

Note that in all other examples of this section, Eq.~\eqref{eq:main} resulted in differentials of the corresponding original topological recursions, that is,
${}^{\log{}}\omega^{(g)}_n=\omega^{(g)}_n$ for all $g$ and $n$.

\subsection{Insertion of \texorpdfstring{$\Theta$}{Theta} classes}

Consider yet another peculiar example of topological recursion with one ramification point. It is defined by
\begin{equation}
dx=\frac{z\,dz}{(1+\alpha z)(1+\beta z)(1+\gamma z)},\qquad
y=\frac{1}{z}.
\end{equation}

The function~$y$ has a (simple) pole at $z=0$ where~$dx$ vanishes, which means that 
topological recursion becomes irregular in this case. 
Its differentials are given by~\cite{H3Theta}
\begin{equation}\label{eq:omega3HTheta}
\omega^{(g)}_n=\sum_{k_1+\dots+k_n\le3g-3+n}
\int\limits_{\overline{\mathcal M}_{g,n}}\Theta_{g,n}
\Lambda(p)\Lambda(q)\Lambda(r)\,\psi_1^{k_1}\dots\psi_n^{k_n}\;d\xi_{k_1}(z_1)\dots d\xi_{k_n}(z_n),
\end{equation}
where the class  $\Theta_{g,n}$ that was introduced initially by Norbury \cite{NorbTheta} is proved to be expressible as a polynomial in~$\kappa$ classes \cite{KN,CGG}. Explicitly, it is given by
\begin{equation}
\Theta_{g,n}=[t^{2g-2+n}]e^{\sum_{j=1}^\infty s_j\kappa_jt^j},
\quad \text{where}\quad e^{-\sum_{j=1}^\infty s_jt^j}=\sum_{k=0}^\infty(-1)^k(2k+1)!t^k.
\end{equation}

Relations of the previous section show that this irregular topological recursion can be obtained as a limit from a regular one of the previous section with a certain specialization of parameters $s_i$ from \eqref{eq:parametersinkappa} and with a suitable rescaling of them, see~\cite{KN}.

Then, since $y$ is unramified in the family before the limit, we conclude that Proposition~\ref{prop:DualTRTrivial} can be applied. All three poles of $dx$ are LogTR-vital this time, and we obtain that Eq.~\eqref{eq:main} after taking the limit holds for the differentials~\eqref{eq:omega3HTheta} with
\begin{equation}
\hat x=\frac{1}{p\cS(p\hbar\partial_y)}\log\bigl(\tfrac{1+\alpha z}{z}\bigr)
+\frac{1}{q\cS(q\hbar\partial_y)}\log\bigl(\tfrac{1+\beta z}{z}\bigr)
+\frac{1}{r\cS(r\hbar\partial_y)}\log\bigl(\tfrac{1+\gamma z}{z}\bigr).
\end{equation}
Again, $y$ does not have LogTR-vital singularities, thus $\{{}^{\log{}}\omega^{(g)}_m\}=\{\omega^{(g)}_m\}$.

This example is interesting because in general all three logarithmic summands use different and non-integer scaling parameters of~$\hbar$-deformations, and this cannot be avoided by a change of coordinates.

Of course, this example can be further extended by deforming $y$ and an appropriate insertion of $\kappa$-classes.

\begin{remark}
	We use here the notion of irregular topological recursion meaning that there is a way to handle with situations when $x$ or $y$ is not regular in the zeroes of $dy$ or $dx$ correspondingly. We carefully treat these situations in the subsequent paper \cite{ABDKS7}.
\end{remark}

\subsection{Topological recursion for Bousquet-M\'elou--Schaeffer numbers}

Topological recursion for the Bousquet-M\'elou--Schaeffer (BMS) numbers provides yet another manifestative example clarifying details of applicability of Proposition~\ref{prop:DualTRTrivial}. Following~\cite{bychkov2021topological}, we represent the corresponding spectral curve in the form
\begin{equation}
X=\frac{z}{(z+1)^m},\quad y=z,\quad \omega^{(0)}_1=z\frac{dX}{X}.
\end{equation}

In order to represent this spectral curve data in a more traditional  form we should choose $x$ of the topological recursion spectral curve data as some function of $X$ and to adjust $y$ accordingly in order to obtain $\omega^{(0)}_1$ in its conventional form $-y\,dx$. One of the ways to do that is to set $dx=-\frac{dX}{X}$, that is,
\begin{equation}\label{eq:BMShcurve}
x=m \log(1+z)-\log(z),\quad y=z.
\end{equation}
For this data we have that $dx=\frac{(m-1)z-1}{z(z+1)}dz$ has a unique (nondegenerate) zero point $z=\frac{1}{m-1}$ that might contribute to topological recursion. The differentials of this topological recursion can be expressed explicitly by~\eqref{eq:main} with
\begin{equation}\label{eq:BMSh_xhat1}
\hat x=\frac{m}{\cS(m^{-1}\hbar\partial_y)}\log(1+z)+\frac{1}{\cS(\hbar\partial_y)}\log(z).
\end{equation}

The point is that it is \emph{not} the topological recursion we need. The correct form of topological recursion for the BMS numbers takes also into account the ramification point $z=-1$ of the function~$X$. In order to do that, the paper~\cite{BDS} suggests to represent $\omega^{(0)}_1$ in the form
\begin{equation}
\omega^{(0)}_1=-zX\,d(X^{-1})
\end{equation}
and to consider $x=X^{-1}=\frac{(z+1)^m}{z}$ and $y=zX=\frac{z^2}{(z+1)^m}$. The function~$x$ chosen in this way has two critical points, a nondegenerate one at $z=\frac{1}{m-1}$ and a degenerate one at $z=-1$, and the correct form of topological recursion related to enumeration of BMS numbers takes into account both these points. This recursion is not just degenerate but also irregular since the $y$ function has a pole at the corresponding degenerate critical point $z=-1$ of the $x$ function. The initial data for this recursion is quite complicated and Proposition~\ref{prop:DualTRTrivial} is not applicable.

Instead, we suggest to consider the original form of the spectral curve but with the factor $(z+1)^m$ replaced by a generic polynomial of degree~$m$. Thus, we consider spectral curve data of the form
\begin{equation}
x=\log P_\epsilon(z)-\log(z),\quad y=z,
\end{equation}
where $P_\epsilon$ is a family of polynomials depending on $\epsilon$ and converging to $P_0=(z+1)^m$ as $\epsilon\to0$. This limit procedure is equivalent to the one used in~\cite[Section 2]{BDS} (note that the argument in op.~cit. used a statement on taking limits in topological recursion that was not available in the literature at that moment; this missing step is now performed in~\cite{limits}), and thus this spectral curve data does give the BMS numbers at $\epsilon\to 0$.

Then, we apply~\eqref{eq:main} to the spectral curves of this family and take the limit of the obtained differentials as $\epsilon\to 0$. The resulting expression for the limiting differentials is given by~\eqref{eq:main} with
\begin{equation}\label{eq:BMSh_xhat2}
\hat x=\frac{1}{\cS(\hbar \partial_y)}\log ((z+1)^m)-\frac{1}{\cS(\hbar \partial_y)}\log(z)
=\frac{1}{\cS(\hbar \partial_y)}x(z).
\end{equation}

Thus, we obtained two expressions of the form~\eqref{eq:main} for the same spectral curve~\eqref{eq:BMShcurve} but for different $\hbar$-deformations of the function~$x$ given by~\eqref{eq:BMSh_xhat1} and~\eqref{eq:BMSh_xhat2}, respectively. The first one provides differentials of topological recursion with the only ramification point $z=\frac{1}{m-1}$ taken into account, and the second one, which is actually related to enumeration of the BMS numbers, accounts also correctly the contribution of the point $z=-1$.

\subsection{The Hock conjecture} Hock~\cite[Section 3.2]{hock2023xy} considered the following special situation (we adapt his proposal to the notation and conventions of the present paper):

\begin{conjecture}[Hock] Let $\Sigma=\C\mathrm{P}^1$, $y(z)=z$ be a global affine coordinate, and $x=\log R(z)+Az$, where $R(z)$ is a rational function and $A\in\C$ (Case 1). Alternatively, let $\Sigma=\C\mathrm{P}^1$, $y(z)=\log z$, where $z$ is a global affine coordinate, and $x=\log R(z)$, where $R(z)$ is again a rational function (Case 2).
	
Assume that all zeros of $dx$ are simple. Then the correlation differentials $\{\omega^{(g)}_n\}$ can be obtained by the $x-y$ swap relation from the dual system of differentials formally defined as
\begin{align}
	\sum_{g=0}^\infty \hbar^{2g}\omega^{\vee,(g)}_{1} = - dy(z) \frac{1}{\cS(\hbar \partial_{y(z)})} x(z),
\end{align}
and $\omega^{\vee,(g)}_n = 0$ for $g\geq 0$, $n\geq 2$, $2g-2+n>0$.
\end{conjecture}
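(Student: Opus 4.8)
The plan is to show that the Hock conjecture is a direct specialization of Proposition~\ref{prop:DualTRTrivial}, so the whole task reduces to matching the two explicit descriptions of the dual differential $\omega^{\vee,(g)}_1$ and then verifying that in these two cases the logarithmic and original topological recursions coincide. First I would observe that in both Case~1 and Case~2 the function $y$ (namely $y=z$ or $y=\log z$) is exactly one of the two distinguished choices appearing in Proposition~\ref{prop:DualTRTrivial}: either $y$ or $e^y$ is a global affine coordinate on $\Sigma=\C P^1$. Hence $dy$ has no zeros, the dual spectral curve $(\Sigma,x^\vee,y^\vee)=(\Sigma,y,x)$ carries no ramification of $x^\vee$, and therefore all stable correlation differentials of the \emph{original} TR on the dual side vanish, while the LogTR differentials are supported entirely in the $n=1$ sector through the logarithmic projection property. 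This is precisely the ``trivial dual LogTR'' situation, so Equation~\eqref{eq:MainFormulaInverse} collapses to the explicit formula~\eqref{eq:main}.

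Next I would identify the dual one-point differential. By Equation~\eqref{eq:x-hat-omega-1} the dual LogTR fixes
\begin{align}
\sum_{g=0}^\infty \hbar^{2g}\frac{{}^{\log{}}\omega^{\vee,(g)}_1(z)}{dx^\vee(z)} = -\hat x(z,\hbar),
\end{align}
where $\hat x$ is the $\hbar$-deformation~\eqref{eq:hat-x} of $x$, with $dx^\vee=dy$. In Case~1, writing $x=\log R(z)+Az$ with $R$ rational, the term $Az$ is annihilated by $\cS(\alpha_i\hbar\partial_y)-1$ since it is linear in $y=z$, so $\hat x$ deforms only the logarithmic part of $x$; in Case~2, with $y=\log z$, the same mechanism applies to the summands $\log(z-a_i)/\alpha_i$ coming from the factorization of $R$. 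In either case $\hat x$ coincides with $\frac{1}{\cS(\hbar\partial_{y})}x(z)$ after collecting the residues $1/\alpha_i$ of $dx$ at the LogTR-vital singularities, exactly matching the conjectured $- dy\,\frac{1}{\cS(\hbar\partial_{y(z)})}x(z)$. This pins down the dual data claimed by Hock and shows $\omega^{\vee,(g)}_n=0$ for $2g-2+n>0$, $n\geq2$ as required.

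Finally I would address the point where the conjecture states the \emph{original} differentials $\{\omega^{(g)}_n\}$, not the logarithmic ones. Here I would invoke Remark~\ref{rmk213} and Remark~\ref{rem:Appl-dual-Log-Ord}: since the conjecture concerns the differentials on the $x$-side (where $x$ is ramified and governs the recursion), one must check that $y$ has no LogTR-vital singularities, so that $\{{}^{\log{}}\omega^{(g)}_n\}=\{\omega^{(g)}_n\}$ and Proposition~\ref{prop:DualTRTrivial} indeed produces the ordinary TR correlators. In Case~1, $y=z$ is globally meromorphic and $dy=dz$ has no simple poles at all, so no singularity of $y$ is LogTR-vital. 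In Case~2, $y=\log z$ has $dy=dz/z$ with simple poles at $0$ and $\infty$; these fail to be LogTR-vital precisely when $dx=d\log R$ also has poles there, which holds for generic $R$, and more generally one uses Remark~\ref{rem:Appl-dual-Log-Ord} to confirm the identification. The main obstacle I anticipate is this last bookkeeping step: one must carefully track at which points of $\Sigma$ the form $dy$ acquires residues and confirm they are subsumed by poles of $dx$, so that the left-hand differentials produced by~\eqref{eq:main} are genuinely the original TR correlators rather than log-corrected ones. Once this is verified, the conjecture follows, and in fact in the refined form asserting the full closed expression~\eqref{eq:main}.
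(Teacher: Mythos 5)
Your route — reduce to Proposition~\ref{prop:DualTRTrivial}, read off the dual one-point data $\sum_g \hbar^{2g}\,{}^{\log{}}\omega^{\vee,(g)}_1 = -\hat x\, dy$ from the logarithmic projection property, and then check that $y$ has no LogTR-vital singularities so that $\{{}^{\log{}}\omega^{(g)}_n\}=\{\omega^{(g)}_n\}$ — is exactly the paper's argument. However, there is a genuine gap at the step where you assert that ``$\hat x$ coincides with $\frac{1}{\cS(\hbar\partial_{y})}x(z)$ after collecting the residues $1/\alpha_i$ of $dx$ at the LogTR-vital singularities.'' By~\eqref{eq:hat-x} (equivalently~\eqref{eq:PrincipalPartsLog}), the deformation attached to a LogTR-vital point $a_i$ is $\frac{1}{\alpha_i\cS(\alpha_i\hbar\partial_y)}\log(z-a_i)$: the residue $\alpha_i^{-1}$ enters \emph{inside the argument of} $\cS$, not linearly. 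If in Case~1 a finite zero or pole of $R$ has multiplicity $m\geq 2$, then $\alpha_i^{-1}=\pm m$ and $\frac{m}{\cS(m^{-1}\hbar\partial_y)}\log(z-a_i)\neq \frac{1}{\cS(\hbar\partial_y)}\,m\log(z-a_i)$, so $\hat x\neq\frac{1}{\cS(\hbar\partial_y)}x$ and the dual data conjectured by Hock is simply not the one LogTR prescribes. The two operators agree precisely when every LogTR-vital residue is $\pm 1$ (using that $\cS$ is even), i.e.\ when all finite zeros and poles of $R$ are simple. This is why the paper proves only Corollary~\ref{corHock}, with exactly that extra assumption, and explicitly remarks that the conjecture \emph{fails} when $R$ has a finite multiple zero or pole in Case~1. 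Your proposal, which attempts the statement as literally given, proves too much: the unrestricted conjecture is false.

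The same issue resurfaces in your final ``bookkeeping'' paragraph for Case~2. Whether the simple poles of $dy=dz/z$ at $z=0$ and $z=\infty$ are LogTR-vital is not a genericity check one can always ``confirm'': it is an additional hypothesis. If $dx=d\log R$ is regular at one of these two points (e.g.\ if $R(0)\neq 0,\infty$), the corresponding pole of $dy$ \emph{is} LogTR-vital, formula~\eqref{eq:main} then produces the logarithmic differentials with ${}^{\log{}}\omega^{(g)}_n\neq\omega^{(g)}_n$ already at $(g,n)=(1,1)$, and the conclusion of the conjecture fails on the nose. The paper closes this by imposing assumption~(2) of Corollary~\ref{corHock} — that $R$ has a zero or pole at both $z=0$ and $z=\infty$ — so that these poles of $dy$ are subsumed by poles of $dx$, cf.\ Remarks~\ref{rmk213} and~\ref{rem:Appl-dual-Log-Ord}. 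In short: your argument is correct precisely on the hypothesis set of Corollary~\ref{corHock} and there coincides with the paper's proof, but as written it glosses over the two places (nonlinear dependence of $\frac{1}{\alpha\cS(\alpha\hbar\partial_y)}$ on the residue; LogTR-vitality at $0$ and $\infty$) where those hypotheses are indispensable.
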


We have the following formal corollary of Theorem~\ref{thm:LogXYSwap} (or rather Proposition~\ref{prop:DualTRTrivial}):

\begin{corollary}\label{corHock} The conjecture of Hock holds under extra assumptions:
\begin{enumerate}
	\item All finite zeros and poles of $R(z)$are simple (in Case 1);
	\item $R(z)$ has either a zero or a pole at both $z=0$ and $z=\infty$ (not necessarily simple ones), and all other zeros and poles of $R(z)$ are simple (in Case 2).
\end{enumerate}
\end{corollary}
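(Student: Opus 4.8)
The plan is to derive Corollary~\ref{corHock} as a direct specialization of Proposition~\ref{prop:DualTRTrivial}, reducing Hock's conjecture to verifying that the hypotheses of that proposition are met under the stated assumptions and that the formula~\eqref{eq:main} indeed produces the original topological recursion differentials $\{\omega^{(g)}_n\}$ rather than merely $\{{}^{\log{}}\omega^{(g)}_n\}$. First I would match the two setups: in both of Hock's cases $dy$ has no zeros (since $y=z$ or $y=\log z$ is a global affine coordinate or its exponential is), so the dual spectral curve $(\Sigma, x^\vee, y^\vee)=(y,x)$ has trivial stable TR differentials, placing us exactly in the regime where Proposition~\ref{prop:DualTRTrivial} applies. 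The dual differential $\sum_g \hbar^{2g}\omega^{\vee,(g)}_1$ appearing in Hock's conjecture is then precisely the quantity computed in~\eqref{eq:x-hat-omega-1}, namely $-dy(z)\,\hat x(z,\hbar)$, provided one identifies Hock's $\frac{1}{\cS(\hbar\partial_y)}x$ with the operator-deformed $\hat x$ of~\eqref{eq:hat-x}.

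The crux is therefore to check that under the simplicity assumptions the combination $\hat x$ built from the LogTR-vital singularities of $x$ collapses to the clean closed form $\frac{1}{\cS(\hbar\partial_y)}x$ that Hock writes. In Case~1, with $y=z$ and $x=\log R(z)+Az$, every finite zero and pole of $R$ contributes a simple logarithmic singularity with residue $\pm1$, hence $\alpha_i=\pm1$ and $\cS(\alpha_i\hbar\partial_y)=\cS(\hbar\partial_y)$ uniformly; the linear term $Az$ and any contribution at infinity are polynomial in $z$ and annihilated by the non-trivial part of the deformation, so $\hat x=\frac{1}{\cS(\hbar\partial_y)}x$ exactly, as in the Lambert and BMS computations~\eqref{eq:BMSh_xhat2}. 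The role of \emph{simplicity} is to guarantee all residues are $\pm1$ and all poles of $dy$ avoid the singular locus, so that each singularity is genuinely LogTR-vital and the uniform deformation parameter is legitimate; this is where the Remark after Definition~\ref{def:log-proj} (on poles of order $>1$ not contributing) is implicitly invoked to exclude the non-simple case. In Case~2, with $y=\log z$, the requirement that $R$ have a zero or pole at both $z=0$ and $z=\infty$ ensures (by Remark~\ref{rem:Appl-dual-Log-Ord} and Remark~\ref{rmk213}) that the logarithmic singularities of $x$ over the poles $z=0,\infty$ of $dy$ are \emph{not} LogTR-vital, so only the finite simple zeros and poles of $R$ away from $0,\infty$ remain LogTR-vital, again all with unit residues, yielding the same uniform $\hat x=\frac{1}{\cS(\hbar\partial_y)}x$.

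The main obstacle I anticipate is bookkeeping at the points where $dy$ is singular, i.e.\ ensuring that the distinction between LogTR-vital and non-LogTR-vital singularities is handled correctly so that~\eqref{eq:main} outputs the \emph{original} TR differentials. In Case~1 there is no issue because $y=z$ gives $dy$ with no simple poles, so by Remark~\ref{rem:Appl-dual-Log-Ord} LogTR and original TR coincide, $\{{}^{\log{}}\omega^{(g)}_n\}=\{\omega^{(g)}_n\}$. In Case~2, by contrast, $dy=dz/z$ has simple poles at $0$ and $\infty$, and the hypothesis that $R$ has a zero or pole there is exactly what forces those points to fail LogTR-vitality (a pole of $dy$ coinciding with a pole of $dx$), so again $\{{}^{\log{}}\omega^{(g)}_n\}=\{\omega^{(g)}_n\}$ by Remark~\ref{rmk213}. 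Once these vitality classifications are pinned down, the corollary follows by substituting the explicit $\hat x$ into~\eqref{eq:main} and reading off that the result is Hock's proposed $x-y$ swap; the remaining verification is routine identification of~\eqref{eq:main} with the swap relation~\eqref{eq:MainFormulaInverse} specialized to a trivial dual, which is already furnished by the proof of Proposition~\ref{prop:DualTRTrivial}.
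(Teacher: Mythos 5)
Your proposal is correct and follows essentially the same route as the paper's own proof: apply Proposition~\ref{prop:DualTRTrivial} with the trivial dual, use simplicity of the zeros and poles of $R$ to get residues $\pm1$ (so that, by evenness of $\cS$, the deformation collapses to the uniform $\hat x=\frac{1}{\cS(\hbar\partial_y)}x$, the remaining terms being fixed by the operator since $\partial_y^r z=0$, resp.\ $\partial_y^r\log z=0$, for $r\ge2$), and then invoke the vitality classification (no simple poles of $dy$ in Case~1; poles of $dx$ at $z=0,\infty$ in Case~2) to conclude $\{{}^{\log{}}\omega^{(g)}_n\}=\{\omega^{(g)}_n\}$ via Remarks~\ref{rem:Appl-dual-Log-Ord} and~\ref{rmk213}. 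Your added identification of Hock's dual one-point differential with Equation~\eqref{eq:x-hat-omega-1} is a detail the paper leaves implicit, but it does not change the argument.
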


\begin{proof}
If all finite zeros and poles of $R(z)$ are simple then $\log R$ is a sum of logarithms of linear functions taken with the coefficients $\pm1$. It follows that Proposition~\ref{prop:DualTRTrivial} can be applied with
\begin{equation}
\hat x=\begin{cases}
	\frac{1}{\cS(\hbar\partial_y)}\log R(z)+Az & \text{in Case 1}; \\
	\frac{1}{\cS(\hbar\partial_y)}\log R(z) & \text{in Case 2}.
\end{cases}
\end{equation}
Since $\partial_y^rz=0$ for $r\ge2$, in both cases we have
\begin{equation}
	\hat x = \frac{1}{\cS(\hbar\partial_y)}x.
\end{equation}
In Case 1 it remains to note that the function $y(z)=z$ has no logarithmic singularities implying that the LogTR differentials produced by~\eqref{eq:main} in this case coincide with TR differentials. In Case 2 $y(z)=\log z$ has logarithmic singularities at $z=0$ and $z=\infty$, but the second assumption implies that they are not LogTR-vital, hence the LogTR differentials produced by~\eqref{eq:main}  coincide with TR differentials in this case as well, cf.~Remark~\ref{rem:Appl-dual-Log-Ord}.
\end{proof}

%
%

\begin{remark} We see in the proof that the fact that the same $\hbar$-correction can be applied to each summand of~$x$ is just a coincidence and this does not hold for more complicated logarithmic singularities. For example, if in Case 1 $R(z)$ has a finite multiple pole or a multiple zero, then the conjecture of Hock does not hold. Indeed, in this case the multiplicity is the residue (or the opposite residue for a pole) of $dy^\vee$ and it enters the formula for the necessary correction~\eqref{eq:PrincipalPartsLog}.
\end{remark}







\subsection{Family~II of generalized double Hurwitz numbers} \label{sec:fam2}

In~\cite{bychkov2021topological} two families (Family~I and Family~II) of KP tau functions of hypergeometric type (whose expansion coefficients are various weighted double Hurwitz numbers) were considered, and it was proved that the respective $n$-point functions (for which \cite[Proposition~2.2]{bychkov2021topological} provides an explicit formula) satisfy topological recursion on specific spectral curves.

In~\cite[Section~7.1]{alexandrov2022universal} it was shown that ordinary $x$--$y$ swap relation provides an alternative way of obtaining the explicit formula for the $n$-point functions for a part of Family~I.

It turns out that the $x$--$y$ swap relation for the logarithmic topological recursion provides an alternative way for obtaining explicit formulas for the $n$-point functions for the case of Family~II (the whole family without any additional restrictions).

Indeed, recall that for the Family~II we have
\begin{align}
	x(z)&= -\log z + \psi(y(z)) =  -\log z + \alpha\left(\dfrac{R_1(z)}{R_2(z)} + \log\left(\dfrac{R_3(z)}{R_4(z)}\right)\right),\\
	y(z)&=\dfrac{R_1(z)}{R_2(z)} + \log\left(\dfrac{R_3(z)}{R_4(z)}\right),
\end{align}
where $\alpha\in\mathbb{C}\setminus \{0\}$, and $R_i$ are polynomials with simple roots such that $y$ is not identically zero but is vanishing at $z=0$; $\psi(\theta)=\alpha\theta$ is one of the functions specifying the given KP tau function of hypergeometric type, see~\cite{BDKS1,bychkov2021topological}.

Let us denote by $\{\omega^{(g)}_n\}$ the $n$-point differentials produced by TR on the spectral curve  $(\mathbb{C}P^1,x,y)$. \cite[Proposition~2.2]{bychkov2021topological} provides an explicit formula for $W^{(g)}_n=\omega^{(g)}_n/\prod_{i=1}^n(-dx_i)$. Note that for the function $X$ which is used in \cite{bychkov2021topological} we have $X=e^{-x}$ (recall that in the present paper the convention is that $\omega^{(0)}_1=-ydx$ which then coincides with the convention $\omega^{(0)}_1=ydX/X$ from \cite{bychkov2021topological}).

Note that if one sets
\begin{align}
	\tilde y(z)&:=y(z)-\dfrac{x(z)}{\alpha}=\dfrac{\log z}{\alpha},
\end{align}
then the $n$-point differentials $\{\tilde\omega^{(g)}_n\}$ produced by TR on the spectral curve $(\mathbb{C}P^1,x,\tilde y)$ 
exactly coincide with $\{\omega^{(g)}_n\}$ for all $(g,n)$ except $(g,n)=(0,1)$.

\begin{remark}
Family~II includes the cases \eqref{eq:Lambert-logz}, \eqref{eq:r-orbifold} considered above.
\end{remark}

Let us put $\alpha=1$ for brevity in what follows; it is easy to recover it.
Also we would like to avoid LogTR-vital singularities for this spectral curve. Namely, we assume that $dx$ has poles at the points $z=0$ and $z=\infty$ which are poles of $dx^\vee=d\tilde y=dz/z$. This requirement is implied by the generality conditions imposed on the polynomials $R_1,\dots,R_4$. Thus, if one denotes by $\{{}^{\log{}}\omega^{(g)}_n\}$ the $n$-point differentials produced by LogTR on the spectral curve $(\mathbb{C}P^1,x,\tilde y)$, then they coincide with $\{\tilde\omega^{(g)}_n\}$ for all $(g,n)$ and with $\{\omega^{(g)}_n\}$ for all $(g,n)$ except $(g,n)=(0,1)$.

Now let us apply Theorem~\ref{thm:LogXYSwap}. Denote by $\{{}^{\log{}}\omega^{\vee,(g)}_n\}$ the $n$-point differentials produced by LogTR on the spectral curve $(\mathbb{C}P^1,x^\vee,y^\vee)$, where $x^\vee=\tilde y$ and $y^\vee=x$. Theorem~\ref{thm:LogXYSwap} implies that $\{{}^{\log{}}\omega^{(g)}_n\}$ can be expressed in terms of $\{{}^{\log{}}\omega^{\vee,(g)}_n\}$ via formula~\eqref{eq:MainFormulaInverse}.

Denote
\begin{align}
	\hat y(z)&:=\dfrac{R_1(z)}{R_2(z)} + \frac{1}{\cS(\hbar z\partial_z)}\log\left(\dfrac{R_3(z)}{R_4(z)}\right),\\
	\hat x(z)&:=-\log z+\dfrac{R_1(z)}{R_2(z)} + \frac{1}{\cS(\hbar z\partial_z)}\log\left(\dfrac{R_3(z)}{R_4(z)}\right)=  -\log z +\hat y(z).
\end{align}
Then we note that Definition~\ref{def:log-proj} implies that for $n>1$ and $(g,n)\neq (0,2)$ we have ${}^{\log{}}\omega^{\vee,(g)}_n =0$, since $x^\vee$ does not have any critical points; moreover, it implies that for $g>0$
\begin{equation}\label{eq:fam2logpoles}
	\dfrac{{}^{\log{}}\omega^{\vee,(g)}_1(z)}{dx^\vee(z)} = -[\hbar^{2g}]\hat x(z)= -[\hbar^{2g}]\hat y(z)= -[\hbar^{2g}]\frac{1}{\cS(\hbar z\partial_{z})} \log\left(\dfrac{R_3(z)}{R_4(z)}\right).
\end{equation}

Now let us substitute everything into formula~\eqref{eq:MainFormulaInverse}, for $(g,n)\neq(0,1)$:
\begin{align}\label{eq:WF2formula}
	&W^{(g)}_n (z_{\llbracket n\rrbracket})=\frac{\omega^{(g)}_n (z_{\llbracket n\rrbracket})}{\prod_{i=1}^n (-dx_i)} = (-1)^n \frac{\tilde\omega^{(g)}_n (z_{\llbracket n\rrbracket})}{\prod_{i=1}^n dx_i}  = (-1)^n\frac{{}^{\log{}}\omega^{(g)}_n (z_{\llbracket n\rrbracket})}{\prod_{i=1}^n dx_i} \\ \notag
	&\mathop{=}^{\eqref{eq:MainFormulaInverse}}	
	\coeff \hbar {2g}\hbar^{2} \sum_{\Gamma} \frac{1}{|\mathrm{Aut}(\Gamma)|} \prod_{i=1}^n \hbar^{-2}
	\sum_{k_i=0}^\infty  \partial_{y^\vee_i}^{k_i} [u_i^{k_i}] \frac{dx^\vee_i}{dy^\vee_i}
	\\ \notag
	&\phantom{==}
	\frac{1}{ u_i} e^{ u_i \cS(\hbar u_i \partial_{x^\vee_i}) \sum_{\tilde g=0}^\infty \hbar^{2\tilde g} \frac{{}^{\log{}}\omega^{\vee,(\tilde g)}_{1} (z_i)}{dx^\vee_i}-u_i \frac{{}^{\log{}}\omega^{\vee,(0)}_{1} (z_i)}{dx^\vee_i}}
	\\ \notag
	&\phantom{==}
	\prod_{e\in E(\Gamma)}\hbar^{-2}\prod_{j=1}^{|e|\geq 2}\restr{(\tilde u_j, \tilde x^\vee_j)}{ (u_{e(j)},x^\vee_{e(j)})} \hbar^2 \tilde  u_j \cS(\hbar \tilde u_j  \partial_{\tilde x^\vee_j}) \sum_{\tilde g=0}^\infty \hbar^{2\tilde g}\, \frac{{}^{\log{}}\tilde\omega^{\vee,(\tilde g)}_{|e|}(\tilde z_{\llbracket |e|\rrbracket})}{\prod_{j=1}^{|e|} d\tilde x^\vee_j}\\\notag	
	&=	
	\coeff \hbar {2g-2+n} \prod_{i=1}^n
	\sum_{k_i=0}^\infty  \partial_{x_i}^{k_i} [u_i^{k_i}] \left(z_i\frac{dx_i}{dz_i}\right)^{-1}
	\frac{e^{ -u_i \cS(\hbar u_i z_i\partial_{z_i})\hat x(z_i)+u_i x(z_i)}}{\hbar u_i}
	\\ \notag  &\phantom{==}
	e^{\frac12
		\restr{\tilde z_1}{z_i} \restr{\tilde z_2}{z_i} \hbar^2 u_i^2\cS(u_i\hbar \tilde z_1\partial_{\tilde z_1})\cS(u_i\hbar\tilde z_2\partial_{\tilde z_2})
		\left(\frac{\tilde z_1 \tilde z_2}{(\tilde z_1-\tilde z_2)^2}-\frac{1}{(\log \tilde z_1-\log \tilde z_2)^2}\right)} \\ \notag
		&\phantom{==}
		\sum_{\gamma} \prod_{(v_k,v_\ell)\in E(\gamma)}\left(e^{
			\hbar^2 u_k u_\ell \cS(u_k\hbar z_k\partial_{z_k})\cS(u_\ell\hbar z_\ell\partial_{z_\ell})
			\frac{z_k z_\ell}{( z_k- z_\ell)^2}}-1\right)
\\\notag	&=						
    \coeff \hbar {2g-2+n} \prod_{i=1}^n
			\sum_{k_i=0}^\infty  (-\partial_{x_i})^{k_i} [u_i^{k_i}] \left(-z_i\frac{dx_i}{dz_i}\right)^{-1}
			\frac{e^{ u_i \cS(\hbar u_i z_i\partial_{z_i})\hat y(z_i)-u_i y(z_i)}}{u_i\hbar\cS(u_i\hbar)}
	\\ \notag		&\phantom{==}
			\sum_{\gamma}  \prod_{(v_k,v_\ell)\in E(\gamma)}\left(e^{
			\hbar^2 u_k u_\ell \cS(u_k\hbar z_k\partial_{z_k})\cS(u_\ell\hbar z_\ell\partial_{z_\ell})
			\frac{z_k z_\ell}{( z_k- z_\ell)^2}}-1\right).
\end{align}
In the last and the third lines from the bottom the sum goes over connected simple graphs $\gamma$ (as opposed to multigraphs $\Gamma$), and the last equality follows from the fact that
\begin{equation}
	e^{\frac12
		\restr{\tilde z_1}{z_i} \restr{\tilde z_2}{z_i} \hbar^2 u_i^2\cS(u_i\hbar \tilde z_1\partial_{\tilde z_1})\cS(u_i\hbar\tilde z_2\partial_{\tilde z_2})
		\left(\frac{\tilde z_1 \tilde z_2}{(\tilde z_1-\tilde z_2)^2}-\frac{1}{(\log \tilde z_1-\log \tilde z_2)^2}\right)}
\end{equation}
does not, in fact, depend on $z_i$ and is equal precisely to $1/\cS(u_i\hbar)$. We have also replaced $u_i$ with $-u_i$ in the last equality. Also note that in the equality in the second line, apart from formula~\eqref{eq:MainFormulaInverse} itself, we have also used the formula for the first Betti number of a multigraph: $g(\Gamma)=|L|+1 -|V|-|E|$, where $|V|$ is the number of vertices, $|E|$ is the number of multiedges, and $|L|$ is the total number of all ``legs'' of all multiedges.

Note that the expression that we have obtained in~\eqref{eq:WF2formula} exactly coincides with the formula from~\cite[Proposition~2.2]{bychkov2021topological} specified to the case of Family~II (cf. \cite[Section~4.3]{bychkov2021topological}). Indeed, note that for $D=X\partial_X$ and $Q=(z/X)dX/dz$ from~\cite{bychkov2021topological} we have $D=-\partial_x$ and $Q=-z dx/dz$.

Thus we get a new proof of TR for Family~II (although some of the ideas behind the original proof of TR for Family~II and behind the proof of Theorem~\ref{thm:LogXYSwap} which we used here are similar).


\begin{remark}
	It is important to note that from the above reasoning one clearly sees that in this case the difference~\eqref{eq:fam2logpoles} between the 1-point differentials given by LogTR and by original TR on the trivial side precisely corresponds to the difference between the $\hbar$-deformed function $\hat y$ and the regular $y$ in \cite{bychkov2021topological}, and only due to this coincidence the formulas end up coinciding.
\end{remark}

\begin{remark}
	Note that Family~II from~\cite{bychkov2021topological} contains e.g. the Ooguri--Vafa partition function of colored HOMFLY polynomials for torus knots. That is to say, some members of this family have independent significance.
\end{remark}
\subsection{KP integrability}

In \cite{ABDKS3} we introduced a concept of KP integrability as
a property of a system of symmetric meromorphic differentials and we proved, in particular, that the $x-y$ swap relation respects this property. The differentials of LogTR and its dual are connected to each other by exactly this relation, therefore, from \cite[Theorem 2.7]{ABDKS3} we immediately have
\begin{proposition}
The system of differentials $\{{}^{\log{}}\omega^{(g)}_m\}$  of LogTR is KP integrable if and only if the dual system $\{{}^{\log{}}\omega^{\vee,(g)}_n\}$ is KP integrable.
\end{proposition}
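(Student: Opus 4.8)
The plan is to derive this proposition as an immediate consequence of the compatibility of KP integrability with the $x$--$y$ swap relation established in \cite[Theorem 2.7]{ABDKS3}, combined with Theorem~\ref{thm:LogXYSwap} of the present paper. First I would recall that, as introduced in \cite{ABDKS3}, KP integrability is an intrinsic property of an abstract system of symmetric meromorphic differentials: it does not reference the functions $x$ and $y$ directly, only the differentials themselves together with the relevant determinantal structure. This is essential, because it means that the notion applies verbatim to $\{{}^{\log{}}\omega^{(g)}_m\}$ and to $\{{}^{\log{}}\omega^{\vee,(g)}_n\}$ regardless of the presence of logarithmic singularities.

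Second, the key input is that \cite[Theorem 2.7]{ABDKS3} asserts precisely that two systems of differentials related by the universal $x$--$y$ swap transformation are simultaneously KP integrable or not. The transformation in question is exactly the one appearing in \eqref{eq:MainFormulaSimple} and its inverse \eqref{eq:MainFormulaInverse}; indeed, by the Remark following Theorem~\ref{thm:LogXYSwap} this formula repeats literally the transformation of \cite[Theorem 1.14]{alexandrov2022universal}, which is the same map whose interaction with KP integrability is analyzed in \cite{ABDKS3}. Then Theorem~\ref{thm:LogXYSwap} guarantees that $\{{}^{\log{}}\omega^{(g)}_m\}$ and $\{{}^{\log{}}\omega^{\vee,(g)}_n\}$ are connected by exactly this swap relation, so feeding this into \cite[Theorem 2.7]{ABDKS3} yields the desired equivalence. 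Thus no new computation is required; the proof is purely a matter of invoking the two cited results in sequence.

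The main obstacle I anticipate is not computational but a matter of checking applicability: one must confirm that the abstract hypotheses under which \cite[Theorem 2.7]{ABDKS3} is proved are met by the logarithmic differentials. The delicate point, as spelled out in Remark~\ref{rem:analyticbeh}, is that the $n=1$ differentials of LogTR carry extra poles at the LogTR-vital singularities of $y$ (and the dual differentials at the zeros of $dy$), so the analytic type of the differentials differs from the purely meromorphic situation with poles only at the zeros of $dx$. I would therefore verify that the formulation of KP integrability and of \cite[Theorem 2.7]{ABDKS3} is stated for arbitrary systems of symmetric differentials related by the swap, with no hidden assumption excluding these additional singular loci---which is exactly what makes the proof ``immediate.''

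If such an assumption were in fact present, the fallback would be to re-run the argument of \cite[Theorem 2.7]{ABDKS3} with the enlarged pole set allowed. Since that argument is itself built on the swap formula \eqref{eq:MainFormulaSimple}, and since the latter holds in the logarithmic setting by Theorem~\ref{thm:LogXYSwap}, the structure of the proof would carry over unchanged: the equivalence of KP integrability propagates through the swap independently of where the differentials are allowed to have poles. Either way, the conclusion follows without introducing any genuinely new ingredient beyond Theorem~\ref{thm:LogXYSwap} and \cite[Theorem 2.7]{ABDKS3}.
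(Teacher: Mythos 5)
Your proposal is correct and matches the paper's own argument exactly: the paper likewise deduces the proposition by combining Theorem~\ref{thm:LogXYSwap}, which shows that $\{{}^{\log{}}\omega^{(g)}_m\}$ and $\{{}^{\log{}}\omega^{\vee,(g)}_n\}$ are connected by the universal $x$--$y$ swap relation, with \cite[Theorem 2.7]{ABDKS3}, which states that this relation preserves KP integrability as a property of abstract systems of symmetric meromorphic differentials. Your extra care about the enlarged pole set is reasonable but not needed, since the cited result is formulated for arbitrary such systems related by the swap.
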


If $dy$ has no zeros and the dual TR is trivial, then the system $\{{}^{\log{}}\omega^{\vee,(g)}_n\}$ is KP integrable. Indeed, it immediately follows form the definition that any system of differentials  on ${\mathbb P}^1$ such that for $n>1$ $\omega^{(g)}_n=\delta_{g,0}\delta_{n,2}B$ and $\omega^{(g)}_1$ are arbitrary meromorphic differentials is KP integrable. Therefore, if the dual TR is trivial then the system of differentials $\{{}^{\log{}}\omega^{(g)}_m\}$ is KP integrable. It proves the KP integrability of all enumerative geometry interpretations of the TR considered in this section, which, however, was already established by other methods.

\subsection{A case with nontrivial dual:
	Family~III of generalized Hurwitz numbers}
As mentioned in Section~\ref{sec:fam2}, in~\cite{bychkov2021topological} it was proved that the $n$-point functions of two families of KP tau functions of hypergeometric type (a.k.a. Orlov--Scherbin partition functions), which correspond to generalized double Hurwitz numbers (and cover most examples of such numbers studied in the literature), satisfy the original topological recursion (a new proof of that for Family II is given in Section~\ref{sec:fam2}).

It turns out that there exists another natural family  of such tau functions, whose $n$-point functions do not satisfy the original topological recursion (and there is no known $\hbar$-deformation of the respective $\psi$- and $y$-functions such that the original TR would be satisfied), but they satisfy the LogTR introduced in the present paper. We call this family \emph{Family III} of Orlov--Scherbin partition functions and formulate a theorem that it satisfies the LogTR in the present section.

This example is important since for a generic member of this family both sides of the $x$--$y$ duality are nontrivial, as opposed to all of the examples discussed before (except for the case corresponding to formula~\eqref{eq:nontriv}), and thus Theorem~\ref{thm:LogXYSwap} and Proposition~\ref{prop:DualTRTrivial} cannot be used (or at least they are not enough) for proving LogTR. This illustrates that LogTR can be encountered ``in the wild'' even when there is no obvious relation to the $x$--$y$ duality formula.


\begin{theorem}\label{thm:fam3}
	Let
	\begin{align}
		\psi(\theta) &:= \alpha(e^\theta-1), &\hat\psi(\theta) &:= \cS(\hbar\partial_\theta) \psi(\theta),\\
		y(z) &:= \log R(z), &\hat y(z) &:= \frac{1}{\cS(\hbar z\partial_z)} y(z),\\ \label{eq:fam3xz}
		x(z) &:=-\log z +\psi(y(z)) = -\log z + \alpha\left(R(z) -1\right), & &
	\end{align}
	where $R(z)$ is a non-constant rational function with simple zeros and poles such that $R(0)=1$ and such that zeros of $dx$ are simple and not coinciding with the zeros of $R(z)$.
	
	As discussed in~\cite{bychkov2021topological}, this data defines a 
	KP tau function $Z(p_1,p_2,\dots)$ which is the generating function for certain generalized double Hurwitz numbers, which are then combined into $n$-point differentials $\{\omega^{\hat\psi,\hat y;(g)}_n\}$ (see~\cite[Section~1.2]{bychkov2021topological}).
	
	Then these $\{\omega^{\hat\psi,\hat y;(g)}_n\}$ satisfy LogTR on the spectral curve $(\mathbb{C}P^1,x(z),y(z))$.
\end{theorem}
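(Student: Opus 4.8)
The plan is to verify directly that the differentials $\{\omega^{\hat\psi,\hat y;(g)}_n\}$ fulfil the three clauses of Definition~\ref{def:log-proj} on the spectral curve $(\C P^1,x,y)$ of~\eqref{eq:fam3xz}: the normalizations ${}^{\log{}}\omega^{(0)}_1=-y\,dx$ and ${}^{\log{}}\omega^{(0)}_2=B$, the linear and quadratic loop equations, and the logarithmic projection property~\eqref{eq:LogProjection}. The starting point is the universal closed formula of~\cite[Proposition~2.2]{bychkov2021topological} for the $n$-point functions of an Orlov--Scherbin partition function, which is written precisely in terms of the $\hbar$-deformed data $\hat\psi=\cS(\hbar\partial_\theta)\psi$ and $\hat y=\tfrac1{\cS(\hbar z\partial_z)}y$. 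Specializing it to our $\psi(\theta)=\alpha(e^\theta-1)$ proceeds exactly as in the Family~II computation culminating in~\eqref{eq:WF2formula}, the sole new feature being that $\psi$ is now \emph{nonlinear}, so that both $\hat\psi\neq\psi$ and $\hat y\neq y$. Since the dual recursion is nontrivial, neither Theorem~\ref{thm:LogXYSwap} nor Proposition~\ref{prop:DualTRTrivial} can carry out the verification, and one is forced to argue from the explicit formula itself. The normalizations are immediate, since at order $\hbar^0$ the deformations are trivial and the genus-zero one-point part of the formula returns $-y\,dx$ with $y=\log R$.

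First I would dispose of the loop equations. These are local statements at the simple zeros of $dx$, where by hypothesis $R$ is regular and nonvanishing and the zeros of $dx$ avoid the zeros of $R$, so that $x$ and $y$ are honestly analytic there. Consequently the derivation of the linear and quadratic loop equations from the closed formula is the same local analysis as in~\cite{bychkov2021topological} for Families~I and~II; it sees only the behaviour near the zeros of $dx$ and is insensitive to the global logarithmic singularities of $y$ and to whether $\psi$ is linear. This establishes the blobbed topological recursion part of the definition for all $(g,n)$.

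It then remains to match the global analytic structure with~\eqref{eq:LogProjection}. Because $\Sigma=\C P^1$ has genus zero the $\mathfrak{A}$-period conditions are vacuous, so only the pole structure must be controlled. For $n\geq 2$ one must check that the formula produces differentials whose only poles in each variable sit at the zeros of $dx$: the logarithmic singularities fed in through $\hat y$ (and $\hat\psi$) at the zeros and poles of $R$ enter each vertex only via $\cS(\hbar u_i z_i\partial_{z_i})\hat y(z_i)$ and must be shown to cancel after the $(-\partial_{x_i})^{k_i}[u_i^{k_i}]$ extraction and combination with the edge kernels, in accordance with the $\delta_{m,1}$ in~\eqref{eq:LogProjection}. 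For $n=1$ and $g\geq 1$ the surviving singularities are exactly the zeros of $dx$ together with additional poles at the simple \emph{finite} zeros $a_i$ of $R$; there $dy=\tfrac{R'}{R}dz$ has residue $1$ and $dx$ is regular, so these are precisely the LogTR-vital singularities of $y$, all carrying $\alpha_i=1$.

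The hard part will be the last step: showing that the principal part of $\omega^{\hat\psi,\hat y;(g)}_1$ at each $a_i$ coincides with the principal part of $[\hbar^{2g}]\bigl(\tfrac{1}{\cS(\hbar\partial_{x})}\log(z-a_i)\bigr)dx$ prescribed by~\eqref{eq:PrincipalPartsLog}. The subtlety is that the deformation carried by $\hat y$ is governed by $\cS(\hbar z\partial_z)=\cS(\hbar\partial_{\log z})$, whereas the LogTR prescription is phrased through $\cS(\hbar\partial_{x})$, and near $z=a_i$ the vector fields $z\partial_z$ and $\partial_x$ genuinely differ. I would therefore localize the closed-formula expression at $z=a_i$, extract its singular part from the singular part $\log(z-a_i)$ of $\log R$, and show by the ``change of deformation operator'' identities underlying the Orlov--Scherbin formula that, after passing between the two variables, it reproduces the $\cS(\hbar\partial_x)$-deformed logarithm up to a holomorphic remainder. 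I expect this reconciliation of the two $\hbar$-deformations --- the one built into the $n$-point formula through $\hat y,\hat\psi$ and the one dictated by the LogTR-vital prescription --- to be the technical heart of the argument, with the loop equations and the $n\geq 2$ pole cancellations being comparatively routine consequences of the structure already exploited for Family~II.
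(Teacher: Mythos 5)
Your proposal cannot be checked against an in-paper argument, because the paper does not prove Theorem~\ref{thm:fam3} at all: the authors explicitly state that the proof ``goes beyond the scope of the present paper'' and defer it to two subsequent works --- a new general technique in \cite[Corollary~6.10]{ABDKS5}, and a direct analysis in the vein of \cite{bychkov2021topological} in \cite{ABDKS8}. Your plan is essentially an outline of that second, direct route, and its frame is sound: you correctly identify the LogTR-vital points as the finite simple zeros $a_i$ of $R$ (where $dy=\tfrac{R'}{R}dz$ has residue $1$, so $\alpha_i=1$, while $dx=(-\tfrac1z+\alpha R')dz$ is regular and, by hypothesis, nonvanishing), you correctly note that the poles of $R$, $z=0$ and $z=\infty$ are not vital because $dx$ is singular there, and you correctly observe that neither Theorem~\ref{thm:LogXYSwap} nor Proposition~\ref{prop:DualTRTrivial} can substitute for a direct verification since the dual side is nontrivial.

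The genuine gap is that the two steps carrying all the content are asserted, not proved. First, for $n\ge2$ you say the logarithmic singularities fed in through $\hat y$ and $\hat\psi$ ``must be shown to cancel'' at the zeros and poles of $R$ --- but this cancellation \emph{is} the projection property, the part of the argument that in \cite{bychkov2021topological} required the bulk of the work even for Families~I and~II, and no mechanism for it is offered here. Second, and more seriously, for $n=1$ you must match the principal part of $\omega^{\hat\psi,\hat y;(g)}_1$ at $a_i$ with $[\hbar^{2g}]\bigl(\tfrac{1}{\cS(\hbar\partial_x)}\log(z-a_i)\bigr)dx$ from~\eqref{eq:PrincipalPartsLog}, which requires converting the $\cS(\hbar z\partial_z)$-deformation built into $\hat y$ into the $\cS(\hbar\partial_x)$-deformation of the LogTR prescription. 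Near $a_i$ the vector fields $z\partial_z$ and $\partial_x$ differ by the nonconstant factor $1/(z\,x'(z))$, and there is no off-the-shelf ``change of deformation operator'' identity that commutes $\cS$-operators through such a change of variable; your paragraph records an \emph{expectation} that the reconciliation works, which is precisely the step the authors judged hard enough to develop new machinery for in \cite{ABDKS5} and a separate dedicated analysis in \cite{ABDKS8}. (Your transfer of the loop equations from the Family~I/II analysis is also only asserted, though since the loop equations are local at the zeros of $dx$, where $x$ and $y$ are analytic, this is the most plausibly routine of your claims.) As it stands, the proposal is a correct and well-targeted research plan, but not a proof.
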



\begin{remark}
	The assumption on the simplicity of zeros and poles of $R(z)$ can be lifted and the statement of the theorem remains true. This follows from the fact that in this case logarithmic topological recursion behaves nicely with respect to taking the respective limits of spectral curves, which works analogously to taking limits in the case of the original topological recursion, see~\cite{limits} (cf.~\cite[Remark~3.1]{bychkov2021topological}).
\end{remark}
\begin{remark}
	For Family~I of~\cite{bychkov2021topological} all terms in $\psi(\theta)$  except the logarithmic and linear ones had to be $\hbar$-deformed in $\hat\psi$ via the application of $\cS(\hbar\partial_\theta)$, while $y(z)$ is not $\hbar$-deformed. For Family~II, on the other hand, $\psi(\theta)$ is very simple and is not $\hbar$-deformed, while for $y(z)$ the logarithmic terms had to be $\hbar$-deformed via the application of $1/\cS(\hbar z\partial_z)$. Note that in the case of our Family~III both $\psi(\theta)$ and $y(z)$ have to be $\hbar$-deformed with exactly the same operators which were used in Family~I and Family~II for $\psi(\theta)$ and $y(z)$ respectively.
	\end{remark}
	
	\begin{remark}
	Note that Family~III is basically the simplest case when the spectral curve for an Orlov--Scherbin case has LogTR-vital singularities, and indeed it turns out that the respective $n$-point functions for the natural choice of $\hbar$-deformation of $\psi$ and $y$ satisfy the LogTR and not the original TR, while it is completely unclear whether there exists some different choice of $\hbar$-deformations of $\psi$ and $y$ such that the original TR would be satisfied (likely it does not).
\end{remark}
\begin{remark}	
In the special case when $R(z)=1+z,\; \alpha=1,$ Theorem~\ref{thm:fam3} provides an alternative expression for the LogTR differentials associated with the spectral curve \eqref{eq:R=z+1}; this gives rise to a new ELSV-like formula. We postpone the more in-depth discussion of this interesting implication to a subsequent publication~\cite{ABDKS8}.
 \end{remark}

The proof of Theorem~\ref{thm:fam3} 
goes beyond the scope of the present paper; it is proved via a new general technique in our subsequent paper:~\cite[Corollary~6.10]{ABDKS5}. Another proof, via direct analysis in the vein of~\cite{bychkov2021topological}, is given in~\cite{ABDKS8}.

\section{Original vs logarithmic TR} \label{sec:ordinary-vs-log}

The goal of this section is to recall the recursive procedure of the computation of the $n$-point differentials in the case of the original TR and to give a similar presentation for LogTR. Then we specialize~\cite[Equation~2.7]{BS-blobbed} to the case of LogTR thus connecting $\{\omega^{(g)}_m\}$ and $\{{}^{\log{}}\omega^{(g)}_m\}$.

\subsection{Recursive computations} Let us recall the standard algorithm for the recursive computation of $\{\omega^{(g)}_m\}$. It can be used even if $x$ and $y$ have logarithmic singularities, these points are just ignored and the resulting meromorphic differentials are regular at these points.

Assume, as before, that all zeros $\{p_1,\dots,p_N\}$ of $dx$ are simple. Let $\sigma_i$ be the deck transformation with respect to $x$ near $p_i$. As always, we set $\omega^{(0)}_1 \coloneqq -ydx$ and $\omega^{(0)}_2\coloneqq B$. In the case of original TR the correlation differentials $\omega^{(g)}_{m}$, $g\geq 0$, $m\geq 1$, $2g-2+m>0$, are computed recursively as
\begin{align} \label{eq:CEO-TR-original}
	\omega^{(g)}_{m}(z_{\llbracket m \rrbracket}) & = \frac 12 \sum_{i=1}^N \res_{z=p_i} \frac{\int_z^{\sigma_i (z)} \omega^{(0)}_{2}(z_1,\cdot) }{\omega^{(0)}_{1}({\sigma_i (z)}) - \omega^{(0)}_{1}(z)} \Bigg( \omega^{(g-1)}_{m+1} (z,\sigma_i(z),z_{\llbracket m \rrbracket \setminus \{1\}})
	\\ \notag & \qquad + \sum_{\substack{g_1+g_2=g \\ I_1\sqcup I_2 = \llbracket m \rrbracket \setminus \{1\}\\ (g_i,|I_i|)\not= (0,0)}} \omega^{(g_1)}_{|I_1|+1}(z,z_{I_1})\omega^{(g_2)}_{|I_2|+1}(\sigma_i(z),z_{I_2}) \Bigg).
\end{align}

A similar formula can be applied in the case of LogTR. Recall that $a_1,\dots,a_M$ are the simple poles of $dy$ with the residues $\alpha_1^{-1},\dots,\alpha_M^{-1}$, respectively.
We still have ${}^{\log{}}\omega^{(0)}_1 \coloneqq -ydx$ and ${}^{\log{}}\omega^{(0)}_2\coloneqq B$, and then for $g\geq 0$, $m\geq 1$, $2g-2+m>0$, we compute ${}^{\log{}}\omega^{(g)}_{m}$ recursively as
\begin{align} \label{eq:CEO-TR-log}
	{}^{\log{}}\omega^{(g)}_{m}(z_{\llbracket m \rrbracket}) & = \frac 12 \sum_{i=1}^N \res_{z=p_i} \frac{\int_z^{\sigma_i (z)} {}^{\log{}}\omega^{(0)}_{2}(z_1,\cdot) }{{}^{\log{}}\omega^{(0)}_{1}({\sigma_i (z)}) - {}^{\log{}}\omega^{(0)}_{1}(z)} \Bigg( {}^{\log{}}\omega^{(g-1)}_{m+1} (z,\sigma_i(z),z_{\llbracket m \rrbracket \setminus \{1\}})
	\\ \notag & \qquad + \sum_{\substack{g_1+g_2=g \\ I_1\sqcup I_2 = \llbracket m \rrbracket \setminus \{1\}\\ (g_i,|I_i|)\not= (0,0)}} {}^{\log{}}\omega^{(g_1)}_{|I_1|+1}(z,z_{I_1}){}^{\log{}}\omega^{(g_2)}_{|I_2|+1}(\sigma_i(z),z_{I_2}) \Bigg)
	\\ \notag
	& \quad - \delta_{m,1} \sum_{i=1}^M \res_{z=a_i} \int_{a_i}^{z} {}^{\log{}}\omega^{(0)}_{2}(z_1,\cdot) [\hbar^{2g}] \left( \frac{1}{\alpha_i\cS(\alpha_i \hbar \partial_x)} \log(z-a_i)\right)dx.
\end{align}

\begin{remark} The recursion procedures express $\omega^{(g)}_m$ and ${}^{\log{}}\omega^{(g)}_m$ via $\omega^{(g')}_{m'}$ and, respectively, ${}^{\log{}}\omega^{(g')}_{m'}$ with either $g'<g$ or $g'=g$ and $m'<m$.
\end{remark}

\begin{remark} Note that though Equation~\eqref{eq:CEO-TR-log} differs from Equation~\eqref{eq:CEO-TR-original} only for $m=1$, within the further recursion it affects all terms for any $m$.
\end{remark}

\begin{remark} If the pole of $dy$ is not simple or if $dx$ has also a pole at $a_i$
for $i=1,\dots, M$, then logarithmic topological recursion coincides with the original one.
\end{remark}

\begin{remark} If the two recursions do not coincide, then the first term when  $\omega^{(g)}_m\not={}^{\log{}}\omega^{(g)}_m$ is $(g,m)=(1,1)$. In particular, in genus $0$ we always have $\omega^{(0)}_m={}^{\log{}}\omega^{(0)}_m$, $m\geq 1$.
\end{remark}

\subsection{Recomputation procedure}\label{sec:recomputation}
Adapting the formulas given in the general situation of blobbed topological recursion \cite[Equation~2.7]{BS-blobbed} to the case of LogTR we obtain the following expressions connecting $\{\omega^{(g)}_m\}$ and $\{{}^{\log{}}\omega^{(g)}_m\}$:

\begin{proposition} \label{prop:log-omega-in-ordinary-omega}
For $2g-2+m>0$ we have
	\begin{align} \label{eq:log-omega-in-ordinary-omega}
		{}^{\log{}}\omega^{(g)}_m(z_{\set{m}}) & = \delta_{m,1} \phi_g(z_1)+ \sum_{n=0}^\infty \frac{1}{n!} \sum_{\substack{g_0+g_1+\cdots+g_n = g\\ g_0\geq 0,\,  g_1,\dots,g_n\geq 1}} \omega^{(g_0)}_{m+n}(z_{\set{m}}, t_{\set{n}})\Bigg(\prod_{i=1}^n *\phi_{g_i} (t_i) \Bigg);
		\\
		\label{eq:ordinary-omega-in-log-omega}
		\omega^{(g)}_m(z_{\set{m}}) & = -\delta_{m,1} \phi_g(z_1)+ \sum_{n=0}^\infty \frac{(-1)^n}{n!} \sum_{\substack{g_0+g_1+\cdots+g_n = g\\ g_0\geq 0,\,  g_1,\dots,g_n\geq 1}} {}^{\log{}}\omega^{(g_0)}_{m+n}(z_{\set{m}}, t_{\set{n}})\Bigg(\prod_{i=1}^n *\phi_{g_i} (t_i) \Bigg),
	\end{align}
where
\begin{align}
	\phi_g(t) \coloneqq - \sum_{j=1}^M \res_{z=a_i} \int_{a_i}^{z} B(t,\cdot) [\hbar^{2g}] dx \Big(\frac{1}{\cS(\alpha_i \hbar \partial_x)}-1\Big) \frac{\log(z-a_i)}{\alpha_i},
\end{align}
and for any two differentials $\omega(t)$ and $\phi(t)$ we denote
\begin{align}
	\omega(t)*\phi(t) \coloneqq \sum_{i=1}^N \res_{t=p_i} \omega(t)\cdot \int_{p_i}^t \phi(t').
\end{align}
\end{proposition}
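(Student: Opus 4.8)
The plan is to recognize LogTR as a specific instance of blobbed topological recursion and then to read off the two relations from the universal reconstruction formula \cite[Equation~2.7]{BS-blobbed}. First I would identify the blob. A solution of blobbed TR decomposes, in each variable, into the polar part at the zeros $\{p_1,\dots,p_N\}$ of $dx$ (forced by the linear and quadratic loop equations) and a part holomorphic at these points, the latter being the free data, the blob. Comparing the logarithmic projection property \eqref{eq:LogProjection} with the projection property of original TR, one sees that the two coincide for all $m\geq 2$, while for $m=1$ they differ precisely by the prescribed principal parts at the LogTR-vital points $a_i$ coming from \eqref{eq:PrincipalPartsLog}. Hence the blob of LogTR is supported entirely on the one-point correlators, and it equals the differential $\phi_g(z_1)$: indeed, $\phi_g$ is exactly the meromorphic differential with vanishing $\mathfrak{A}$-periods whose only singularities are at the $a_i$, with principal parts matching \eqref{eq:PrincipalPartsLog}, reconstructed through the Bergman kernel. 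The subtraction $\bigl(\tfrac{1}{\cS(\alpha_i\hbar\partial_x)}-1\bigr)$ removes the genus-zero ($\hbar^0$) term, which is already carried by ${}^{\log{}}\omega^{(0)}_1=-y\,dx$; in particular $\phi_0=0$, consistent with ${}^{\log{}}\omega^{(0)}_m=\omega^{(0)}_m$.

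Next I would specialize the master formula. The reconstruction formula of \cite{BS-blobbed} expresses each blobbed correlator as a sum over graphs whose vertices carry the pure TR correlators (those of zero blob, here $\omega^{(g)}_m$) and whose blob insertions are glued into these vertices by the operation that, upon passing to the $\hbar$-graded pieces, is exactly the operation $\omega(t)*\phi(t)=\sum_{i}\res_{t=p_i}\omega(t)\int_{p_i}^t\phi$ appearing in the statement. Because the LogTR blob lives only on one-point functions, every blob insertion in this graph sum is univalent: it contributes a single factor $*\phi_{g_i}(t_i)$ attached to one core correlator $\omega^{(g_0)}_{m+n}(z_{\set{m}},t_{\set{n}})$. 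The graph sum therefore collapses to a sum over the number $n$ of such insertions, the interchangeability of the $n$ identical-type univalent insertions producing the symmetry factor $1/n!$ (rather than a general $1/|\mathrm{Aut}|$); the constraint $g_0+g_1+\dots+g_n=g$ with $g_1,\dots,g_n\geq 1$ records that each inserted blob $\phi_{g_i}$ has positive genus, and the bare term $\delta_{m,1}\phi_g(z_1)$ is the $n=0$, $g_0=0$, $m=1$ contribution. This is precisely \eqref{eq:log-omega-in-ordinary-omega}.

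Finally I would derive \eqref{eq:ordinary-omega-in-log-omega} as the inverse of \eqref{eq:log-omega-in-ordinary-omega}. Writing $I$ for the single-insertion operator $\eta\mapsto\sum_{g'\geq1}\eta(z_{\set{m}},t)*\phi_{g'}(t)$ (with the genus grading understood), formula \eqref{eq:log-omega-in-ordinary-omega} reads ${}^{\log{}}\omega_m=\delta_{m,1}\phi_g+\bigl(e^{I}\omega\bigr)_m$, the $1/n!$ being the exponential coefficient for $n$ commuting insertions at the symmetric points $t_1,\dots,t_n$. Since distinct insertions act on distinct inserted arguments and the correlators are symmetric, these insertions commute, so $e^{-I}e^{I}=\mathrm{id}$; moreover $I$ annihilates the bare one-point blob $\delta_{m,1}\phi_g$, because inserting requires a two-point extension which the blob does not possess. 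Substituting \eqref{eq:log-omega-in-ordinary-omega} into $-\delta_{m,1}\phi_g+\bigl(e^{-I}{}^{\log{}}\omega\bigr)_m$ therefore returns $\omega_m$, which is exactly \eqref{eq:ordinary-omega-in-log-omega}, the sign of the insertions being the source of the factor $(-1)^n$.

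The main obstacle I expect is the bookkeeping in the specialization of \cite[Equation~2.7]{BS-blobbed}: one must verify that the reconstruction operator of \cite{BS-blobbed} reduces, genus by genus, to the operation $*$ defined here, and that for a blob supported only on one-point functions the general graph sum degenerates to the stated single-core, univalent-insertion form with the correct symmetry factor. Once this identification and the exact matching of $\phi_g$ with the principal parts \eqref{eq:PrincipalPartsLog} (including the sign and the genus-zero subtraction) are in place, both the forward formula and the formal inversion are routine.
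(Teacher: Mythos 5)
Your proposal is correct and takes essentially the same route as the paper: it identifies LogTR as an instance of blobbed topological recursion whose blob $\varphi_{g,m}=\mathcal{H}_1\cdots\mathcal{H}_m\bigl({}^{\log{}}\omega^{(g)}_m\bigr)$ is supported only on $m=1$ with $\varphi_{g,1}=\phi_g$ (and $\phi_0=0$), and then specializes \cite[Equation~2.7]{BS-blobbed}. The only difference is one of explicitness: you spell out the collapse of the graph sum to univalent insertions with the $1/n!$ symmetry factor and give the formal inversion $e^{-I}e^{I}=\mathrm{id}$ (using that $I$ annihilates the bare term $\delta_{m,1}\phi_g$) to obtain \eqref{eq:ordinary-omega-in-log-omega}, details the paper's proof delegates to the cited reference.
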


\begin{remark} Equations~\eqref{eq:log-omega-in-ordinary-omega} and \eqref{eq:ordinary-omega-in-log-omega} have a finite number of summands on the right hand side because of the conditions $g_1,\dots,g_n\geq 1$. In particular, $\sum_{n=0}^\infty$ can be replaced by $\sum_{n=0}^g$.
\end{remark}

\begin{remark} Equation~\eqref{eq:log-omega-in-ordinary-omega} also manifestly implies that ${}^{\log{}}\omega^{(0)}_m=\omega^{(0)}_m$ for all $m$.
\end{remark}

\begin{proof}[Proof of Proposition~\ref{prop:log-omega-in-ordinary-omega}] We just have to match the notation of LogTR considered as an instance of blobbed topological recursion to the one used in \cite[Equation~2.7]{BS-blobbed}, and we refer the reader to op. cit. for details.

The projection operator $\cP$ there acts on $1$-forms as $\cP\colon \psi(z) \mapsto \sum_{i=1}^N \res_{t\to p_i} \psi(t) \int^t B(z,t')$, and $\mathcal{H} \psi(z) \coloneqq \psi(z) - \cP \psi(z)$. In the case of LogTR, if $2g-2+m>0$, then
\begin{align}
	\cP_1\cdots\cP_m ({}^{\log{}}\omega^{(g)}_m(z_{\set{m}}) )=
	\begin{cases}
		{}^{\log{}}\omega^{(g)}_m(z_{\set{m}}), & m\geq 2; \\
		{}^{\log{}}\omega^{(g)}_1(z_1) - \phi_g(z_1), & m=1.
	\end{cases}
\end{align}
Thus $\varphi_{g,m}\coloneqq \mathcal{H}_1\cdots\mathcal{H}_m ({}^{\log{}}\omega^{(g)}_m)$ (the ingredients of \cite[Equation~2.7]{BS-blobbed}) are nontrivial only for $m=1$, and in the latter case $\varphi_{g,1}=\phi_g$. With this match of notation in mind, Equation~\eqref{eq:log-omega-in-ordinary-omega} coincides with \cite[Equation~2.7]{BS-blobbed}.
\end{proof}

Note that there is a general relationship between intersections of classes on the moduli spaces of curves and topological recursion~\cite{Eynard-intersections}. Under extra assumptions and in the properly chosen basis of differentials on $\Sigma$ the classes on the moduli spaces of curves form a cohomological field theory~\cite{DOSS}, see also~\cite{DNOPS-1,DNOPS-2}, not necessarily with a flat unit~\cite{LPSZ}. There is a general recipe how to recompute the classes on the moduli spaces of curves in the presence of blobs; it is described in~\cite[Section 3]{BS-blobbed} and it can be easily adapted to LogTR / original TR correspondence described in Proposition~\ref{prop:log-omega-in-ordinary-omega}. We remark here that in the special case of $dx$ having one zero, the recipe in~\cite[Section 3, Theorem 3.9]{BS-blobbed} reduces to an insertion of an extra exponential factor of $\kappa$-classes, and we indeed see this effect in the example described in Section~\ref{sec:kappaInsertion}.

\section{Proof of the main theorem}\label{sec:proofs}

The proof of Theorem~\ref{thm:LogXYSwap} follows similar lines to the proofs of~\cite[Theorems 1.8 and 1.14]{alexandrov2022universal}. In order to prove it, we have to analyze the local behavior of the right hand side of Equations~\eqref{eq:MainFormulaSimple} and~\eqref{eq:MainFormulaInverse} near the special points, that is, near the zeros of $dx$ and $dx^\vee$, near the diagonals $z_i=z_j$, and near the points of logarithmic singularities of $y$ and $y^\vee$. Note that Equations~\eqref{eq:MainFormulaSimple} and~\eqref{eq:MainFormulaInverse} imply that there are no other points where $\{{}^{\log{}}\omega^{(g)}_n\}$ might have singularities given that $\{{}^{\log{}}\omega^{\vee,(g)}_n\}$ satisfy the logarithmic projection property and vice versa, so once we complete the analysis at these points, we complete the proof.

We assume in the proof that $x$ and $y$ satisfy assumptions of Theorem~\ref{thm:LogXYSwap}, that is, $dx$ and $dy$ are meromorphic, have simple zeros only, $dy$ is regular and non-vanishing at zeros of $dx$, an $dx$ is regular and non-vanishing at zeros of~$dy$. Note that in Lemmas \ref{lem:expectedpoles} and \ref{lem:nopoles} we do not make these assumptions because these lemmas work in a more general setup.

In this section we always assume that $\{{}^{\log{}}\omega^{\vee,(g)}_n\}$ are \textit{defined} via formula \eqref{eq:MainFormulaSimple} through $\{{}^{\log{}}\omega^{(g)}_n\}$. Starting from Lemma \ref{lem:hodypoles} we will always assume that $\{{}^{\log{}}\omega^{(g)}_n\}$  satisfy LogTR on the spectral curve $(\Sigma,x,y)$.


\subsection{Loop equations and poles}\label{sec:loopeqpole}


 Note that \cite[Propositions 4.6 and 5.7]{alexandrov2022universal} hold for our LogTR case in a completely analogous way, since the projection property was not used in those propositions. We have the following corollary:

\begin{lemma} \label{lem:expectedpoles} If the differentials  $\{{}^{\log{}}\omega^{(g)}_n\}$ are regular on the diagonals for $(g,n)\ne(0,2)$ then the differentials $\{{}^{\log{}}\omega^{\vee,(g)}_n\}$ defined by~\eqref{eq:MainFormulaSimple} are also regular on the diagonals for $(g,n)\ne(0,2)$, the right hand side of~\eqref{eq:MainFormulaInverse} is well defined and~\eqref{eq:MainFormulaInverse} holds true.

Furthermore, assume that the zeros of $dx$ and $dx^\vee$ are simple and disjoint and $\{{}^{\log{}}\omega^{(g)}_n\}$ and $\{{}^{\log{}}\omega^{\vee,(g)}_n\}$ are related by (mutually inverse) Equations~\eqref{eq:MainFormulaSimple} and~\eqref{eq:MainFormulaInverse}. Then
$\{{}^{\log{}}\omega^{(g)}_n\}$ (respectively, $\{{}^{\log{}}\omega^{\vee,(g)}_n\}$) satisfy the linear and quadratic loop equations at the zeros of $dx$ (respectively, $dx^\vee$) if and only if $\{{}^{\log{}}\omega^{\vee,(g)}_n\}$ (respectively, $\{{}^{\log{}}\omega^{(g)}_n\}$) are regular at the zeros of $dx$ (respectively, $dx^\vee$).
\end{lemma}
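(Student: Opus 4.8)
The plan is to derive both assertions directly from \cite[Propositions 4.6 and 5.7]{alexandrov2022universal}, exploiting the fact that those statements were established purely at the level of the formal swap relations \eqref{eq:MainFormulaSimple}--\eqref{eq:MainFormulaInverse} and never invoked the projection property. Since the \emph{only} ingredient distinguishing LogTR from the ordinary topological recursion is the projection property --- compare the logarithmic projection property \eqref{eq:LogProjection} with its ordinary counterpart, the linear and quadratic loop equations \eqref{eq:LLE-original}--\eqref{eq:QLE-original} being literally the same in both settings --- the proofs of those two propositions transfer verbatim once one checks that no piece of the logarithmic data enters the relevant local computations.

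For the first part I would argue as follows. The right-hand sides of \eqref{eq:MainFormulaSimple} and \eqref{eq:MainFormulaInverse} are formal expressions assembled from the differentials, their $x$- (respectively $x^\vee$-) derivatives, and the regularized kernels ${}^{\log{}}\tilde\omega^{(0)}_2$; the sole source of poles along a diagonal $z_i=z_j$ is the Bergman kernel inside $\omega^{(0)}_2$, which the regularization subtracts off. The analysis in Proposition 4.6 showing that the swap preserves regularity on the diagonals, that the right-hand side of \eqref{eq:MainFormulaInverse} is well defined, and that \eqref{eq:MainFormulaSimple} and \eqref{eq:MainFormulaInverse} are mutually inverse, is a manipulation of these building blocks that is insensitive to the global origin of the inputs. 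Hence, assuming only regularity of $\{{}^{\log{}}\omega^{(g)}_n\}$ on the diagonals for $(g,n)\ne(0,2)$, the identical computation yields regularity of $\{{}^{\log{}}\omega^{\vee,(g)}_n\}$ on the diagonals and the involutivity, with no extra hypothesis on the zeros of $dx$.

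For the second part I would localize near a simple zero $p_i$ of $dx$ and work with the local deck transformation $\sigma_i$. The linear and quadratic loop equations are conditions on the $\sigma_i$-symmetrized combinations \eqref{eq:LLE-original}--\eqref{eq:QLE-original}, exactly as in \cite{alexandrov2022universal}. Proposition 5.7 tracks how these symmetrized combinations propagate through the swap formula and establishes the claimed equivalence between the loop equations holding at $p_i$ and the regularity of the dual differentials there. Under the standing assumption that the zeros of $dx$ and $dx^\vee$ are simple and disjoint, and since every LogTR-vital point $a_j$ is a pole of $dy$ and therefore lies away from the zeros of $dx$, the logarithmic corrections to the one-point differentials and to the projection are supported away from $p_i$; consequently the local expansion at $p_i$ is literally the one analyzed in the meromorphic case, and the equivalence follows. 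The symmetric statement at the zeros of $dx^\vee$ is obtained by exchanging the roles of the two systems via the first part.

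The one point that genuinely needs checking --- the content of the phrase \emph{completely analogous} --- is that none of the logarithmic data leaks into these two local analyses. This reduces to two observations: (i) the definitions of the linear and quadratic loop equations are word-for-word those of \cite{alexandrov2022universal}; and (ii) the modifications carried by LogTR, namely the extra principal parts of ${}^{\log{}}\omega^{(g)}_1$ at the points $a_j$ and the correction term in \eqref{eq:LogProjection}, are localized at the poles of $dy$ and hence never appear in the formal steps near the zeros of $dx$, $dx^\vee$ or along the diagonals. I expect (ii) to be the only mildly delicate verification, but it is immediate from the disjointness --- built into the hypotheses --- of the locus of logarithmic singularities of $y$ and the zeros of $dx$.
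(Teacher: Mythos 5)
Your proposal matches the paper's own proof: the paper likewise derives this lemma directly from \cite[Propositions 4.6 and 5.7]{alexandrov2022universal}, observing that those propositions never invoke the projection property --- the only place where LogTR differs from ordinary TR --- so their proofs carry over verbatim, with the first part of the lemma coming from Proposition 4.6 and the second from Proposition 5.7. Your additional check that the logarithmic corrections are supported at the poles of $dy$, away from the zeros of $dx$ and $dx^\vee$, is exactly the implicit content of the paper's phrase ``in a completely analogous way,'' so nothing in your argument deviates from or adds to the paper's route.
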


This proves that once $\{{}^{\log{}}\omega^{(g)}_n\}$ satisfy LogTR and $\{{}^{\log{}}\omega^{\vee,(g)}_n\}$ are given by Equation~\eqref{eq:MainFormulaSimple}, then $\{{}^{\log{}}\omega^{\vee,(g)}_n\}$ satisfy the blobbed topological recursion, and vice versa. It remains to check the logarithmic projection property. To this end, we have to prove that $\{{}^{\log{}}\omega^{\vee,(g)}_n\}$ given by Equation~\eqref{eq:MainFormulaSimple} are not singular at the poles of $dy$ and that they have the prescribed singular behavior at the poles  of $dy^\vee$. This is done in the next two subsections.

\subsection{Elimination of logarithmic singularities}\label{sec:elim_log}


Let us show that the dual differentials $\{{}^{\log{}}\omega^{\vee,(g)}_n\}$ are regular at the poles of~$dy$. We start with a lemma related to the case of a LogTR-vital pole, where the suitable $\hbar$-deformation of the function~$y$ saves the day.

\begin{lemma}\label{lem:nopoles}
Let $z$ be a local coordinate near a point of the spectral curve (s.t. $z=0$ at this point). Assume that $dy$ has a simple pole at this point so that
\begin{equation}
y=\frac{1}{q}\log(z)+(\text{holomorphic)},\quad
\partial_y=(q\,z+o(z))\partial_z
\end{equation}
for some constant $q\ne0$. Assume also that the vector field $\partial_x$ is holomorphic and non-vanishing at this point. Then, for an arbitrary function $f$ polynomial in $u$ and holomorphic in~$z$, the following $1$-form
\begin{equation} \label{eq:omegaveeonedypolesholo}
dy \sum_{r=0}^\infty\partial_y^r\;[u^r]\frac{dx}{dy}e^{-u\bigl(\frac{\cS(u\hbar\partial_x)}{\cS(q\hbar\partial_x)}-1\bigr)\frac{\log(z)}{q}}f(u,z)
\end{equation}
is holomorphic at $z=0$.
\end{lemma}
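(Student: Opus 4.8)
The plan is to reduce the claimed holomorphicity to a finite list of coefficient vanishings and then to establish each of them by exploiting a collapse of the $\hbar$-deformation operator at special values of $u$. First I would pass to a local coordinate adapted to $\partial_y$. Since $dy$ has a simple pole with $\partial_y=(qz+o(z))\partial_z$ and $q\ne0$, the holomorphic vector field $\partial_y$ is linearizable: there is a local coordinate $\zeta$ with $\zeta(0)=0$, $\zeta'(0)\ne0$, such that $\partial_y=q\zeta\partial_\zeta$, hence $dy=d\zeta/(q\zeta)$ and $y=q^{-1}\log\zeta+\mathrm{const}$. In this coordinate $dx/dy=q\zeta/\tilde a(\zeta)$ with $\tilde a(0)\ne0$ (because $\partial_x$ is holomorphic and non-vanishing), so $dx/dy$ has a simple zero, and $z=z(\zeta)$ is holomorphic with a simple zero at $\zeta=0$. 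Writing the integrand of \eqref{eq:omegaveeonedypolesholo} as $G(u,\zeta)=\tfrac{dx}{dy}e^{\Phi}f$ with $\Phi=-u\bigl(\tfrac{\cS(u\hbar\partial_x)}{\cS(q\hbar\partial_x)}-1\bigr)q^{-1}\log z$, I expand $G=\sum_k c_k(u)\zeta^k$ (Laurent in $\zeta$, formal in $\hbar$, and polynomial in $u$ at each order in $\hbar$). Because $\partial_y^r\zeta^k=(qk)^r\zeta^k$, the operator $\sum_r\partial_y^r[u^r]$ collapses termwise to the substitution $u\mapsto qk$, giving $\sum_r\partial_y^r[u^r]G=\sum_k\zeta^k\,c_k(qk)$. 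After multiplying by $dy=d\zeta/(q\zeta)$, the form \eqref{eq:omegaveeonedypolesholo} is holomorphic at $\zeta=0$ exactly when $c_k(qk)=0$ for every $k\le0$; this is the statement I would prove.

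Second, I would analyze $c_k(qk)=[\zeta^k]\bigl(\tfrac{q\zeta}{\tilde a}\,e^{\Phi}\big|_{u=qk}\,f(qk,\zeta)\bigr)$. The crucial observation is that, writing $S_a:=e^{a\partial_x}$, the deformation ratio collapses,
\[
\frac{\cS(u\hbar\partial_x)}{\cS(q\hbar\partial_x)}=\frac qu\,\frac{S_{u\hbar/2}-S_{-u\hbar/2}}{S_{q\hbar/2}-S_{-q\hbar/2}},
\]
the spurious $\partial_x^{-1}$ cancelling between numerator and denominator. Setting $u=qk$ and $T:=S_{q\hbar/2}$ turns this into $\tfrac1k\tfrac{T^k-T^{-k}}{T-T^{-1}}$, where for every integer $k\le0$ the operator $\tfrac{T^k-T^{-k}}{T-T^{-1}}=\sum_jc_jT^j$ is a \emph{Laurent polynomial} in $T$ with coefficients $c_j\in\{0,-1\}$ supported on the symmetric index set $\{k+1,k+3,\dots,-(k+1)\}$. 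Consequently $\Phi|_{u=qk}=k\log z-\sum_jc_j\log z(\Psi_{jq\hbar/2}\zeta)$, which exponentiates to the honest product $e^{\Phi}|_{u=qk}=z(\zeta)^{k}\prod_j z(\Psi_{jq\hbar/2}\zeta)^{-c_j}$ with nonnegative integer exponents $-c_j$, where $\Psi_s$ denotes the time-$s$ flow of $\partial_x$.

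Third, I would bound the lowest $\zeta$-power. Each shifted factor equals $e^{s\partial_x}z=z+\sum_{m\ge1}\tfrac{s^m}{m!}\partial_x^m z$ with $s=\tfrac{jq\hbar}2$, and since $\partial_x$ and $z$ are holomorphic, every $\partial_x^m z$ with $m\ge1$ is holomorphic in $\zeta$; hence, order by order in $\hbar$, each factor $z(\Psi_s\zeta)^{-c_j}$ contributes only $\zeta$-powers $\ge0$. The sole genuinely negative contribution is $z(\zeta)^{k}\sim\zeta^{k}$, so $e^{\Phi}|_{u=qk}$ has lowest $\zeta$-power $\ge k$ at every order in $\hbar$. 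Multiplying by $\tfrac{q\zeta}{\tilde a}f(qk,\zeta)$, which has a simple zero at $\zeta=0$ and is otherwise holomorphic, raises the lowest power to $\ge k+1$, so $[\zeta^k]=0$ and therefore $c_k(qk)=0$ for all $k\le0$, as required.

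The main obstacle is the bookkeeping hidden in the original coordinate: order by order in $\hbar$ the expansion of $e^{\Phi}$ carries poles of unbounded order at $z=0$, and it is not at all manifest that the pairing of $[u^r]$ with $\partial_y^r$ cancels them. Two structural inputs defuse this. The first is linearizing $\partial_y$ so that $\sum_r\partial_y^r[u^r]$ becomes the exact evaluation $u\mapsto qk$ (this is precisely where $q\ne0$ enters). The second is the collapse of the deformation operator to a finite Laurent polynomial in the shift $e^{q\hbar\partial_x/2}$ exactly at $u=qk$, a feature entirely due to the $\cS(q\hbar\partial_x)^{-1}$ correction, i.e.\ to the specific form of the LogTR deformation \eqref{eq:PrincipalPartsLog}. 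I expect the only genuinely computational point to be checking that the mixed factors $z(\Psi_s\zeta)$ never lower the $\zeta$-order, which follows at once from the holomorphicity of $\partial_x$; the rest is organizational.
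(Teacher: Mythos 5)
Your proof is correct, but it takes a genuinely different route from the paper's. The paper normalizes $x$ rather than $y$: it takes $x=z$ as the local coordinate, rescales to $q=1$, and then \emph{imports} the key coefficient identity — by \cite[Lemmas 4.1 and 4.2]{bychkov2021topological}, the coefficient of $z^{-k}$, $k\ge 0$, in $z\,e^{-u\bigl(\cS(u\hbar\partial_z)/\cS(\hbar\partial_z)-1\bigr)\log z}$ is divisible by $u(u+1)\cdots(u+k)$ — and finishes by an induction in which $u+k$ is replaced by $\partial_y+k$ one factor at a time. In that normalization $\partial_y=(z+o(z))\partial_z$ is only \emph{approximately} the Euler field, so each replacement produces error terms of lower pole order, and that is exactly why the paper needs the full string of roots $u=0,-1,\dots,-k$ rather than a single one. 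You normalize $y$ instead: linearizing $\partial_y$ (in one variable this is elementary, e.g.\ $\zeta=e^{qy}$ works) makes $\sum_r\partial_y^r[u^r]$ act as exact evaluation $u\mapsto qk$ on the $\zeta^k$-mode, so only the single root $u=qk$ per mode with $k\le0$ is needed — and you then prove that vanishing yourself, via the collapse of $\cS(qk\hbar\partial_x)/\cS(q\hbar\partial_x)$ into a finite Laurent polynomial in the shift $T=e^{q\hbar\partial_x/2}$, which turns $e^{\Phi}\big|_{u=qk}$ into the honest product $z^k\prod_j \bigl(z\circ\Psi_{jq\hbar/2}\bigr)$ whose $\zeta$-order is bounded below by $k$ at every order in $\hbar$. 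The mechanics check out: evaluation at $u=qk$ commutes with $[\zeta^k]$ because each $\hbar$-coefficient is polynomial in $u$; the integer exponents legitimize the exponentiation even though $\log(z\circ\Psi_s)$ has poles order by order in $\hbar$; and your sanity checks are consistent (e.g.\ at $k=-1$ the product degenerates to $1$, matching $\cS$ being even). What each approach buys: the paper's argument is shorter given the external citation (and the cited divisibility rests on essentially the same integer-specialization mechanism you use), while yours is self-contained, needs only one root per mode instead of the full divisibility, and makes transparent why precisely the $\cS(q\hbar\partial_x)^{-1}$ correction of~\eqref{eq:PrincipalPartsLog} is the right deformation: at $u\in q\mathbb{Z}_{\le0}$ it degenerates to finitely many flows of the holomorphic field $\partial_x$, which cannot lower the order of vanishing.
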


\begin{proof}

Since we have assumed that $\partial_x$ is holomorphic and non-vanishing at the origin,
$x$ itself can be taken as a local coordinate. So we can assume that $x=z$ without loss of generality. The case of a general nonzero value $q$ can be reduced to the case $q=1$ by a simple rescaling arguments. So it is sufficient to consider the case $q=1$ without loss of generality. Consider the coefficient of any power of $\hbar$ in the following function
\begin{equation}
z e^{-u\bigl(\frac{\cS(u\hbar\partial_z)}{\cS(\hbar\partial_z)}-1\bigr)\log(z)}.
\end{equation}
It is a Laurent series in~$z$, and the coefficient of any power in~$z$ is a polynomial in~$u$ and a monomial in $\hbar$. Moreover, it is proved in~\cite[Lemma 4.1 and 4.2]{bychkov2021topological} that the coefficient of $z^{-k}$ for $k\ge0$ is divisible by $u(u+1)\dots(u+k)$.  This property of the Laurent monomials entering the series is obviously preserved when we multiply this series by a holomorphic function~$z^{-1}\frac{dz}{dy}f(z,u)$.

Next, we need to replace any appearance of $u$ in this coefficient by the operator $\partial_y=(z+o(z))\partial_z$. Let us do it just once, replacing $u+k$ by $\partial_y+k$ in the last factor. Then we obtain a linear combination of similar monomials but with smaller exponents~$k$. Proceeding by induction, we obtain that all non-positive powers of $z$ cancel out. It follows that any such monomial is annihilating by the operator $\sum_{r=0}^\infty \partial_y^r[u^r]\cdot$. So, the result of application of this operator is a holomorphic function in~$z$ vanishing at $z=0$. Multiplying the result by $dy$ having a simple pole we get a $1$-form which is still holomorphic.
\end{proof}

Now let us discuss the case of the poles of $dy$ of higher order.
\begin{lemma}\label{lem:hodypoles}
	Let $x$ and $y$ satisfy the assumptions of Theorem~\ref{thm:LogXYSwap}.
	
	Let $z$ be a local coordinate near a point of the spectral curve (s.t. $z=0$ at this point). Assume that $dy$ has a pole of order greater than one at this point, so that
	\begin{equation}
		dy=\left(\dfrac{1}{\alpha z^{m}} + o\left(\dfrac{1}{z^{m}}\right)\right)dz,\quad 
		\partial_y=(\alpha\,z^m+o(z^m))\partial_z
	\end{equation}
	for $m\in \mathbb{Z}_{\geq 2}$ and some constant $\alpha\ne0$. Assume also that $dx$ does not have a pole at $z=0$.
	Then the right hand side of Equation~\eqref{eq:MainFormulaSimple} is holomorphic at $z=0$.
\end{lemma}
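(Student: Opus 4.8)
The plan is to establish holomorphicity of the right-hand side of~\eqref{eq:MainFormulaSimple} in one variable at a time as it approaches the pole $z=0$ of $dy$; since the right-hand side is symmetric in $z_1,\dots,z_n$ it suffices to treat the first variable $z_1=z$. First I would record the local geometry. The assumptions of Theorem~\ref{thm:LogXYSwap} force $dx$ to be non-vanishing at the zeros of $dy$, and the hypothesis of the lemma excludes a pole of $dx$ at $z=0$; hence $dx$ is holomorphic and non-vanishing there, so $x$ may be taken as the local coordinate and $\partial_x=\partial_z$ after the harmless normalization $x=z$. Since $m\ge 2$, integrating $dy$ produces \emph{no} logarithm: $y$ has a genuine pole of order $m-1$ at $z=0$, the ratio $dx/dy$ vanishes to order exactly $m$, and the vector field $\partial_y=(\alpha z^{m}+o(z^{m}))\partial_z$ \emph{raises} the order of every monomial $z^{a}$ (with $a\ne 0$) by $m-1\ge 1$. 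This last feature is precisely what will make the higher-order case strictly easier than the simple-pole case of Lemma~\ref{lem:nopoles}.

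Next I would localize the problem inside the graph sum. Fixing a graph $\Gamma$, I look at the factor attached to vertex~$1$. Because $z=0$ is neither a zero of $dx$, nor a diagonal, nor (being a pole of order $\ge 2$) a LogTR-vital point, all correlators ${}^{\log{}}\omega^{(\tilde g)}_{|e|}$ labelling the edges incident to vertex~$1$, together with the factors coming from the remaining vertices, are holomorphic in $z_1$ at $z_1=0$ and, per power of $\hbar$, polynomial in $u_1$; the operators $\cS(\hbar u_j\partial_{x_j})$ preserve this holomorphicity since $\partial_x$ is a holomorphic non-vanishing field. Collecting these into a single factor $h(u_1,z_1)$, holomorphic in $z_1$ and polynomial in $u_1$ per power of $\hbar$, the vertex-$1$ contribution takes the shape $\sum_{k\ge0}\partial_y^{k}\bigl(\tfrac{dx}{dy}\,[u^{k}]P\bigr)$ with $P=\tfrac1u e^{E}h$, where $E=E_{\mathrm{sing}}+E_{\mathrm{hol}}$. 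Here $E_{\mathrm{hol}}=u\,\cS(\hbar u\partial_x)\sum_{\tilde g\ge1}\hbar^{2\tilde g}\,{}^{\log{}}\omega^{(\tilde g)}_1/dx$ is holomorphic at $z=0$, because, under the standing assumption that $\{{}^{\log{}}\omega^{(g)}_n\}$ satisfies LogTR, the differentials ${}^{\log{}}\omega^{(\tilde g)}_1$ with $\tilde g\ge1$ have no pole at the non-vital point $z=0$; the only singular contribution is $E_{\mathrm{sing}}=u\bigl(\cS(\hbar u\partial_x)-1\bigr)(-y)$, coming from ${}^{\log{}}\omega^{(0)}_1=-y\,dx$. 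Absorbing $e^{E_{\mathrm{hol}}}h$ into $h$, everything reduces to bounding the $z$-pole order of $[u^{k}]P$.

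The heart of the argument is then a degree-versus-pole-order count for $e^{E_{\mathrm{sing}}}$. Writing $\cS(\hbar u\partial_x)-1=\sum_{\ell\ge1}\tfrac{(\hbar u)^{2\ell}}{4^{\ell}(2\ell+1)!}\partial_z^{2\ell}$, each elementary factor of $E_{\mathrm{sing}}$ carries $u$-degree $2\ell+1$ and, since $\partial_z^{2\ell}(-y)$ has a pole of order at most $(m-1)+2\ell$, a $z$-pole of order at most the same. A product of $s$ such factors therefore has $u$-degree $k'=2L+s$ (with $L=\sum_j\ell_j$) and $z$-pole order at most $s(m-1)+2L=s(m-2)+k'$, while $\ell_j\ge1$ forces $s\le k'/3$; hence the coefficient of $u^{k'}$ in $e^{E_{\mathrm{sing}}}$, per power of $\hbar$, has $z$-pole order at most $k'(m+1)/3$. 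Consequently $[u^{k}]P$ has $z$-pole order $p\le(k+1)(m+1)/3$. Multiplying by $dx/dy$ (a zero of order $m$) and applying $\partial_y^{k}$ (which raises the minimal $z$-power by $k(m-1)$) yields a Laurent series whose minimal power is at least
\begin{equation*}
m-p+k(m-1)\ \ge\ m-\tfrac{(k+1)(m+1)}{3}+k(m-1)\ =\ \tfrac{2m-1+2k(m-2)}{3}\ \ge\ 0
\end{equation*}
for all $m\ge2$ and $k\ge0$. Thus each term $\partial_y^{k}\bigl(\tfrac{dx}{dy}[u^{k}]P\bigr)$ is holomorphic at $z=0$; summing over $k$ (a finite sum once the coefficient of a fixed power of $\hbar$ is extracted) and then over graphs $\Gamma$ gives holomorphicity of the whole right-hand side.

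The main obstacle, and the only genuinely delicate point, is the combinatorial bookkeeping that ties together the $u$-degree, the $\hbar$-degree, and the $z$-pole order in $e^{E_{\mathrm{sing}}}$; once that is in place the final inequality is elementary and uses $m\ge2$ crucially. It is instructive to contrast this with the simple-pole situation of Lemma~\ref{lem:nopoles}: there $\partial_y$ does \emph{not} raise $z$-powers, so a mere pole-order bound is insufficient and one must invoke the sharper divisibility of the relevant Laurent coefficients by $u(u+1)\cdots(u+k)$ from~\cite[Lemmas~4.1 and~4.2]{bychkov2021topological}. For poles of order $m\ge2$ the strict raising of powers by $\partial_y$ makes the cruder estimate suffice, which is the structural reason why such higher-order poles contribute nothing and why they are excluded from the logarithmic projection property~\eqref{eq:LogProjection}.
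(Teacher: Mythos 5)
Your proof is correct and takes essentially the same route as the paper's: both identify that the only singular sources at $z=0$ are the factor $dx/dy$ and the terms $u(\cS(\hbar u\partial_x)-1)(-y)$ in the exponent (all other ingredients being regular since a pole of order $m\ge 2$ is not LogTR-vital), and both conclude by a $u$-degree versus $z$-pole-order count showing that the substitution $u\mapsto\partial_y$, which raises orders by $m-1$, together with the order-$m$ zero of $dx/dy$, kills every pole. The only differences are bookkeeping: the paper tracks the exact exponent $1-s+2(m+s-2)\sum r_i$ per monomial while carrying a general pole order $s\ge 0$ of $dx$ so the same computation can be reused in Lemmas~\ref{lem:logemerg} and~\ref{lem:hodxpoles}, whereas you specialize to $s=0$ (as the hypothesis allows) and use the aggregated bound $p\le (k+1)(m+1)/3$, which is cruder but equally sufficient.
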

\begin{proof}
	We follow~\cite[Section~5.3]{alexandrov2022universal}. 
	
 For a while let us forget about the assumption that $dx$ is regular at $z = 0$ and consider
	\begin{equation}
		dx= \left(\dfrac{1}{\beta z^s}+o\left(\dfrac{1}{z^s}\right)\right)dz,
	\end{equation}
	where $s\in\mathbb{Z}_{\geq 0}$  ($s$ cannot be negative since under our assumptions poles of $dy$ cannot coincide with zeros of $dx$) 
	and $\beta\neq 0$ is some constant. We waive the assumption in order to refer to the following more general computation in the subsequent lemmas.
 	
	Note that since $z=0$ is not a LogTR-vital singularity of $y$, the only way a pole at this point enters~\eqref{eq:MainFormulaSimple} is through ${}^{\log{}}\omega^{(0)}_1$ in the second line and through the factor $dx/dy$ if $z=0$ is simultaneously a pole of $dx$; moreover ${}^{\log{}}\omega^{(0)}_1$ is being acted upon by $\hbar u \partial_x$ at least twice.
	

	
	Consider a term in the expansion of the exponential in the second line of~\eqref{eq:MainFormulaSimple} coming from the product of the $r_1$-th, \dots, $r_\ell$-th terms in the expansion of $\mathcal{S}$. For the $r_i$-th term $expr_{r_i}$ coming from the expansion of $\mathcal{S}$ in the exponent we have
	\begin{equation}
		expr_{r_i} \sim u^{2r_i+1} (\partial_x)^{2r_i} \dfrac{1}{z^{m-1}} \sim u^{2r_i+1} z^{-m+1+2r_i(s-1)}.
	\end{equation}
	For the product of these terms in the expansion of the exponential (which is also divided by $u$) we have
	\begin{align}
		1/u \prod_{i=1}^\ell expr_{r_i} \sim u^{\ell-1+ 2\sum r_i } z^{-\ell(m-1)+2(s-1)\sum r_i}.
	\end{align}
	For the factor $dx/dy$ we have $dx/dy\sim z^{m-s}$. After that note that $u$ gets replaced by $\partial_y\sim z^m \partial_z$, which means that the whole expression is
	\begin{align}\label{eq:ypoles}
		\sim z^{(m-1)(\ell-1+ 2\sum r_i )} z^{-\ell(m-1)+2(s-1)\sum r_i} z^{m-s} = z^{1-s+2(m+s-2)\sum r_i}.
	\end{align}
	Since $m\geq 2$ and $s\geq 0$ this is regular for $\sum r_i\geq 1$.
	
	If we recall our assumption that $dx$ does not have a pole at $z=0$ then we are done, as $s=0$ in this case. However, keep the full formula~\eqref{eq:ypoles} in mind, as we will use it later for analyzing the poles of $dx$.
\end{proof}

Having the previous two lemmas and the result from Section~\ref{sec:loopeqpole}, we can state the following
\begin{lemma}\label{lem:loopandpoles}
Let $x$ and $y$ satisfy the assumptions of Theorem~\ref{thm:LogXYSwap}, and let $\{{}^{\log{}}\omega^{(g)}_n\}$	satisfy the LogTR on the respective spectral curve with ${}^{\log{}}\omega^{(0)}_1 \coloneqq -ydx$ and ${}^{\log{}}\omega^{(0)}_2\coloneqq B$.

Then $\{{}^{\log{}}\omega^{\vee,(g)}_n\}$ given by formula~\eqref{eq:MainFormulaSimple} are (for $2g-2+n>0$) regular at all poles of $dy$ which are not simultaneously poles of $dx$. 
\end{lemma}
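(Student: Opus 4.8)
The plan is to dispose of the poles of $dy$ by their order, reducing everything to the two preceding lemmas. Let $a$ be a pole of $dy$ that is not a pole of $dx$. By the assumptions of Theorem~\ref{thm:LogXYSwap} a pole of $dy$ cannot sit at a zero of $dx$, so $dx$ is regular and non-vanishing at $a$, and $\partial_x$ is a holomorphic non-vanishing vector field there. I would treat the two cases --- $a$ a pole of $dy$ of order $\geq 2$, and $a$ a simple pole --- separately, fixing the variable $z_i$ of one vertex to approach $a$ while all other variables stay generic (the edge factors ${}^{\log{}}\tilde\omega^{(0)}_2$ are holomorphic off the diagonal near $a$, so the multi-variable poles are accounted for by the diagonal analysis done elsewhere).

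First, the case of a pole of order $\geq 2$. Here $a$ is not a LogTR-vital singularity, so it does not enter the logarithmic projection property, and the only way it can produce a pole on the right-hand side of \eqref{eq:MainFormulaSimple} is through the factor $dx_i/dy_i$ and through ${}^{\log{}}\omega^{(0)}_1$ in the vertex exponent. This is precisely the configuration analyzed in Lemma~\ref{lem:hodypoles} with $s=0$ (since $dx$ has no pole at $a$): the power-counting \eqref{eq:ypoles} gives the exponent $z^{1-s+2(m+s-2)\sum r_i}=z^{1+2(m-2)\sum r_i}$, which is non-negative once $m\geq 2$, so the right-hand side of \eqref{eq:MainFormulaSimple} is holomorphic at $z_i=a$. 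Thus this case is immediate.

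Second, the case of a simple pole, i.e.\ a LogTR-vital singularity $a=a_j$ with residue $\alpha_j^{-1}$. All edge factors carry differentials ${}^{\log{}}\omega^{(\tilde g)}_{|e|}$ with $|e|\geq 2$, which by the LogTR singularity structure have poles only at the zeros of $dx$; hence they are holomorphic at $z_i=a_j$ and may be frozen. The only singular ingredient is the $i$-th vertex factor, and the key computation is the singular part of its exponent. Writing $S\coloneqq\sum_{\tilde g}\hbar^{2\tilde g}\,{}^{\log{}}\omega^{(\tilde g)}_1(z_i)/dx_i$, the $\tilde g=0$ term equals $-y$, whose singular part is $-\tfrac{1}{\alpha_j}\log(z_i-a_j)$, while the prescribed principal parts of the higher ${}^{\log{}}\omega^{(\tilde g)}_1$ dictated by the logarithmic projection property \eqref{eq:LogProjection} (with its overall minus sign) assemble, together with this $\tilde g=0$ term, into the singular part $-\tfrac{1}{\alpha_j\cS(\alpha_j\hbar\partial_{x_i})}\log(z_i-a_j)$ of $S$. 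Substituting into the exponent $u_i\cS(\hbar u_i\partial_{x_i})S-u_i\,{}^{\log{}}\omega^{(0)}_1/dx_i$ of \eqref{eq:MainFormulaSimple} and cancelling the $\tilde g=0$ terms, I would find that, modulo functions holomorphic in $z_i$, the exponent equals
\[
-u_i\left(\frac{\cS(\hbar u_i\partial_{x_i})}{\cS(\alpha_j\hbar\partial_{x_i})}-1\right)\frac{\log(z_i-a_j)}{\alpha_j},
\]
which is exactly the exponent appearing in \eqref{eq:omegaveeonedypolesholo}.

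With this identification the $i$-th vertex factor, multiplied by $dx_i^\vee=dy_i$, takes the form to which Lemma~\ref{lem:nopoles} applies, once the holomorphic part of the exponent and the frozen holomorphic edge contributions are absorbed into the auxiliary function $f$ (which, after extracting a fixed power of $\hbar$, is polynomial in $u_i$ and holomorphic in $z_i$; the $1/u_i$ prefactor only reindexes the $u_i$-expansion and its polar term is annihilated by $[u_i^{k_i}]$ with $k_i\geq 0$). Lemma~\ref{lem:nopoles} then yields holomorphicity at $z_i=a_j$. Since every pole of $dy$ away from the poles of $dx$ is of one of these two types, and since the right-hand side of \eqref{eq:MainFormulaSimple} is a sum of products in which each factor is holomorphic at the point in question, the differentials $\{{}^{\log{}}\omega^{\vee,(g)}_n\}$ are regular there. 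I expect the main obstacle to be the middle step: matching the signs so that the combination of the $\tilde g=0$ contribution $-y$ and the $\hbar$-deformed logarithmic principal parts collapses precisely into the deformed-logarithm exponent required by Lemma~\ref{lem:nopoles}; the remaining steps are either bookkeeping or a direct appeal to the two lemmas.
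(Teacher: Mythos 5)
Your proposal is correct and follows essentially the same route as the paper's proof: higher-order poles of $dy$ are dispatched by Lemma~\ref{lem:hodypoles} with $s=0$, while at a simple (LogTR-vital) pole one substitutes the principal parts prescribed by~\eqref{eq:LogProjection} together with ${}^{\log{}}\omega^{(0)}_1=-y\,dx$ into the vertex exponent of~\eqref{eq:MainFormulaSimple} and recognizes exactly the form~\eqref{eq:omegaveeonedypolesholo}, regular by Lemma~\ref{lem:nopoles}. Your middle step — assembling the $\tilde g=0$ term $-y$ with the minus-signed principal parts into the singular part $-\tfrac{1}{\alpha_j\cS(\alpha_j\hbar\partial_{x_i})}\log(z_i-a_j)$ of $S$, hence the exponent $-u_i\bigl(\tfrac{\cS(\hbar u_i\partial_{x_i})}{\cS(\alpha_j\hbar\partial_{x_i})}-1\bigr)\tfrac{\log(z_i-a_j)}{\alpha_j}$ — is precisely the computation the paper leaves implicit, and your sign bookkeeping is right.
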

\begin{proof}
	
	Lemma~\ref{lem:nopoles} implies that there are no poles at the simple poles of $dy$ (at those of them that do not coincide with the poles of $dx$). Indeed, by Definition~\ref{def:log-proj} the differentials $\{{}^{\log{}}\omega^{(g)}_n\}$ can only have poles at the poles of $dy$ for $n=1$, and thus these poles can only appear in the second line of~\eqref{eq:MainFormulaSimple} (and not in the third line, as the summation goes over graphs with multiedges of index at least two); and moreover, the principal parts of $\{{}^{\log{}}\omega^{(g)}_1\}$ at these points are prescribed by expressions~\eqref{eq:PrincipalPartsLog}. Substituting~\eqref{eq:PrincipalPartsLog} into the second line of~\eqref{eq:MainFormulaSimple} and taking into account that ${}^{\log{}}\omega^{(0)}_1 \coloneqq -ydx$, one gets an expression precisely of the form~\eqref{eq:omegaveeonedypolesholo}, which is regular by Lemma~\ref{lem:nopoles}.
	
	Lemma~\ref{lem:hodypoles} states that the differentials $\{{}^{\log{}}\omega^{\vee,(g)}_n\}$ are regular at the poles of $dy$ of higher order (at those of them that do not coincide with the poles of $dx$).
	
	
\end{proof}



What remains to show to complete the proof of LogTR for $\{{}^{\log{}}\omega^{\vee,(g)}_n\}$ is to analyze the poles of $dy^\vee=dx$ and to show that that the differentials $\{{}^{\log{}}\omega^{\vee,(g)}_n\}$ have the prescribed singular behavior at the LogTR-vital poles of $dy^\vee$ and are regular at all the other ones; that is, that they satisfy the logarithmic projection property.

\subsection{Emergence of the logarithmic projection property}


First we consider the most essential case of a simple pole of $dx$.

\begin{lemma}\label{lem:logemerg} Let $x$ and $y$ satisfy the assumptions of Theorem~\ref{thm:LogXYSwap}.
	
	Let $z$ be a local coordinate near a point of the spectral curve (s.t. $z=0$ at this point). Assume that $dx$ has a simple pole at this point so that
	\begin{equation}
		x=\frac{1}{q}\log(z)+\text{(holomorphic)},\quad
		\partial_x=(q\,z+o(z))\partial_z
	\end{equation}
	for some constant $q\ne0$. 
	Then the right hand side of Equation~\eqref{eq:MainFormulaSimple} is holomorphic at $z=0$ for $n\not=1$ and for $n=1$ we have
	\begin{align}\label{eq:omegavee1pole}
		\sum_{g=0}^\infty \hbar^{2g} \dfrac{{}^{\log{}}\omega^{\vee,(g)}_1(z)}{dx^{\vee}} = -\frac{1}{q\cS(q\hbar\partial_{y})} \log z + \text{(holomorphic)}.
	\end{align}	
	If $z=0$ is simultaneously a pole of $dy$, then the right hand side of Equation~\eqref{eq:MainFormulaSimple} is holomorphic at $z=0$ for all $(g,n)\neq(0,1)$.
\end{lemma}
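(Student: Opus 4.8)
The plan is to run the local analysis of \cite[Section~5.3]{alexandrov2022universal} and of Lemma~\ref{lem:hodypoles}, now at a simple pole of $dx$ instead of a pole of $dy$. In the main (LogTR-vital) case $dy$ is regular and non-vanishing at $z=0$, so I would take $y$ as a local coordinate; then $x=\frac1q\log z+(\text{holomorphic})$, $\frac{dx}{dy}\sim\frac1{qy}$, and $\partial_x\sim qy\partial_y$. Here $\frac{{}^{\log{}}\omega^{(0)}_1}{dx}=-y$ is regular, and since $z=0$ is a pole (not a zero) of $dx$ and is not a LogTR-vital singularity of $y$, each $\frac{{}^{\log{}}\omega^{(\tilde g)}_1}{dx}$ with $\tilde g\geq1$ vanishes at $y=0$. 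Consequently the exponent in the second line of~\eqref{eq:MainFormulaSimple} is $O(y)$ and the exponential equals $1+O(y)$, so it can only affect the holomorphic part.

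I would then isolate the three sources of singular behaviour at $z=0$ in~\eqref{eq:MainFormulaSimple}: the prefactor $\frac{dx}{dy}\sim\frac1{qy}$, the factor attached to a vertex collecting all its loops, and the explicit summand $-x_1$ (present only for $(g,n)=(0,1)$). The loop factor is governed by exactly the computation~\eqref{eq:Somega02reg}--\eqref{eq:LoopContributions}, but carried out with $\partial_x$-shifts; near the pole of $dx$ it resums to $\frac1{\cS(qu\hbar)}$ up to an $O(y)$ correction. Hence for $n=1$ the singular part of the vertex factor collapses to $\sum_{k\geq0}\partial_y^k[u^k]\frac1{qy}\cdot\frac1{u\cS(qu\hbar)}$, in which only odd $k=2j-1$ contribute. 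Writing $\frac1{\cS(qu\hbar)}=\sum_{j\geq0}b_j(\hbar)u^{2j}$ and using $\partial_y^{2j-1}\frac1y=-\frac{(2j-1)!}{y^{2j}}$, these terms produce the pole tower $-\sum_{j\geq1}\frac{b_j(2j-1)!}{q\,y^{2j}}$. After the sign $(-1)^n=-1$ this matches term by term the singular part of $-\frac1{q\,\cS(q\hbar\partial_y)}\log z$, since $\frac1{\cS(q\hbar\partial_y)}=\sum_j b_j\partial_y^{2j}$ and $\partial_y^{2j}\log z=-\frac{(2j-1)!}{y^{2j}}+(\text{holomorphic})$; together with $-x=-\frac1q\log z$ at genus zero this gives precisely~\eqref{eq:omegavee1pole}. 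By construction this is the principal part demanded by~\eqref{eq:PrincipalPartsLog} and Definition~\ref{def:log-proj} for the LogTR-vital singularity of $y^\vee=x$, so the logarithmic projection property emerges automatically.

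For $n\neq1$ I would establish holomorphy by a scaling count in the spirit of~\eqref{eq:ypoles}: in a connected graph with at least two vertices, the vertex at $z=0$ carries at least one non-loop leg, and each such leg contributes a factor $\cS(\hbar u\partial_x)\frac{B}{dx}=O(y)$ that cancels the single pole of $\frac{dx}{dy}$; bookkeeping of the surviving powers of $u$ against $\partial_y$ then shows that no pole remains. For the last claim, when $z=0$ is simultaneously a pole of $dy$ of some order $m\geq1$ the prefactor becomes $\frac{dx}{dy}\sim z^{m-1}$ and is no longer singular, while the general estimate~\eqref{eq:ypoles} with $s=1$ yields order $z^{2(m-1)\sum r_i}\geq z^0$, so the right-hand side of~\eqref{eq:MainFormulaSimple} is holomorphic for all $(g,n)\neq(0,1)$, the sole exception being the explicit $-x_1$ term.

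The step I expect to be the main obstacle is the exact resummation in the $n=1$ case: one must verify not merely that a singular tower $1/y^{2j}$ is produced, but that the loop factor $\frac1{\cS(qu\hbar)}$, the prefactor pole $\frac1{qy}$, and the operator $\sum_k\partial_y^k[u^k]$ conspire to reproduce \emph{exactly} the $\hbar$-expansion of $-\frac1{q\cS(q\hbar\partial_y)}\log z$ coefficient by coefficient, and that the $O(y)$ corrections to the loop factor and to the exponential genuinely contribute only to the holomorphic remainder. Checking that these corrections, together with the presence of several loops at the single vertex, do not perturb the prescribed principal part is the delicate point, and it is where the precise local normalizations from~\eqref{eq:LoopContributions} and the vanishing of $\frac{{}^{\log{}}\omega^{(\tilde g)}_1}{dx}$ at $y=0$ for $\tilde g\geq1$ must be used with care.
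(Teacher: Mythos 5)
Your proposal is correct and takes essentially the same route as the paper's proof: the same identification of the sources of singularity (the prefactor $dx/dy$, the resummed loop factor, and the explicit $-x_1$ term), the same resummation of self-edges via the local computation behind \eqref{eq:LoopContributions} into $1/\cS(qu\hbar)$ up to $O(y)$, the same termwise matching with the $\hbar$-expansion of $-\frac{1}{q\cS(q\hbar\partial_y)}\log z$ in \eqref{eq:omegavee1pole} (with the genus-zero term supplied by $-x_1$), the same cancellation of the simple pole by non-loop leg factors for $n\neq 1$, and the same appeal to the scaling estimate \eqref{eq:ypoles} with $s=1$ when $z=0$ is simultaneously a pole of $dy$. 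The only cosmetic deviation is that for a simultaneous \emph{simple} pole of $dy$ (the case $m=1$, where the derivation of \eqref{eq:ypoles} concerns power-law rather than logarithmic principal parts) the paper argues directly that $\partial_x$, vanishing at the origin, annihilates the logarithmic singularity of $y$ once applied --- your $z^0$ bound reaches the right conclusion but formally requires this one-line remark.
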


\begin{proof}
	Let us consider where in the formula~\eqref{eq:MainFormulaSimple} a pole at $z=0$ can come from. It comes from the $dx/dy$ factor and also it can come from 
	the ${}^{\log{}}\omega^{0}_1$ term in the second line 
	 if $z=0$ is simultaneously a simple pole of $dy$ (since it is not a LogTR-vital pole of $dy$, the terms with higher $g$ are regular). Note that if the second case is true, that is if $z=0$ is a simple pole of $dy$, then since $\partial_x$ vanishes at the origin in this case, the poles coming from ${}^{\log{}}\omega^{0}_1$ terms in the second line get canceled (since they are logarithmic, and these terms are acted upon by $\partial_x$ at least once). If $z=0$ is a pole of $dy$ of higher order, we can use formula~\eqref{eq:ypoles} for the contribution of the exponential from the second line of~\eqref{eq:MainFormulaSimple} together with $dx/dy$. Note that it is clear from that formula that the exponential only gives a holomorphic factor and we are only interested in the $dx/dy$ factor.
	
	
	
	The last summand on the right hand side of~\eqref{eq:MainFormulaSimple} gives $-\frac{1}{q}\log(z)+ \text{\emph{(holomorphic)}}$, which agrees with the $\hbar=0$ part of the right hand side of~\eqref{eq:omegavee1pole} 
	. As for the first summand of~\eqref{eq:MainFormulaSimple}, it does not give any contribution to the $(g,n)=(0,1)$ case.
	
Let us study the first summand. Note that the $dx/dy$ factor contributes a simple pole, and it cancels with a zero that would be contributed by any non-trivial term in the expansion of the exponent in the second line and/or any term that would come with any multi edge in the third line other than the self-adjusted $(0,2)$-multiedges. Thus, ${}^{\log{}}\omega^{\vee,(g)}_n$ is regular for $n>1$, and in the case $n=1$, $g\geq 1$, we have
	\begin{align}
		{}^{\log{}}\omega^{\vee,(g)}_1(z) & =
		-
		\coeff \hbar {2g} dy
		\sum_{k=0}^\infty  \partial_{y}^{k} [u^{k}] \frac{dx}{dy}
\frac{1}{u} e^{\frac12
		\restr{z_1}{z} \restr{z_2}{z} \hbar^2 u^2\cS(u\hbar\partial_{x_1})\cS(u\hbar\partial_{x_2})
\left(\frac{\frac{dz_1}{dx_1}\frac{dz_2}{dx_2}}{(z_1-z_2)^2}-\frac{1}{(x_1-x_2)^2}\right)}
		\\ \notag & \quad
		+ \text{(holomorphic)}.
	\end{align}	
	Here we took only the contributions solely containing $(0,2)$-multiedges; the exponential comes from the sum over all such graphs (with a single  vertex and any number of edges of index 2), with the $1/|\Aut(\Gamma)|$ factor providing the necessary factorial.
	
	The local computation performed in the proof of Proposition~\ref{prop:DualTRTrivial} (cf.~Equation~\eqref{eq:LoopContributions}) implies that denoting $w=e^{\frac{u\hbar}{2}\partial_x}z=e^{\frac{q u\hbar}{2}}z+O(z^2)$, $\bar w=e^{-\frac{u\hbar}{2}\partial_x}z=e^{-\frac{q u\hbar}{2}}z+O(z^2)$, we can rewrite the latter expression as
	\begin{align} \label{eq:omega1dxsingcomp}
		\dfrac{{}^{\log{}}\omega^{\vee,(g)}_1(z)}{dy} & =
		-
		\coeff \hbar {2g}
		\sum_{k=0}^\infty  \partial_{y}^{k} [u^{k}] \frac{dx}{dy}
		\frac{1}{u}  \frac{\hbar u\sqrt{\frac{dw}{dx}\frac{d\bar w}{dx}}}{w-\bar w}
		+ \text{(holomorphic)}
		\\ \notag & =
		-
		\coeff \hbar {2g}
		\sum_{k=0}^\infty  \partial_{y}^{k} [u^{k}] \frac{dz}{dy}
		\frac{\hbar }{e^{\frac{qu\hbar}{2}}-e^{-\frac{qu\hbar}{2}}}\frac{1}{z}
		+ \text{(holomorphic)}
		\\ \notag & =
		-
		\coeff \hbar {2g}
		\sum_{k=0}^\infty  \partial_{y}^{k+1} [u^{k}]
		\frac{1}{q u \cS(q u \hbar)}\log z
		+ \text{(holomorphic)}
		\\ \notag & =
		-
		\coeff \hbar {2g}
		\frac{1}{q\cS(q\hbar \partial_y)}\log z
		+ \text{(holomorphic)},
	\end{align}	
	which establishes the right singular behavior at $z=0$ prescribed by the logarithmic projection property. The last equality in~\eqref{eq:omega1dxsingcomp} is only valid since $g\geq 1$; in this case it does not matter that the sum in the second-to-last line does not have a term with $\partial_y^0$ while in the last line it is present (taking $\coeff \hbar {2g}$ kills it).

	Note that the $g>0$ parts (i.e. the coefficients in front of $\hbar^{2g}$ for $g>0$ in the $\hbar$-series expansion) of the RHS of~\eqref{eq:omegavee1pole} are regular if $z=0$ is not a LogTR-vital singularity of $dx$ (i.e. when it is simultaneously a pole of $dy$ in this case), since $\partial_y$ kills the pole by acting on it at least once, thus the final statement of the lemma holds.
\end{proof}

\begin{lemma}\label{lem:hodxpoles}Let $x$ and $y$ satisfy the assumptions of Theorem~\ref{thm:LogXYSwap}.
	
	Let $z$ be a local coordinate near a point of the spectral curve (s.t. $z=0$ at this point). Assume that $dx$ has a pole of order greater than one at this point, so that
	\begin{equation}
		dx=\left(\dfrac{1}{\alpha z^{m}} + o\left(\dfrac{1}{z^{m}}\right)\right)dz,\quad 
		\partial_x=(\alpha\,z^m+o(z^m))\partial_z
	\end{equation}
	for $m\in \mathbb{Z}_{\geq 2}$ and some constant $\alpha\ne0$.
	Then the right hand side of Equation~\eqref{eq:MainFormulaSimple} is holomorphic at $z=0$.
\end{lemma}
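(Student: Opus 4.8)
The plan is to follow the local power-counting strategy of Lemma~\ref{lem:hodypoles} and of~\cite[Section~5.3]{alexandrov2022universal}, now applied at a pole of $dx=dy^\vee$. Since $dx$ has a pole of order $m\geq 2$ at $z=0$, this point is \emph{not} a LogTR-vital singularity of $y^\vee$, so what must be shown is plain holomorphy of $\{{}^{\log{}}\omega^{\vee,(g)}_n\}$ there; note that the unstable case $(g,n)=(0,1)$ is excluded, since for it the right hand side of~\eqref{eq:MainFormulaSimple} equals $-x$, which genuinely has a pole of order $m-1$. I analyze holomorphy in one variable $z_1\to 0$ at a time and first catalogue the sources of poles in~\eqref{eq:MainFormulaSimple}. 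Writing $\nu\geq 0$ for the order of the pole of $dy$ at $z_1=0$ (with $\nu=0$ when $dy$ is regular), the only singular building blocks are the prefactor $dx_1/dy_1\sim z_1^{\nu-m}$ and the factor ${}^{\log{}}\omega^{(0)}_1/dx_1=-y$ appearing in the exponent (which is singular only for $\nu\geq 1$). Every other ingredient \emph{vanishes} at $z_1=0$: because $z_1=0$ is neither a zero of $dx$ nor a LogTR-vital singularity of $y$, the stable differentials ${}^{\log{}}\omega^{(\tilde g)}_{|e|}$ labelling the multiedges and the one-point differentials ${}^{\log{}}\omega^{(\tilde g)}_1$ with $\tilde g\geq 1$ are holomorphic at $z_1=0$, and dividing by $dx_1\sim z_1^{-m}dz_1$ makes each such factor vanish to order $\geq m$ per leg attached to the vertex; likewise $\partial_{x_1}\sim z_1^{m}\partial_{z_1}$ vanishes to order $m\geq 2$, so every nontrivial $\cS$-correction only adds vanishing.

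For $n\geq 2$ this already settles the matter. A connected graph on $\geq 2$ vertices attaches to vertex $1$ at least one leg of a multiedge joining it to another vertex; since the corresponding differential ${}^{\log{}}\omega^{(\tilde g)}_{|e|}$ is holomorphic at $z_1=0$, dividing by $dx_1\sim z_1^{-m}dz_1$ makes this leg contribute a zero of order $\geq m$ in $z_1$. Combining this with the power-counting identity~\eqref{eq:ypoles} — reused with the roles of $dx$ and $dy$ interchanged, i.e. with pole order $m$ for $dx$ and $\nu$ for $dy$ — and converting all $u_1$-powers into $\partial_{y_1}\sim z_1^{\nu}\partial_{z_1}$, the total $z_1$-degree of the contribution works out to $\nu+2(m+\nu-2)\sum r_i\geq 0$, since $m+\nu-2\geq 0$. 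Any self-loops present at vertex $1$ only contribute additional holomorphic factors (by the computation of the next paragraph), so every graph contribution with $n\geq 2$ is regular at $z_1=0$.

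The case $n=1$ is the crux and requires going beyond the crude estimate~\eqref{eq:ypoles}, which for $m=2$ would only bound the singularity by a simple pole. Here the only contributions come from self-loops at the single vertex, and these are resummed exactly as in the proof of Proposition~\ref{prop:DualTRTrivial} (Equation~\eqref{eq:LoopContributions}) and in Lemma~\ref{lem:logemerg}: the self-loop sum yields the factor $\frac{\hbar u\sqrt{\frac{dw}{dx}\frac{d\bar w}{dx}}}{w-\bar w}$ with $w=e^{\frac{u\hbar}{2}\partial_{x}}z$ and $\bar w=e^{-\frac{u\hbar}{2}\partial_{x}}z$. The decisive observation is that at a pole of $dx$ of order $m\geq 2$ both $w-\bar w$ and $\sqrt{\frac{dw}{dx}\frac{d\bar w}{dx}}$ scale like $z^{m}$, so this factor is holomorphic and nonzero to leading order in $u$, while its higher coefficients in $u$ carry $z$-dependence controlled by $\frac{d}{dx}\log\frac{dz}{dx}\sim z^{m-1}$ and its $\partial_x$-derivatives, and therefore vanish to order $\geq m-1\geq 1$. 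Consequently every coefficient $[u_1^{k}]$ of $\frac{dx_1}{dy_1}\frac{1}{u_1}\cdot\frac{\hbar u_1\sqrt{\frac{dw}{dx}\frac{d\bar w}{dx}}}{w-\bar w}$ is holomorphic at $z_1=0$, the vanishing of the $u$-coefficients over-compensating the prefactor pole $dx_1/dy_1\sim z_1^{-m}$; the final $\partial_{y_1}^{k}$ preserves holomorphy. This is precisely the mechanism which, for a \emph{simple} pole ($m=1$), instead leaves the logarithmic singularity~\eqref{eq:omegavee1pole}; for $m\geq 2$ the extra vanishing removes it completely.

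Finally, the configuration where $z_1=0$ is simultaneously a pole of $dy$ — i.e. $\nu\geq 1$, which is exactly the case excluded from Lemma~\ref{lem:loopandpoles} — is covered by the same estimates: the prefactor pole weakens to order $m-\nu$ while $-y$ acquires a pole of order $\nu-1$, but now $\partial_{y_1}\sim z_1^{\nu}\partial_{z_1}$ supplies compensating vanishing, so the combination of~\eqref{eq:ypoles} with the self-loop resummation again yields total $z_1$-degree $\geq 0$. I expect the main obstacle to be precisely the $n=1$ self-loop step: unlike the $n\geq 2$ case it is not settled by the power count~\eqref{eq:ypoles} alone, and one must verify the exact cancellation of the $z^{m}$ factors between numerator and denominator of the resummed self-loop, together with the high-order vanishing of its subleading $u$-coefficients — the very computation that separates $m=1$ (logarithm emerges) from $m\geq 2$ (holomorphic).
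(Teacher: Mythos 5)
Your route is the same as the paper's: exclude the unstable $(0,1)$ case, observe that since $z=0$ is a pole of $dx$ it is not LogTR-vital, so all the differentials ${}^{\log{}}\omega^{(\tilde g)}_{n}$ entering the right hand side of~\eqref{eq:MainFormulaSimple} are holomorphic there and, after division by $dx\sim z^{-m}dz$, contribute zeros of order $\geq m$ per leg; dispose of $n\geq 2$ and of the exponent corrections by the power count of~\eqref{eq:ypoles} with the roles of $dx$ and $dy$ interchanged (the paper likewise covers the subcase where $dy$ also has a pole at the same point); and reduce the crux $n=1$ case to the resummed self-loop factor $\frac{\hbar u\sqrt{\frac{dw}{dx}\frac{d\bar w}{dx}}}{w-\bar w}$ exactly as in the first line of~\eqref{eq:omega1dxsingcomp}, followed by a local expansion at the order-$m$ pole. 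This is precisely how the paper proceeds.

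However, the decisive estimate in your $n=1$ step is stated with the wrong exponent, and as written the argument does not close. You assert that the subleading $u$-coefficients of the self-loop factor ``vanish to order $\geq m-1$'' and that this ``over-compensates'' the prefactor pole $dx/dy\sim z^{-m}$; but $z^{m-1}\cdot z^{-m}=z^{-1}$, so a bound of order $m-1$ would leave a possible simple pole for \emph{every} $m\geq 2$, and the lemma would not follow. The true vanishing is \emph{quadratic} in the $z^{m-1}$-small quantities: the paper's expansions~\eqref{eq:wbmw} and~\eqref{eq:dwdwb} give $w-\bar w=u\hbar\,\partial_xz+O(u^3z^{3m-2})$ and $dw\,d\bar w=(dz)^2\bigl(1+\tfrac{u^2\hbar^2}{4}\bigl(\partial_z\partial_x^2z-(\partial_z\partial_xz)^2\bigr)+o(u^2z^{2m-2})\bigr)$, so after the $u$-independent part drops out (it multiplies $1/u$ and is invisible to $[u^k]$ with $k\geq 0$ --- a point worth making explicit, since that term alone times $z^{-m}$ \emph{is} a pole), every surviving $u$-coefficient is $O(z^{2m-2})$, and $2m-2-m=m-2\geq 0$ is exactly where the hypothesis $m\geq 2$ enters (with $m=2$ the borderline case, which your bound would fail). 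To your credit, your closing paragraph flags precisely this verification as the main obstacle; but the bound you commit to in the body is insufficient, and replacing $m-1$ by $2(m-1)$, justified by the displayed expansions, is the missing (and, per the paper, entire remaining) content of the proof.
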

\begin{proof}
	The proof goes similar to the proof of the previous Lemma~\ref{lem:logemerg}. In fact, exactly the same reasoning works up until the first line of~\eqref{eq:omega1dxsingcomp} inclusively.
	
	Indeed, 
	by exactly the same reasoning we are only interested in the pole coming from the $dx/dy$ factor even if $dy$ also has a pole at $z=0$.
	
	
	Then we arrive exactly at the first line of~\eqref{eq:omega1dxsingcomp}, but we have different series expansions for $w$ and $\bar w$ in this case:
	\begin{align}
		w&=e^{\frac{u\hbar}{2}\partial_x}z=z+\dfrac{u\hbar}{2}\partial_x z+\dfrac{u^2\hbar^2}{8}\partial^2_x z + O(u^3z^{3m-2}),
		\\
		\bar w&=e^{-\frac{u\hbar}{2}\partial_x}z=z-\dfrac{u\hbar}{2}\partial_x z+\dfrac{u^2\hbar^2}{8}\partial^2_x z + O(u^3z^{3m-2}).
	\end{align}
	Note that now
	\begin{align} \label{eq:wbmw}
		w-\bar w &= u\hbar \partial_x z + O(u^3z^{3m-2}), \\ \label{eq:dwdwb}
		dw d\bar w & = (dz)^2 \left(1+\dfrac{u^2\hbar^2}{4}\left(\partial_z\partial_x^2z -\left(\partial_z\partial_x z\right)^2\right)+o(u^2z^{2m-2})\right).
	\end{align}
	Thus in this case
	\begin{align} 
		\dfrac{{}^{\log{}}\omega^{\vee,(g)}_1(z)}{dy} & =
		-
		\coeff \hbar {2g}
		\sum_{k=0}^\infty  \partial_{y}^{k} [u^{k}] \frac{dx}{dy}
		\frac{1}{u}  \frac{\hbar u\sqrt{\frac{dw}{dx}\frac{d\bar w}{dx}}}{w-\bar w}
		+ \text{(holomorphic)}
		\\ \notag & =
		-
		\coeff \hbar {2g}
		\sum_{k=0}^\infty  \partial_{y}^{k} [u^{k}] \frac{dz}{dy}
		\,\frac{1 }{ u }\,\frac{1+O(z^{2m-2})}{\partial_xz+O(u^2z^{3m-2})}
		+ \text{(holomorphic)}.
	\end{align}	
	The first equality is precisely the same as in the first line of~\eqref{eq:omega1dxsingcomp}, and the second line is obtained by substituting the expressions~\eqref{eq:wbmw} and~\eqref{eq:dwdwb}. Taking into account that $\partial_xz= \alpha z^m +O(z^{m+1})$, we can rewrite this as
	\begin{align}
		&\dfrac{{}^{\log{}}\omega^{\vee,(g)}_1(z)}{dy}\\\notag
		&\phantom{==}=-
		\coeff \hbar {2g}
		\sum_{k=0}^\infty  \partial_{y}^{k} [u^{k}] \frac{dz}{dy}
		\,\frac{1 }{ u }\,\frac{1}{\partial_x z}\left(1+O(z^{2m-2})\right)\left(1+O(z^{3m-2}/z^{m})\right)
		+ \text{(holomorphic)}
	\end{align}
	Taking into account that $dz/dy$ and $\partial_x z$ do not depend on $u$ and the sum over $k$ goes from zero, thus the $1/u$ term disappears, we have
	\begin{align}
		&\dfrac{{}^{\log{}}\omega^{\vee,(g)}_1(z)}{dy}\\\notag
		&\phantom{==}=-
		\coeff \hbar {2g}
		\sum_{k=0}^\infty  \partial_{y}^{k} [u^{k}] \frac{dz}{dy}
		\,\frac{1 }{ u }\,\frac{O(z^{2m-2})}{\partial_x z}
		+ \text{(holomorphic)} \\ \notag
		&\phantom{==}= \text{(holomorphic)}.
	\end{align}
	The last equality holds due to the fact that $\partial_xz= \alpha z^m +O(z^{m+1})$ and $m\geq 2$. 
\end{proof}


\subsection{Proof of the logarithmic \texorpdfstring{$x$--$y$}{x-y} swap theorem}
At this point we have everything we need to prove the main Theorem~\ref{thm:LogXYSwap}.
\begin{proof}[Proof of Theorem~\ref{thm:LogXYSwap}]
	Lemma~\ref{lem:expectedpoles} implies that the expressions $\{{}^{\log{}}\omega^{\vee,(g)}_n\}$ defined through~\eqref{eq:MainFormulaSimple} satisfy the loop equations. Moreover, it states that they have no diagonal poles (apart from the $(g,n)=(0,2)$ case).
	
	Combining Lemma~\ref{lem:loopandpoles} and Lemmas~\ref{lem:logemerg} and~\ref{lem:hodxpoles},
	we conclude that for $2g-2+n>0$ the differentials $\{{}^{\log{}}\omega^{\vee,(g)}_n\}$ do not have any poles apart from the poles at the zeros of $dx^\vee$ and (for $n=1$) the poles at LogTR-vital singular points of $dy^\vee$, where they have behavior described in~\eqref{eq:omegavee1pole}. This precisely corresponds to the logarithmic projection property being satisfied. 
	
	This completes the proof of Theorem~\ref{thm:LogXYSwap}.
\end{proof}

\printbibliography

@misc{alexandrov2022universal,
	archiveprefix = {arXiv},
	author = {Alexander Alexandrov and Boris Bychkov and Petr Dunin-Barkowski and Maxim Kazarian and Sergey Shadrin},
	eprint = {2212.00320},
	primaryclass = {math-ph},
	title = {A universal formula for the $x-y$ swap in topological recursion},
	year = {2022}}

@misc{ABDKS7,
	Author = {Alexander Alexandrov and Boris Bychkov and Petr Dunin-Barkowski and Maxim Kazarian and Sergey Shadrin},
	Title = {Degenerate and irregular topological recursion},
	Year = {2024},
	Eprint = {2408.02608},
	Eprinttype = {arXiv},
	primaryClass={math-ph},
	sorttitle=2
}

@misc{ABDKS3,
Author = {Alexander Alexandrov and Boris Bychkov and Petr Dunin-Barkowski and Maxim Kazarian and Sergey Shadrin},
Title = {KP integrability through the $x-y$ swap relation},
Year = {2023},
Eprint = {2309.12176},
Eprinttype = {arXiv},
primaryClass={math-ph}
}

@misc{ABDKS5,
Author = {Alexander Alexandrov and Boris Bychkov and Petr Dunin-Barkowski and Maxim Kazarian and Sergey Shadrin},
Title = {Symplectic duality via log topological recursion},
Year = {2024},
Eprint = {2405.10720},
Eprinttype = {arXiv},
primaryClass={math-ph},
sorttitle=1
}

@misc{BKW23,
Author = {Vincent Bouchard and Reinier Kramer and Quinten Weller},
Title = {Topological recursion on transalgebraic spectral curves and Atlantes Hurwitz numbers},
Year = {2023},
Eprint = {2304.07433},
Eprinttype = {arXiv},
primaryClass={math-ph}
}

@misc{ABDKS8,
Author = {Alexander Alexandrov and Boris Bychkov and Petr Dunin-Barkowski and Maxim Kazarian and Sergey Shadrin},
Title = {\emph{To appear}},
Year = {2024},
sorttitle=3
}

@misc{H3Theta,
Author = {Alexander Alexandrov},
Title = {KP integrability of triple Hodge integrals. III. Cut-and-join description, KdV reduction, and topological recursions},
Year = {2021},
Eprint = {2108.10023},
archivePrefix={arXiv},
primaryClass={math.AG}
}

@article {ACEH,
	AUTHOR = {Alexandrov, A. and Chapuy, G. and Eynard, B. and Harnad, J.},
	TITLE = {Weighted {H}urwitz numbers and topological recursion},
	JOURNAL = {Comm. Math. Phys.},
	FJOURNAL = {Communications in Mathematical Physics},
	VOLUME = {375},
	YEAR = {2020},
	NUMBER = {1},
	PAGES = {237--305},
	ISSN = {0010-3616},
	MRCLASS = {14H81 (14H30 33C20 33E30 37K10)},
	MRNUMBER = {4082183},
	MRREVIEWER = {Reinier Kramer},
	DOI = {10.1007/s00220-020-03717-0},
	URL = {https://doi.org/10.1007/s00220-020-03717-0},
}

@misc{limits,
	title={Taking limits in topological recursion}, 
	author={Gaëtan Borot and Vincent Bouchard and Nitin Kumar Chidambaram and Reinier Kramer and Sergey Shadrin},
	year={2023},
	eprint={2309.01654},
	archivePrefix={arXiv},
	primaryClass={math.AG}
}

@misc{borot2023functional,
	title={Functional relations for higher-order free cumulants}, 
	author={Gaëtan Borot and Séverin Charbonnier and Elba Garcia-Failde and Felix Leid and Sergey Shadrin},
	year={2023},
	eprint={2112.12184},
	archivePrefix={arXiv},
	primaryClass={math.OA}
}

@article {BEO-abstract-loop,
	AUTHOR = {Borot, Ga\"{e}tan and Eynard, Bertrand and Orantin, Nicolas},
	TITLE = {Abstract loop equations, topological recursion and new
	applications},
	JOURNAL = {Commun. Number Theory Phys.},
	FJOURNAL = {Communications in Number Theory and Physics},
	VOLUME = {9},
	YEAR = {2015},
	NUMBER = {1},
	PAGES = {51--187},
	ISSN = {1931-4523},
	MRCLASS = {81T45},
	MRNUMBER = {3339853},
	DOI = {10.4310/CNTP.2015.v9.n1.a2},
	URL = {https://doi.org/10.4310/CNTP.2015.v9.n1.a2},
}

@article {BS-blobbed,
	AUTHOR = {Borot, Ga\"{e}tan and Shadrin, Sergey},
	TITLE = {Blobbed topological recursion: properties and applications},
	JOURNAL = {Math. Proc. Cambridge Philos. Soc.},
	FJOURNAL = {Mathematical Proceedings of the Cambridge Philosophical
	Society},
	VOLUME = {162},
	YEAR = {2017},
	NUMBER = {1},
	PAGES = {39--87},
	ISSN = {0305-0041},
	MRCLASS = {14H81 (32G99 81T30)},
	MRNUMBER = {3581899},
	MRREVIEWER = {Xiaobin Li},
	DOI = {10.1017/S0305004116000323},
	URL = {https://doi.org/10.1017/S0305004116000323},
}

@article {BEM-HOMFLY,
	AUTHOR = {Brini, Andrea and Eynard, Bertrand and Mari\~{n}o, Marcos},
	TITLE = {Torus knots and mirror symmetry},
	JOURNAL = {Ann. Henri Poincar\'{e}},
	FJOURNAL = {Annales Henri Poincar\'{e}. A Journal of Theoretical and
	Mathematical Physics},
	VOLUME = {13},
	YEAR = {2012},
	NUMBER = {8},
	PAGES = {1873--1910},
	ISSN = {1424-0637},
	MRCLASS = {81T45 (14J33 57M27)},
	MRNUMBER = {2994763},
	MRREVIEWER = {Arkady L. Kholodenko},
	DOI = {10.1007/s00023-012-0171-2},
	URL = {https://doi.org/10.1007/s00023-012-0171-2},
}

@article {Borot-Lecture,
	AUTHOR = {Borot, Ga\"{e}tan},
	TITLE = {Topological recursion and geometry},
	JOURNAL = {Rev. Math. Phys.},
	FJOURNAL = {Reviews in Mathematical Physics. A Journal for Both Review and
	Original Research Papers in the Field of Mathematical Physics},
	VOLUME = {32},
	YEAR = {2020},
	NUMBER = {10},
	PAGES = {2030007, 50},
	ISSN = {0129-055X},
	MRCLASS = {81T40 (05Axx 14H81 14N35 32G15 57R56)},
	MRNUMBER = {4179226},
	MRREVIEWER = {Dorin Andrica},
	DOI = {10.1142/S0129055X20300071},
	URL = {https://doi.org/10.1142/S0129055X20300071},
}

@incollection {BM,
    AUTHOR = {Bouchard, Vincent and Mari\~{n}o, Marcos},
     TITLE = {Hurwitz numbers, matrix models and enumerative geometry},
 BOOKTITLE = {From {H}odge theory to integrability and {TQFT} tt*-geometry},
    SERIES = {Proc. Sympos. Pure Math.},
    VOLUME = {78},
     PAGES = {263--283},
 PUBLISHER = {Amer. Math. Soc., Providence, RI},
      YEAR = {2008},
      ISBN = {978-0-8218-4430-4},
   MRCLASS = {14N10 (14N35 81T30 81T45)},
  MRNUMBER = {2483754},
MRREVIEWER = {Yunfeng\ Jiang},
       DOI = {10.1090/pspum/078/2483754},
       URL = {https://doi.org/10.1090/pspum/078/2483754},
}

@article {BKMP,
    AUTHOR = {Bouchard, Vincent and Klemm, Albrecht and Mari\~{n}o, Marcos
              and Pasquetti, Sara},
     TITLE = {Remodeling the {B}-model},
   JOURNAL = {Comm. Math. Phys.},
  FJOURNAL = {Communications in Mathematical Physics},
    VOLUME = {287},
      YEAR = {2009},
    NUMBER = {1},
     PAGES = {117--178},
      ISSN = {0010-3616,1432-0916},
   MRCLASS = {81T45 (32G81 32Q25 81T30)},
  MRNUMBER = {2480744},
MRREVIEWER = {Johannes\ Walcher},
       DOI = {10.1007/s00220-008-0620-4},
       URL = {https://doi.org/10.1007/s00220-008-0620-4},
}

@article {Orbifold,
    AUTHOR = {Bouchard, Vincent and Hern\'{a}ndez Serrano, Daniel and Liu,
              Xiaojun and Mulase, Motohico},
     TITLE = {Mirror symmetry for orbifold {H}urwitz numbers},
   JOURNAL = {J. Differential Geom.},
  FJOURNAL = {Journal of Differential Geometry},
    VOLUME = {98},
      YEAR = {2014},
    NUMBER = {3},
     PAGES = {375--423},
      ISSN = {0022-040X,1945-743X},
   MRCLASS = {57R18 (14J33 53D45)},
  MRNUMBER = {3263522},
MRREVIEWER = {Xiaobin\ Li},
       URL = {http://projecteuclid.org/euclid.jdg/1406552276},
}

@Article{BDKS1,
	Author = {Bychkov, Boris and Dunin-Barkowski, Petr and Kazarian, Maxim and Shadrin, Sergey},
	Title = {Explicit closed algebraic formulas for {Orlov}-{Scherbin} {{\(n\)}}-point functions},
	FJournal = {Journal de l'{\'E}cole Polytechnique -- Math{\'e}matiques},
	Journal = {J. {\'E}c. Polytech., Math.},
	ISSN = {2429-7100},
	Volume = {9},
	Pages = {1121--1158},
	Year = {2022},
	Language = {English},
	DOI = {10.5802/jep.202},
	Keywords = {37K20,37K30,37K10,14H30,14N10,05A15},
	zbMATH = {7559604},
	Zbl = {1504.37080}
}

@misc{bychkov2021topological,
	title={Topological recursion for Kadomtsev-Petviashvili tau functions of hypergeometric type}, 
	author={Boris Bychkov and Petr Dunin-Barkowski and Maxim Kazarian and Sergey Shadrin},
	year={2021},
	eprint={2012.14723},
	archivePrefix={arXiv},
	primaryClass={math-ph}
}

@article {BDS,
    AUTHOR = {Bychkov, B. and Dunin-Barkowski, P. and Shadrin, S.},
     TITLE = {Combinatorics of {B}ousquet-{M}\'{e}lou-{S}chaeffer numbers in
              the light of topological recursion},
   JOURNAL = {European J. Combin.},
  FJOURNAL = {European Journal of Combinatorics},
    VOLUME = {90},
      YEAR = {2020},
     PAGES = {103184, 35},
      ISSN = {0195-6698,1095-9971},
   MRCLASS = {05E14 (05A05 05A15 14H81)},
  MRNUMBER = {4125529},
       DOI = {10.1016/j.ejc.2020.103184},
       URL = {https://doi.org/10.1016/j.ejc.2020.103184},
}

@incollection {Chen,
    AUTHOR = {Chen, Lin},
     TITLE = {Bouchard-{K}lemm-{M}arino-{P}asquetti conjecture for {$\mathbb{C}^3$}},
 BOOKTITLE = {Topological recursion and its influence in analysis, geometry,
              and topology},
    SERIES = {Proc. Sympos. Pure Math.},
    VOLUME = {100},
     PAGES = {83--102},
 PUBLISHER = {Amer. Math. Soc., Providence, RI},
      YEAR = {2018},
      ISBN = {978-1-4704-3541-7},
   MRCLASS = {14N35 (14H10)},
  MRNUMBER = {3840131},
MRREVIEWER = {Sergiy\ Koshkin},
       DOI = {10.1090/pspum/100/03},
       URL = {https://doi.org/10.1090/pspum/100/03},
}

@article {CEO,
	AUTHOR = {Chekhov, Leonid and Eynard, Bertrand and Orantin, Nicolas},
	TITLE = {Free energy topological expansion for the 2-matrix model},
	JOURNAL = {J. High Energy Phys.},
	FJOURNAL = {Journal of High Energy Physics. A SISSA Journal},
	YEAR = {2006},
	NUMBER = {12},
	PAGES = {053, 31},
	ISSN = {1126-6708},
	MRCLASS = {81T45 (15A52)},
	MRNUMBER = {2276699},
	MRREVIEWER = {Gernot Akemann},
	DOI = {10.1088/1126-6708/2006/12/053},
	URL = {https://doi.org/10.1088/1126-6708/2006/12/053},
}

@online{CGG,
Author = {Nitin Kumar Chidambaram and Elba Garcia-Failde and Alessandro Giacchetto},
Title = {Relations on $\overline{\mathcal{M}}_{g,n}$ and the negative $r$-spin Witten conjecture},
Year = {2022},
Eprint = {2205.15621},
Eprinttype = {arXiv},
primaryClass={math.AG}
}

@article {DoCaJ,
    AUTHOR = {Do, Norman and Leigh, Oliver and Norbury, Paul},
     TITLE = {Orbifold {H}urwitz numbers and {E}ynard-{O}rantin invariants},
   JOURNAL = {Math. Res. Lett.},
  FJOURNAL = {Mathematical Research Letters},
    VOLUME = {23},
      YEAR = {2016},
    NUMBER = {5},
     PAGES = {1281--1327},
      ISSN = {1073-2780,1945-001X},
   MRCLASS = {14N35 (05A15 32G15)},
  MRNUMBER = {3601067},
MRREVIEWER = {Felix\ Janda},
       DOI = {10.4310/MRL.2016.v23.n5.a3},
       URL = {https://doi.org/10.4310/MRL.2016.v23.n5.a3},
}

@incollection {DMSS,
	AUTHOR = {Dumitrescu, Olivia and Mulase, Motohico and Safnuk, Brad and
	Sorkin, Adam},
	TITLE = {The spectral curve of the {E}ynard-{O}rantin recursion via the
	{L}aplace transform},
	BOOKTITLE = {Algebraic and geometric aspects of integrable systems and
	random matrices},
	SERIES = {Contemp. Math.},
	VOLUME = {593},
	PAGES = {263--315},
	PUBLISHER = {Amer. Math. Soc., Providence, RI},
	YEAR = {2013},
	MRCLASS = {14H15 (05C30 11P21 14N35 81T30)},
	MRNUMBER = {3087960},
	MRREVIEWER = {Alex Massarenti},
	DOI = {10.1090/conm/593/11867},
	URL = {https://doi.org/10.1090/conm/593/11867},
}

@incollection {DM-lectures,
	AUTHOR = {Dumitrescu, Olivia and Mulase, Motohico},
	TITLE = {Lectures on the topological recursion for {H}iggs bundles and
	quantum curves},
	BOOKTITLE = {The geometry, topology and physics of moduli spaces of {H}iggs
	bundles},
	SERIES = {Lect. Notes Ser. Inst. Math. Sci. Natl. Univ. Singap.},
	VOLUME = {36},
	PAGES = {103--198},
	PUBLISHER = {World Sci. Publ., Hackensack, NJ},
	YEAR = {2018},
	MRCLASS = {14H81 (14H15 14N35 33Cxx 34M60 53D37 81T45)},
	MRNUMBER = {3837870},
	MRREVIEWER = {\^{A}ngela Mestre},
}

@article {DLPS,
	AUTHOR = {Dunin-Barkowski, P. and Lewanski, D. and Popolitov, A. and
	Shadrin, S.},
	TITLE = {Polynomiality of orbifold {H}urwitz numbers, spectral curve,
	and a new proof of the {J}ohnson-{P}andharipande-{T}seng
	formula},
	JOURNAL = {J. Lond. Math. Soc. (2)},
	FJOURNAL = {Journal of the London Mathematical Society. Second Series},
	VOLUME = {92},
	YEAR = {2015},
	NUMBER = {3},
	PAGES = {547--565},
	ISSN = {0024-6107},
	MRCLASS = {14H10 (14H30 32G99 37K10)},
	MRNUMBER = {3431649},
	MRREVIEWER = {Andrei B. Bogatyr\"{e}v},
	DOI = {10.1112/jlms/jdv047},
	URL = {https://doi.org/10.1112/jlms/jdv047},
}

@article {DNOPS-1,
	AUTHOR = {Dunin-Barkowski, P. and Norbury, P. and Orantin, N. and
	Popolitov, A. and Shadrin, S.},
	TITLE = {Dub\-ro\-vin's superpotential as a global spectral curve},
	JOURNAL = {J. Inst. Math. Jussieu},
	FJOURNAL = {Journal of the Institute of Mathematics of Jussieu. JIMJ.
	Journal de l'Institut de Math\'{e}matiques de Jussieu},
	VOLUME = {18},
	YEAR = {2019},
	NUMBER = {3},
	PAGES = {449--497},
	ISSN = {1474-7480},
	MRCLASS = {53D45 (14H81 32G15 81T45)},
	MRNUMBER = {3936638},
	MRREVIEWER = {Felix Janda},
	DOI = {10.1017/s147474801700007x},
	URL = {https://doi.org/10.1017/s147474801700007x},
	eprint={1509.06954},
archivePrefix={arXiv},
primaryClass={math-ph}
}

@incollection {DNOPS-2,
	AUTHOR = {Dunin-Barkowski, P. and Norbury, P. and Orantin, N. and
	Popolitov, A. and Shadrin, S.},
	TITLE = {Primary invariants of {H}urwitz {F}robenius manifolds},
	BOOKTITLE = {Topological recursion and its influence in analysis, geometry,
	and topology},
	SERIES = {Proc. Sympos. Pure Math.},
	VOLUME = {100},
	PAGES = {297--331},
	PUBLISHER = {Amer. Math. Soc., Providence, RI},
	YEAR = {2018},
	MRCLASS = {53D45 (32G15 81T45)},
	MRNUMBER = {3840137},
	MRREVIEWER = {Hsian-Hua Tseng},
	DOI = {10.1090/pspum/100/01768},
	URL = {https://doi.org/10.1090/pspum/100/01768},
	eprint={1605.07644 },
archivePrefix={arXiv},
primaryClass={math.AG}
}

@article {HOMFLY1,
	AUTHOR = {Dunin-Barkowski, Petr and Kazarian, Maxim and Popolitov,
	Aleksandr and Shadrin, Sergey and Sleptsov, Alexey},
	TITLE = {Topological recursion for the extended {O}oguri-{V}afa
	partition function of colored {HOMFLY}-{PT} polynomials of
	torus knots},
	JOURNAL = {Adv. Theor. Math. Phys.},
	FJOURNAL = {Advances in Theoretical and Mathematical Physics},
	VOLUME = {26},
	YEAR = {2022},
	NUMBER = {4},
	PAGES = {793--833},
	ISSN = {1095-0761},
	MRCLASS = {81T45 (14H81 81T30)},
	MRNUMBER = {4557611},
	DOI = {10.4310/atmp.2022.v26.n4.a1},
	URL = {https://doi.org/10.4310/atmp.2022.v26.n4.a1},
}

@article {DOPS,
	AUTHOR = {Dunin-Barkowski, Petr and Orantin, Nicolas and Popolitov,
	Aleksandr and Shadrin, Sergey},
	TITLE = {Combinatorics of loop equations for branched covers of sphere},
	JOURNAL = {Int. Math. Res. Not. IMRN},
	FJOURNAL = {International Mathematics Research Notices. IMRN},
	YEAR = {2018},
	NUMBER = {18},
	PAGES = {5638--5662},
	ISSN = {1073-7928},
	MRCLASS = {14H81 (14H57 57M12)},
	MRNUMBER = {3862116},
	MRREVIEWER = {Hsian-Hua Tseng},
	DOI = {10.1093/imrn/rnx047},
	URL = {https://doi.org/10.1093/imrn/rnx047},
}

@article {HOMFLY2,
	AUTHOR = {Dunin-Barkowski, Petr and Popolitov, Aleksandr and Shadrin,
	Sergey and Sleptsov, Alexey},
	TITLE = {Combinatorial structure of colored {HOMFLY}-{PT} polynomials
	for torus knots},
	JOURNAL = {Commun. Number Theory Phys.},
	FJOURNAL = {Communications in Number Theory and Physics},
	VOLUME = {13},
	YEAR = {2019},
	NUMBER = {4},
	PAGES = {763--826},
	ISSN = {1931-4523},
	MRCLASS = {81T45 (14H81 14J33 33C45 57K14)},
	MRNUMBER = {4043787},
	DOI = {10.4310/cntp.2019.v13.n4.a3},
	URL = {https://doi.org/10.4310/cntp.2019.v13.n4.a3},
}

@article {DKPS-qr-ELSV,
	AUTHOR = {Dunin-Barkowski, Petr and Kramer, Reinier and Popolitov,
	Alexandr and Shadrin, Sergey},
	TITLE = {Loop equations and a proof of {Z}vonkine's {$qr$}-{ELSV}
	formula},
	JOURNAL = {Ann. Sci. \'{E}c. Norm. Sup\'{e}r. (4)},
	FJOURNAL = {Annales Scientifiques de l'\'{E}cole Normale Sup\'{e}rieure. Quatri\`eme
	S\'{e}rie},
	VOLUME = {56},
	YEAR = {2023},
	NUMBER = {4},
	PAGES = {1199--1229},
	ISSN = {0012-9593},
	MRCLASS = {14H30 (05E14 14H10 14N10 20C30)},
	MRNUMBER = {4650153},
	DOI = {10.24033/asens.2553},
	URL = {https://doi.org/10.24033/asens.2553},
}

@article {DKPS-cut-and-join,
	AUTHOR = {Dunin-Barkowski, P. and Kramer, R. and Popolitov, A. and
	Shadrin, S.},
	TITLE = {Cut-and-join equation for monotone {H}urwitz numbers
	revisited},
	JOURNAL = {J. Geom. Phys.},
	FJOURNAL = {Journal of Geometry and Physics},
	VOLUME = {137},
	YEAR = {2019},
	PAGES = {1--6},
	ISSN = {0393-0440},
	MRCLASS = {14H81 (05E99 14H30)},
	MRNUMBER = {3892091},
	MRREVIEWER = {Aristides I. Kontogeorgis},
	DOI = {10.1016/j.geomphys.2018.11.010},
	URL = {https://doi.org/10.1016/j.geomphys.2018.11.010},
}

@article {DKOSS-ELSV,
	AUTHOR = {Dunin-Barkowski, P. and Kazarian, M. and Orantin, N. and
	Shadrin, S. and Spitz, L.},
	TITLE = {Polynomiality of {H}urwitz numbers, {B}ouchard-{M}ari\~{n}o
	conjecture, and a new proof of the {ELSV} formula},
	JOURNAL = {Adv. Math.},
	FJOURNAL = {Advances in Mathematics},
	VOLUME = {279},
	YEAR = {2015},
	PAGES = {67--103},
	ISSN = {0001-8708},
	MRCLASS = {14H30 (14H15 57M60)},
	MRNUMBER = {3345179},
	MRREVIEWER = {Milagros Izquierdo},
	DOI = {10.1016/j.aim.2015.03.016},
	URL = {https://doi.org/10.1016/j.aim.2015.03.016},
}

@article {DOSS,
	AUTHOR = {Dunin-Barkowski, P. and Orantin, N. and Shadrin, S. and Spitz,
	L.},
	TITLE = {Identification of the {G}ivental formula with the spectral
	curve topological recursion procedure},
	JOURNAL = {Comm. Math. Phys.},
	FJOURNAL = {Communications in Mathematical Physics},
	VOLUME = {328},
	YEAR = {2014},
	NUMBER = {2},
	PAGES = {669--700},
	ISSN = {0010-3616},
	MRCLASS = {81T45 (14N35 53D45)},
	MRNUMBER = {3199996},
	MRREVIEWER = {Wan Keng Cheong},
	DOI = {10.1007/s00220-014-1887-2},
	URL = {https://doi.org/10.1007/s00220-014-1887-2},
}

@inproceedings {Eynard-Lecture,
	AUTHOR = {Eynard, Bertrand},
	TITLE = {An overview of the topological recursion},
	BOOKTITLE = {Proceedings of the {I}nternational {C}ongress of
	{M}athematicians---{S}eoul 2014. {V}ol. {III}},
	PAGES = {1063--1085},
	PUBLISHER = {Kyung Moon Sa, Seoul},
	YEAR = {2014},
	MRCLASS = {14H60 (14H81 14N35)},
	MRNUMBER = {3729064},
	MRREVIEWER = {Dawei Chen},
}

@article {Eynard-intersections,
	AUTHOR = {Eynard, B.},
	TITLE = {Invariants of spectral curves and intersection theory of
	moduli spaces of complex curves},
	JOURNAL = {Commun. Number Theory Phys.},
	FJOURNAL = {Communications in Number Theory and Physics},
	VOLUME = {8},
	YEAR = {2014},
	NUMBER = {3},
	PAGES = {541--588},
	ISSN = {1931-4523},
	MRCLASS = {14H10 (11G05 14C17 32G15)},
	MRNUMBER = {3282995},
	MRREVIEWER = {Letterio Gatto},
	DOI = {10.4310/CNTP.2014.v8.n3.a4},
	URL = {https://doi.org/10.4310/CNTP.2014.v8.n3.a4},
}

@article {MulaseCaJ,
    AUTHOR = {Eynard, Bertrand and Mulase, Motohico and Safnuk, Bradley},
     TITLE = {The {L}aplace transform of the cut-and-join equation and the
              {B}ouchard-{M}ari\~{n}o conjecture on {H}urwitz numbers},
   JOURNAL = {Publ. Res. Inst. Math. Sci.},
  FJOURNAL = {Publications of the Research Institute for Mathematical
              Sciences},
    VOLUME = {47},
      YEAR = {2011},
    NUMBER = {2},
     PAGES = {629--670},
      ISSN = {0034-5318,1663-4926},
   MRCLASS = {14N10 (14H10 14H30 14N35 81T45)},
  MRNUMBER = {2849645},
MRREVIEWER = {Hsian-Hua\ Tseng},
       DOI = {10.2977/PRIMS/47},
       URL = {https://doi.org/10.2977/PRIMS/47},
}

@article {EO,
	AUTHOR = {Eynard, Bertrand and Orantin, Nicolas},
	TITLE = {Topological recursion in enumerative geometry and random
	matrices},
	JOURNAL = {J. Phys. A},
	FJOURNAL = {Journal of Physics. A. Mathematical and Theoretical},
	VOLUME = {42},
	YEAR = {2009},
	NUMBER = {29},
	PAGES = {293001, 117},
	ISSN = {1751-8113,1751-8121},
	MRCLASS = {14N10 (14H10 14N35 37K20)},
	MRNUMBER = {2519749},
	MRREVIEWER = {Hsian-Hua\ Tseng},
	DOI = {10.1088/1751-8113/42/29/293001},
	URL = {https://doi.org/10.1088/1751-8113/42/29/293001},
}

@article {EO-1st,
	AUTHOR = {Eynard, B. and Orantin, N.},
	TITLE = {Invariants of algebraic curves and topological expansion},
	JOURNAL = {Commun. Number Theory Phys.},
	FJOURNAL = {Communications in Number Theory and Physics},
	VOLUME = {1},
	YEAR = {2007},
	NUMBER = {2},
	PAGES = {347--452},
	ISSN = {1931-4523},
	MRCLASS = {14H15 (14N35 32A27 37K10 37K20 81T45)},
	MRNUMBER = {2346575},
	MRREVIEWER = {Vincent Bouchard},
	DOI = {10.4310/CNTP.2007.v1.n2.a4},
	URL = {https://doi.org/10.4310/CNTP.2007.v1.n2.a4},
}

@article {EO-3folds,
	AUTHOR = {Eynard, B. and Orantin, N.},
	TITLE = {Computation of open {G}romov-{W}itten invariants for toric
	{C}alabi-{Y}au 3-folds by topological recursion, a proof of
	the {BKMP} conjecture},
	JOURNAL = {Comm. Math. Phys.},
	FJOURNAL = {Communications in Mathematical Physics},
	VOLUME = {337},
	YEAR = {2015},
	NUMBER = {2},
	PAGES = {483--567},
	ISSN = {0010-3616,1432-0916},
	MRCLASS = {14N35 (14J33 53D45 81R05 81R10)},
	MRNUMBER = {3339157},
	MRREVIEWER = {Sergiy\ Koshkin},
	DOI = {10.1007/s00220-015-2361-5},
	URL = {https://doi.org/10.1007/s00220-015-2361-5},
}

@article {FLZ,
	AUTHOR = {Fang, Bohan and Liu, Chiu-Chu Melissa and Zong, Zhengyu},
	TITLE = {On the remodeling conjecture for toric {C}alabi-{Y}au
	3-orbifolds},
	JOURNAL = {J. Amer. Math. Soc.},
	FJOURNAL = {Journal of the American Mathematical Society},
	VOLUME = {33},
	YEAR = {2020},
	NUMBER = {1},
	PAGES = {135--222},
	ISSN = {0894-0347},
	MRCLASS = {14N35 (14J33)},
	MRNUMBER = {4066474},
	MRREVIEWER = {Xiaobin Li},
	DOI = {10.1090/jams/934},
	URL = {https://doi.org/10.1090/jams/934},
}

@misc{hock2022xy,
	archiveprefix = {arXiv},
	author = {Alexander Hock},
	eprint = {2201.05357},
	primaryclass = {math-ph},
	title = {On the $x$-$y$ Symmetry of Correlators in Topological Recursion via Loop Insertion Operator},
	year = {2022}}

@misc{hock2022simple,
	archiveprefix = {arXiv},
	author = {Alexander Hock},
	eprint = {2211.08917},
	primaryclass = {math-ph},
	title = {A simple formula for the $x$-$y$ symplectic transformation in topological recursion},
	year = {2022}}

@misc{hock2023laplace,
	archiveprefix = {arXiv},
	author = {Alexander Hock},
	eprint = {2304.03032},
	primaryclass = {math-ph},
	title = {Laplace transform of the $x-y$ symplectic transformation formula in Topological Recursion},
	year = {2023}}

@misc{hock2023xy,
	title={$x-y$ duality in Topological Recursion for exponential variables via Quantum Dilogarithm}, 
	author={Alexander Hock},
	year={2023},
	eprint={2311.11761},
	archivePrefix={arXiv},
	primaryClass={math-ph}
}

@article{KN,
    author = {Kazarian, Maxim and Norbury, Paul},
    title = "{Polynomial Relations Among Kappa Classes on the Moduli Space of Curves}",
    journal = {International Mathematics Research Notices},
    pages = {rnad061},
    year = {2023},
    month = {03},
    abstract = "{We construct an infinite collection of universal—independent of \\$(g,n)\\$—polynomials in the Miller–Morita–Mumford classes \\$\\kappa \_m\\in H^\\{2m\\}( \\overline\\{\\mathcal\\{M\\}\\}\_\\{g,n\\},\\{\\mathbb\\{Q\\}\\})\\$, defined over the moduli space of genus \\$g\\$ stable curves with \\$n\\$ labeled points. We conjecture vanishing of these polynomials in a range depending on \\$g\\$ and \\$n\\$.}",
    issn = {1073-7928},
    doi = {10.1093/imrn/rnad061},
    url = {https://doi.org/10.1093/imrn/rnad061},
}

@article {LPSZ,
	AUTHOR = {Lewanski, Danilo and Popolitov, Alexandr and Shadrin, Sergey
	and Zvonkine, Dimitri},
	TITLE = {Chiodo formulas for the {$r$}-th roots and topological
	recursion},
	JOURNAL = {Lett. Math. Phys.},
	FJOURNAL = {Letters in Mathematical Physics},
	VOLUME = {107},
	YEAR = {2017},
	NUMBER = {5},
	PAGES = {901--919},
	ISSN = {0377-9017},
	MRCLASS = {14H10 (14N10 14N35)},
	MRNUMBER = {3633029},
	MRREVIEWER = {Fabio Perroni},
	DOI = {10.1007/s11005-016-0928-5},
	URL = {https://doi.org/10.1007/s11005-016-0928-5},
}

@article {NorbTheta,
    AUTHOR = {Norbury, Paul},
     TITLE = {A new cohomology class on the moduli space of curves},
   JOURNAL = {Geom. Topol.},
  FJOURNAL = {Geometry \& Topology},
    VOLUME = {27},
      YEAR = {2023},
    NUMBER = {7},
     PAGES = {2695--2761},
      ISSN = {1465-3060,1364-0380},
   MRCLASS = {14D23 (32G15 53D45)},
  MRNUMBER = {4645485},
       DOI = {10.2140/gt.2023.27.2695},
       URL = {https://doi.org/10.2140/gt.2023.27.2695},
}

@article {NSP1,
    AUTHOR = {Norbury, Paul and Scott, Nick},
     TITLE = {Gromov-{W}itten invariants of {$\mathbb{P}^1$} and
              {E}ynard-{O}rantin invariants},
   JOURNAL = {Geom. Topol.},
  FJOURNAL = {Geometry \& Topology},
    VOLUME = {18},
      YEAR = {2014},
    NUMBER = {4},
     PAGES = {1865--1910},
      ISSN = {1465-3060,1364-0380},
   MRCLASS = {14N35},
  MRNUMBER = {3268770},
MRREVIEWER = {Hsian-Hua\ Tseng},
       DOI = {10.2140/gt.2014.18.1865},
       URL = {https://doi.org/10.2140/gt.2014.18.1865},
}

@article {SSZ,
    AUTHOR = {Shadrin, S. and Spitz, L. and Zvonkine, D.},
     TITLE = {Equivalence of {ELSV} and {B}ouchard-{M}ari\~{n}o conjectures
              for {$r$}-spin {H}urwitz numbers},
   JOURNAL = {Math. Ann.},
  FJOURNAL = {Mathematische Annalen},
    VOLUME = {361},
      YEAR = {2015},
    NUMBER = {3-4},
     PAGES = {611--645},
      ISSN = {0025-5831,1432-1807},
   MRCLASS = {14H10 (53D45)},
  MRNUMBER = {3319543},
MRREVIEWER = {Felix\ Janda},
       DOI = {10.1007/s00208-014-1082-y},
       URL = {https://doi.org/10.1007/s00208-014-1082-y},
}

@online{ZhouTV,
Author = {Jian Zhou},
Title = {Local Mirror Symmetry for One-Legged Topological Vertex},
Year = {2009},
Eprint = {0910.4320},
Eprinttype = {arXiv},
primaryClass={math.AG},
}

@article {ZhouKW,
    AUTHOR = {Zhou, Jian},
     TITLE = {Topological recursions of {E}ynard-{O}rantin type for
              intersection numbers on moduli spaces of curves},
   JOURNAL = {Lett. Math. Phys.},
  FJOURNAL = {Letters in Mathematical Physics},
    VOLUME = {103},
      YEAR = {2013},
    NUMBER = {11},
     PAGES = {1191--1206},
      ISSN = {0377-9017,1573-0530},
   MRCLASS = {14H10 (14N35 53D45)},
  MRNUMBER = {3095142},
MRREVIEWER = {Shengmao\ Zhu},
       DOI = {10.1007/s11005-013-0632-7},
       URL = {https://doi.org/10.1007/s11005-013-0632-7},
}
\end{document}